\algrenewcommand\algorithmicrequire{\textbf{Input:}}
\algrenewcommand\algorithmicensure{\textbf{Output:}}
\algrenewcommand\algorithmicindent{2em}
\NewDocumentCommand{\eformat}{m}{%
    \StrBefore{#1}{e}[\mantissa]%
    \StrBehind{#1}{e}[\exponent]%
    \mantissa \times 10^{\exponent}%
}
\newcommand{\statefig}[1]{%
  \begin{minipage}[c]{0.1\textwidth}\centering
    \includegraphics[scale=0.18]{#1}
  \end{minipage}
}
\newcommand{\genfig}[1]{%
  \begin{minipage}[c]{0.17\textwidth}\centering
    \includegraphics[scale=1]{#1}
  \end{minipage}
}
\newcommand{%
  \begin{minipage}[c]{0.17\textwidth}\centering
    \input{}
  \end{minipage}
}[1]{%
  \begin{minipage}[c]{0.17\textwidth}\centering
    \input{#1}
  \end{minipage}
}
  \def\ket#1{|#1\rangle}%
  \def\bra#1{\langle #1|}%
  \def\ketbra#1#2{|#1\rangle\!\langle #2|}%
  \def\braket#1#2{\langle #1|#2\rangle}%
  \def\mathrm#1{#1}%
  \def\mathcal#1{#1}%
  \def\bm#1{#1}%
  \def\hat#1{#1}%
  \def\Tr{\operatorname{Tr}}
  \def\chi{\chi}
  \def\outerproduct#1#2{|#1\rangle\!\langle #2|}%
  \def\to{\textrightarrow}%
  \def\otimes{\texttimes}
  \def\alpha{α}\def\beta{β}\def\gamma{γ}\def\delta{δ}%
  \def\epsilon{ε}\def\zeta{ζ}\def\eta{η}\def\theta{θ}%
  \def\kappa{κ}\def\lambda{λ}\def\mu{μ}\def\nu{ν}%
  \def\xi{ξ}\def\pi{π}\def\rho{ρ}\def\sigma{σ}\def\tau{τ}%
  \def\phi{φ}\def\chi{χ}\def\psi{ψ}\def\omega{ω}%
  \def\Delta{Δ}\def\Theta{Θ}\def\Lambda{Λ}\def\Pi{Π}%
  \def\Sigma{Σ}\def\Phi{Φ}\def\Psi{Ψ}\def\Omega{Ω}%
  \def\_0{₀}\def\_1{₁}\def\_2{₂}\def\_3{₃}\def\_4{₄}%
  \def\_5{₅}\def\_6{₆}\def\_7{₇}\def\_8{₈}\def\_9{₉}%
\theoremstyle{definition}
\declaretheorem{theorem}
\newtheorem{apptheorem}{Theorem}[section]
\newtheorem{appcor}{Corollary}[section]
\newtheorem{proposition}{Proposition}
\newtheorem{cor}{Corollary}
\newcommand{\appropto}{\mathrel{\vcenter{
  \offinterlineskip\halign{\hfil$##$\cr
    \propto\cr\noalign{\kern2pt}\sim\cr\noalign{\kern-2pt}}}}}
\newcommand{\gaussprop}{\underset{\mathrm{G}}{\propto}}
\newcommand{\gausssim}{\underset{\mathrm{G}}{\sim}}
\newcommand{\gausssimprop}{\underset{\mathrm{G}}{\appropto}}
\newcommand{\signal}{\mathrm{s}}
\newcommand{\idler}{\mathrm{c}}
\newcommand{\choijam}{Choi--Jamio\l{}kowski}
\begin{document}
\title{Beyond Stellar Rank: Control Parameters for Scalable Optical Non-Gaussian State Generation}
\author{Fumiya Hanamura}
\affiliation{Department of Applied Physics, School of Engineering, The University of Tokyo, 7-3-1 Hongo, Bunkyo-ku, Tokyo 113-8656, Japan}
\author{Kan Takase}
\affiliation{Department of Applied Physics, School of Engineering, The University of Tokyo, 7-3-1 Hongo, Bunkyo-ku, Tokyo 113-8656, Japan}
\affiliation{Optical Quantum Computing Research Team, RIKEN Center for Quantum Computing, 2-1 Hirosawa, Wako, Saitama 351-0198, Japan}
\affiliation{OptQC Corporation, 1-21-7 Nishi-Ikebukuro, Toshima, Tokyo, Japan}
\author{Hironari Nagayoshi}
\affiliation{Department of Applied Physics, School of Engineering, The University of Tokyo, 7-3-1 Hongo, Bunkyo-ku, Tokyo 113-8656, Japan}
\author{Ryuhoh Ide}
\affiliation{Department of Applied Physics, School of Engineering, The University of Tokyo, 7-3-1 Hongo, Bunkyo-ku, Tokyo 113-8656, Japan}
\author{Warit Asavanant}
\affiliation{Department of Applied Physics, School of Engineering, The University of Tokyo, 7-3-1 Hongo, Bunkyo-ku, Tokyo 113-8656, Japan}
\affiliation{Optical Quantum Computing Research Team, RIKEN Center for Quantum Computing, 2-1 Hirosawa, Wako, Saitama 351-0198, Japan}
\affiliation{OptQC Corporation, 1-21-7 Nishi-Ikebukuro, Toshima, Tokyo, Japan}
\author{Kosuke Fukui}
\affiliation{Department of Applied Physics, School of Engineering, The University of Tokyo, 7-3-1 Hongo, Bunkyo-ku, Tokyo 113-8656, Japan}
\author{Petr Marek}
\author{Radim Filip}
\affiliation{Department of Optics, Palacky University, 17. listopadu 1192/12, Olomouc, 77146, Czech Republic.}
\author{Akira Furusawa}
\affiliation{Department of Applied Physics, School of Engineering, The University of Tokyo, 7-3-1 Hongo, Bunkyo-ku, Tokyo 113-8656, Japan}
\affiliation{Optical Quantum Computing Research Team, RIKEN Center for Quantum Computing, 2-1 Hirosawa, Wako, Saitama 351-0198, Japan}
\affiliation{OptQC Corporation, 1-21-7 Nishi-Ikebukuro, Toshima, Tokyo, Japan}
\date{\today}
\begin{abstract}
Advanced quantum technologies rely on non-Gaussian states of light, essential for universal quantum computation, fault-tolerant error correction, and quantum sensing. Their practical realization, however, faces hurdles: simulating large multi-mode generators is computationally demanding, and benchmarks such as the \emph{stellar rank} do not capture how effectively photon detections yield useful non-Gaussianity. We address these challenges by introducing the \emph{non-Gaussian control parameters} $(s_0,\delta_0)$, a continuous and operational measure that goes beyond stellar rank. Leveraging these parameters, we develop a universal optimization method that reduces photon-number requirements and greatly enhances success probabilities while preserving state quality. Applied to the Gottesman--Kitaev--Preskill (GKP) state generation, for example, our method cuts the required photon detections by a factor of three and raises the preparation probability by nearly $10^8$. Demonstrations across cat states, cubic phase states, GKP states, and even random states confirm broad gains in experimental feasibility. Our results provide a unifying principle for resource-efficient non-Gaussian state generation, charting a practical route toward scalable optical quantum technologies and fault-tolerant quantum computation.

\end{abstract}

\maketitle
\section{Introduction}
Optical quantum information processing holds strong promise for scalable quantum computing, owing to its intrinsic compatibility with traveling-wave architectures and potential for large-scale integration. Considerable progress has been made in the preparation and manipulation of Gaussian states~\cite{warit_cluster,mikkel_cluster,cluster_operation,mikkel_cluster_operation}, which are relatively easy to generate and control. Yet non-Gaussian states are indispensable for unlocking the full power of optical platforms, enabling universal quantum computation~\cite{gottesman_knill,nick_universal,cv_qc,wigneg_qc}, quantum error correction~\cite{qec_nogo,gkp,cat_code,binomial_code}, and advanced quantum sensing~\cite{single_mode_disp,metrology_nls,hanamura_gaussian_disp}.  

A common approach to generating non-Gaussianity is to prepare multimode Gaussian states and conditionally project selected modes using photon-number-resolving measurements~\cite{gbs_nongauss,gaussian_breeding,xanadu_architecture}. The recent optical realization of Gottesman--Kitaev--Preskill (GKP) states~\cite{xanadu_gkp} represents a major milestone, yet scaling toward fault tolerance demands larger photon numbers, higher success probabilities, and more complex architectures. Meeting these requirements makes optimization essential, but the search space expands exponentially with system size. In fact, its underlying structure is equivalent to Gaussian boson sampling, a problem known to be classically intractable~\cite{gbs,pan_gbs,chabaudResourcesBosonic}, highlighting the fundamental difficulty of systematic optimization. The central challenge, therefore, is to identify and optimize the true resources that drive non-Gaussianity.

To benchmark such resources, the \emph{stellar rank}~\cite{stellar_rank,stellar_rank_original,core_state,stellar_rank_heterodyne,radim_atom_ng} is often employed. Determined by the total detected photon number, it provides a simple measure directly tied to experimental resources. However, stellar rank does not provide the complete picture: it offers only an upper bound and fails to capture how effectively the non-Gaussianity of the photon-number measurement is harnessed into the generated state. As a result, states with the same rank can exhibit significantly different levels of operational non-Gaussianity, such as non-Gaussian squeezing~\cite{cps_nls,cat_nls,gkp_squeezing}.

To overcome this limitation, we introduce the \emph{non-Gaussian control parameters}, which quantify how efficiently photon-number measurements harness non-Gaussianity. They offer a continuous and operational description of non-Gaussian resources, forming the basis for systematic optimization and scalable state generation beyond stellar rank.

Building on this framework, we develop an optimization algorithm that systematically improves state generators by reducing photon detections and boosting success probabilities, while preserving high fidelity and essential non-Gaussian features such as squeezing. For GKP states, the method lowers the photon-number requirement by a factor of three and enhances the success probability by nearly $10^8$. Comparable gains are obtained for Schrödinger cat and cubic phase states, with photon-number requirements again reduced threefold and success rates improved by up to seven orders of magnitude. We further demonstrate applicability to random states, consistently finding substantial gains in experimental feasibility. Together, these results establish a new design principle for scaling state-of-the-art architectures~\cite{xanadu_gkp,mamoru_four_photon,konno_gkp} toward fault-tolerant quantum computation.

The structure of the paper is as follows. In Sec.~\ref{sec:overview}, we provide a preliminary definition of the non-Gaussian control parameters, giving intuition for their physical meaning. A rigorous derivation is presented in Sec.~\ref{sec:two-mode}, where we focus on two-mode non-Gaussian state generators (Fig.~\ref{fig:gbs}(b)) and introduce the \emph{control-mode representation}, a description of the generator via the first and second moments of the control mode. The non-Gaussian control parameters then naturally emerge from this framework. In Sec.~\ref{sec:cparam_conversion}, we investigate the resource-theoretic aspect of these parameters, showing that they quantify how efficiently the non-Gaussianity characterized by stellar rank is utilized by the state generator. In Sec.~\ref{sec:multi-mode}, we extend the formalism to general multi-mode generators, and in Sec.~\ref{sec:optimization}, we apply these results to formulate the optimization algorithm. Although our primary focus is on pure non-Gaussian states, the framework also extends to mixed states with Gaussian errors such as loss, as discussed in Appendix~\ref{sec:mixed_state}.

\section{Overview of non-Gaussian control parameters and two-mode examples}\label{sec:overview}
\begin{figure*}[tbp]
    \centering
    \input{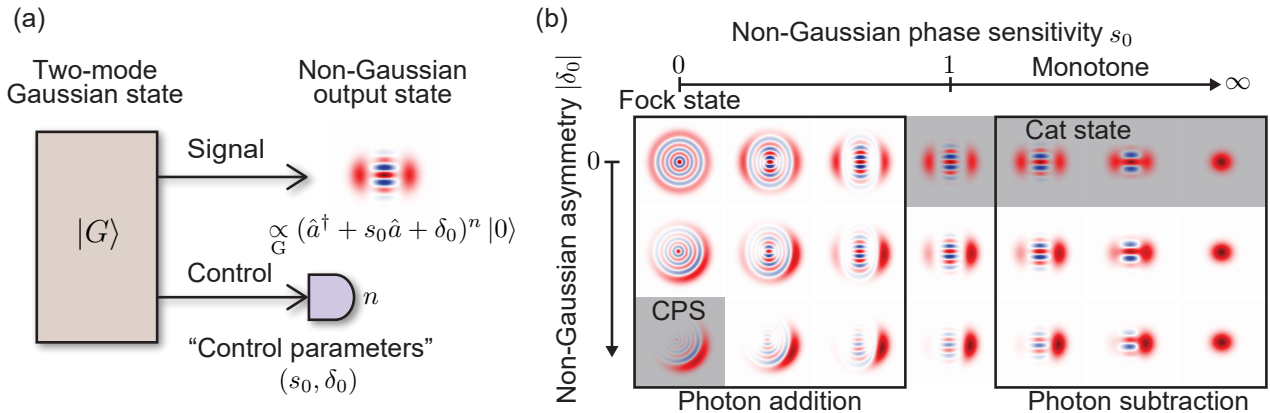}
    \caption{Non-Gaussian state generation via photon-number measurements and non-Gaussian control parameters. 
(a) Schematic of a two-mode non-Gaussian state generator, consisting of a two-mode Gaussian state $\ket{G}$ followed by a photon-number measurement. 
A measurement on the control mode heralds a non-Gaussian state in the signal mode, which can be characterized by the non-Gaussian control parameters $s_0$ and $\delta_0$, determined solely by the properties of the control mode. 
(b) Wigner functions of states generated by the two-mode state generator, organized by the values of the non-Gaussian phase sensitivity $s_0$ and the non-Gaussian asymmetry $\delta_0$. 
Although the output states are defined only up to Gaussian unitaries, we show the core state \cite{core_state}, corresponding to the ``particle form'' of Eq.~\eqref{eq:particle_form_intro}. 
The columns correspond to $s_0 = 0,\ 0.2,\ 0.5,\ 1,\ 1.5,\ 3,\ 100$, while the rows correspond to $\delta_0 = 0,\ 0.2 + 0.2i,\ 0.5 + 0.5i$, with the measured photon number fixed at $n = 6$. 
This simple setup already generates a diverse class of non-Gaussian states: $s_0 > 1$ corresponds to photon-subtracted states, while $0 \leq s_0 < 1$ corresponds to photon-added states (Sec.~\ref{sec:classification}). 
Notably, the case $\delta_0 = 0$, $s_0 > 1$ yields cat states, whereas $s_0 = 0$ and $|\delta_0| \sim 1$ yields cubic phase states (CPS). 
For $s_0 \geq 1$, the phase sensitivity parameter $s_0$ is non-decreasing under Gaussian maps (Sec.~\ref{sec:cparam_conversion}).}

    \label{fig:gps_landscape}
\end{figure*}
\begin{figure*}[tbp]
    \centering
    \input{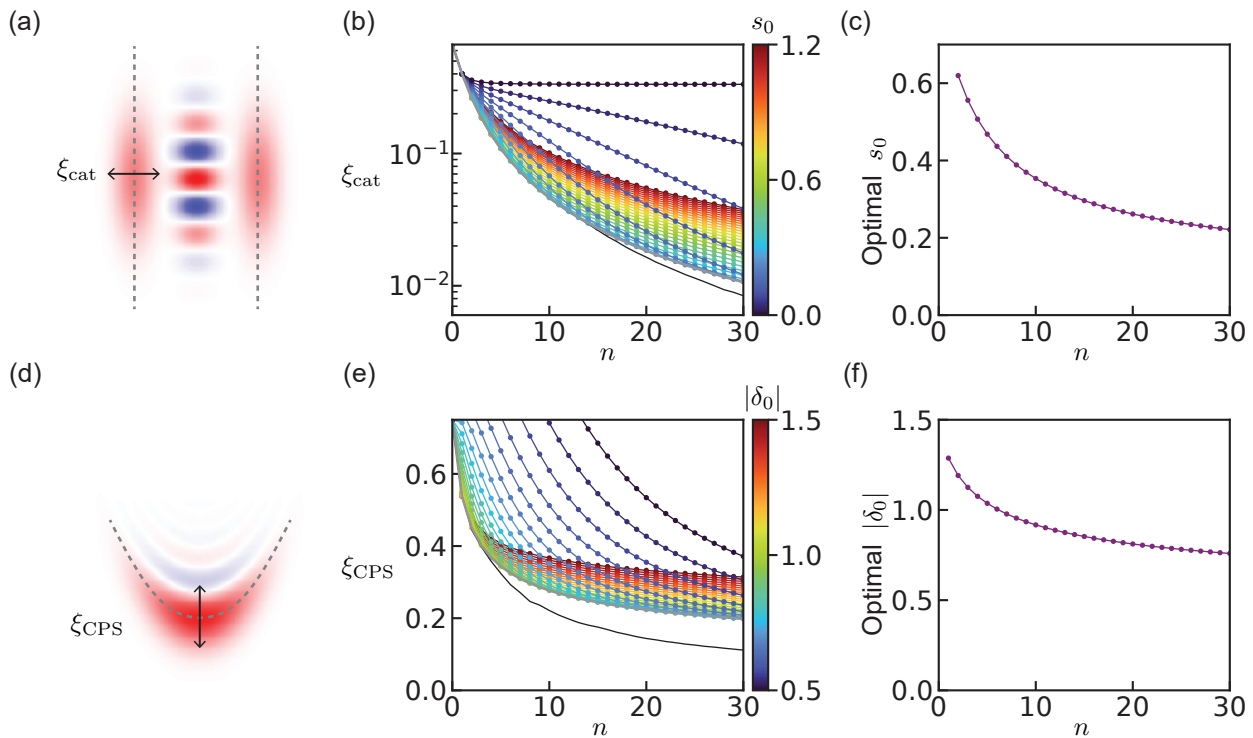}
    \caption{Characterization of cat states and CPS generated by two-mode non-Gaussian state generators. 
(a–c) Cat states ($s_0 > 0, \delta_0 = 0$). 
(a) Illustration of $x^2$-squeezing \cite{cat_nls}, $\xi_{\mathrm{cat}}$ (Eq.~\eqref{eq:cat_sqz_def}), used as a metric for cat states. This quantifies the degree of localization into two symmetric peaks in phase space. 
(b) $\xi_{\mathrm{cat}}$ of the state $\ket{\psi}^{(\mathrm{p})}_{s_0,0,n}$ as a function of measured photon number $n$, for different values of the non-Gaussian phase sensitivity $s_0$. Colored lines: different $s_0$; gray: optimized over $s_0$ for each $n$; black: theoretical lower bound imposed by $n$ \cite{cat_nls}. 
(c) Optimal $s_0$ yielding the minimum $\xi{\mathrm{cat}}$ for each $n$. Values for $n=0,1$ are not shown, as the state is independent of $s_0$ up to Gaussian unitaries.
(d–f) CPS ($s_0 = 0, |\delta_0| > 0$). 
(d) Illustration of cubic nonlinear squeezing \cite{cps_nls}, $\xi_{\mathrm{CPS}}$ (Eq.~\eqref{eq:cps_sqz_def}), used as a metric for CPS. This quantifies the degree of localization along a parabolic curve in phase space. 
(e) $\xi_{\mathrm{CPS}}$ of $\ket{\psi}^{(\mathrm{p})}_{0,\delta_0,n}$ as a function of $n$, for different values of the non-Gaussian asymmetry $\delta_0$. Colored lines: different $|\delta_0|$; gray: optimized over $|\delta_0|$; black: stellar-rank bound. 
(f) Optimal $|\delta_0|$ yielding the minimum $\xi{\mathrm{CPS}}$ for each $n$. The case $n=0$ is not shown, as the state is independent of $\delta_0$ up to Gaussian unitaries. }
\label{fig:gps_evaluate}
\end{figure*}
Non-Gaussian state generators can produce a wide variety of quantum states, whose properties may appear disparate when described in terms of Fock-state expansions or Wigner functions. A central unifying idea of this work is that these states can instead be characterized by a small set of continuous \emph{non-Gaussian control parameters}, denoted $(s_0,\delta_0)$. These parameters complement the discrete stellar-rank measure---defined as the maximum photon number in a finite Fock expansion followed by a Gaussian unitary, typically equal to the total number of detected photons~\cite{gbs_nongauss}. The non-Gaussian control parameters govern the symmetry and structure of the output states in phase space, providing a common language to compare distinct non-Gaussian resources.

To make their role concrete, we first describe the simplest yet general setting in which the control parameters naturally arise: a two-mode non-Gaussian state generator, shown in Fig.~\ref{fig:gps_landscape}(a). This serves as the fundamental building block for analyzing and optimizing the more general multi-mode non-Gaussian state generators introduced later. An entangled pure two-mode Gaussian state $\ket{G}$ is prepared. One mode, the \emph{control mode}, is measured in the photon-number basis, and conditional on the outcome $n$, a non-Gaussian state is probabilistically generated in the other mode, the \emph{signal mode}, through their entanglement. This setting already encompasses a wide range of known schemes, including photon subtraction~\cite{ps_cat_multi,ps_cat_multi_endo}, generalized photon subtraction~\cite{GPS,gps_exp}, photon addition~\cite{addition_exp,addition_theory}, coherent superpositions of subtraction and addition~\cite{sp_add_subtract,nonlinear_potential}, heralded Fock state generation from two-mode squeezed states~\cite{fock_exp,tatsuki_fock}, and CPS generation using displaced two-mode squeezed states~\cite{gkp,cps_konno}. 

In this setting, the non-Gaussian control parameters are defined as quantities that capture the structure of the output state and depend solely on the control mode, as formalized in the following proposition (restated and proven as Theorem~\ref{thm:particle_form} in Sec.~\ref{sec:c0_gps}).
\begin{proposition}\label{prop:particle_form}
The output state of any two-mode non-Gaussian state generator can be expressed in the following \emph{particle form}
\begin{align}
    \ket{\psi}_{s_0,\delta_0,n}
    &\gaussprop (\hat{a}^\dagger + s_0 \hat{a} + \delta_0)^n \ket{0}, \label{eq:particle_form_intro}
\end{align}
where $\hat a$ ($\hat a^\dagger$) denotes the annihilation (creation) operator satisfying $[\hat a,\hat a^\dagger]=1$, $\ket{0}$ is the corresponding vacuum state, and $\gaussprop$ denotes proportionality up to a Gaussian unitary operation. The measured photon number $n$ sets the \emph{stellar rank} \cite{stellar_rank,stellar_rank_original,core_state,stellar_rank_heterodyne,radim_atom_ng} of the output state. We introduce the parameters $s_0 \geq 0$ and $\delta_0 \in \mathbb{C}$, determined solely by the properties of the control mode, called \emph{non-Gaussian phase sensitivity} and \emph{non-Gaussian asymmetry}, respectively. Together, they are referred to as the \emph{non-Gaussian control parameters}.
\end{proposition}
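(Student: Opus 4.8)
The plan is to reduce an arbitrary pure two-mode Gaussian state to a standard two-mode-squeezed form, convert the photon-number projection on the control mode into a purely signal-mode operation, and then discard every signal-side Gaussian unitary using the equivalence $\gaussprop$. Concretely, I would first write the (unnormalized) heralded signal state as $\ket{\psi}_s \propto \bra{n}_c\ket{G}$ and invoke the standard fact that every pure two-mode Gaussian state factorizes, up to local Gaussian unitaries, as $\ket{G}=(\hat{G}_c\otimes\hat{G}_s)\,\hat{S}_2(r)\ket{0,0}$, where $\hat{S}_2(r)$ is the two-mode squeezer preparing the two-mode squeezed vacuum $\ket{T}=\hat{S}_2(r)\ket{0,0}$ and $\hat{G}_c,\hat{G}_s$ are single-mode Gaussian unitaries (including displacement). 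Since the output is defined only up to a signal Gaussian unitary, the factor $\hat{G}_s$ is already ``free'' and can be collected at the end.

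The engine of the proof is a \emph{control-to-signal conversion} identity: for a suitable signal-mode affine operator $\hat{L}=A\hat{a}^\dagger+B\hat{a}+C$ one has
\begin{equation}
    \hat{a}_c^{\,n}\ket{G}=\hat{L}^{\,n}\ket{G}. \nonumber
\end{equation}
This rests on the TMSV relations $\hat{a}_c\ket{T}=\tanh r\,\hat{a}_s^\dagger\ket{T}$ and $\hat{a}_c^\dagger\ket{T}=(\tanh r)^{-1}\hat{a}_s\ket{T}$, which turn $\hat{a}_c$ into a signal-mode affine operator when acting on $\ket{T}$. Conjugation by $\hat{G}_c\otimes\hat{G}_s$ promotes the statement from $\ket{T}$ to $\ket{G}$, giving the single-shot identity $\hat{a}_c\ket{G}=\hat{L}\ket{G}$; because signal and control operators commute, this bootstraps to its $n$-th power by induction.

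Expanding $\bra{n}_c=(n!)^{-1/2}\bra{0}_c\hat{a}_c^{\,n}$ and pulling the signal operator $\hat{L}$ through the control projection then yields $\ket{\psi}_s\propto\hat{L}^{\,n}\,\braket{0}{G}_c$, where $\braket{0}{G}_c$ is the $n=0$ heralded state, a pure Gaussian state $\ket{g}_s$. Writing $\ket{g}=\hat{G}_s\ket{g_0}$ and $\hat{L}=\hat{G}_s\hat{L}_0\hat{G}_s^\dagger$, the signal unitary $\hat{G}_s$ cancels, leaving $\ket{\psi}_s\gaussprop\hat{L}_0^{\,n}\ket{g_0}$ with $\hat{L}_0$ and $\ket{g_0}$ depending only on $\hat{G}_c$ and $r$. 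Finally I would absorb the Gaussian unitary preparing $\ket{g_0}$, conjugate $\hat{L}_0$ to monic form, and apply a signal phase rotation to render the $\hat{a}$-coefficient real and nonnegative, arriving at $\ket{\psi}_s\gaussprop(\hat{a}^\dagger+s_0\hat{a}+\delta_0)^n\ket{0}$ with $s_0\ge 0$ and $\delta_0\in\mathbb{C}$ fixed by $\hat{G}_c$ and $r$, hence by the first and second moments of the control mode.

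The main obstacle is establishing the conversion identity itself, i.e.\ showing that a \emph{signal-only} affine $\hat{L}$ satisfying $\hat{a}_c\ket{G}=\hat{L}\ket{G}$ exists: this is precisely where the standard-form reduction and the special correlations of $\ket{T}$ are indispensable, since the analogous statement fails for a generic two-mode Gaussian unitary that does not act on the vacuum. The remaining work is careful bookkeeping---verifying that each signal-side Gaussian operation (the factor $\hat{G}_s$, the preparation of $\ket{g_0}$, and the final phase rotation) may legitimately be discarded under $\gaussprop$ while the surviving parameters stay control-mode quantities---together with the degenerate cases $r=0$ (a product state, for which only $n=0$ heralds a nonzero state) and a vanishing leading coefficient $A$, which is exactly the $s_0\to\infty$ pure photon-subtraction limit depicted in Fig.~\ref{fig:gps_landscape}.
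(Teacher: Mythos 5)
Your proposal is correct and reaches the particle form by a route that parallels the paper's proof (Appendix~\ref{sec:proof_fock_wave_form}) but differs genuinely in execution. The paper also starts from the canonical form $\ket{G}=(\hat{U}_\signal\otimes\hat{U}_\idler)\ket{\mathrm{TMSS}(a)}$, but then uses the \choijam{} isomorphism to write the heralded state as $\hat{K}\ket{n}$ with $\hat{K}=\hat{U}_p\hat{\Gamma}(\lambda)\hat{S}(r)\hat{D}(\bm{\beta})$ a signal-mode Gaussian filter, engineers the free unitary $\hat{U}_p$ so that $\hat{K}\hat{a}\hat{K}^{-1}\propto\hat{a}$ (hence $\hat{K}\ket{0}\propto\ket{0}$), and finally evaluates $(\hat{K}\hat{a}^\dagger\hat{K}^{-1})^n\ket{0}$ by explicit Bogoliubov computation; that computation is what yields the closed-form expressions $s_0=(c-d)/(cd-1)$ and Eq.~\eqref{eq:reduced_cparam_beta0}, which the paper needs downstream. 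You instead push operators across the two-mode squeezed state: the relations $\hat{a}_\idler\ket{T}=\tanh r\,\hat{a}_\signal^\dagger\ket{T}$ and $\hat{a}_\idler^\dagger\ket{T}=(\tanh r)^{-1}\hat{a}_\signal\ket{T}$ give $\hat{a}_\idler\ket{G}=\hat{L}\ket{G}$ with $\hat{L}$ signal-affine, and $\bra{n}_\idler\propto\bra{0}_\idler\hat{a}_\idler^{\,n}$ turns the herald into $\hat{L}^n$ acting on the vacuum-heralded Gaussian state. This avoids inverting the non-unitary damping and avoids engineering a vacuum-preserving filter---the trivial commutation $[\hat{a}_\idler,\hat{L}]=0$ does that work---at the cost of not producing the explicit parameter formulas. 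The underlying mathematics is the same (your transfer identity is exactly the mechanism of the \choijam{} representation of the TMSS), but the proof architecture is different and somewhat more economical.

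Two points need tightening. First, your final normalization requires the $\hat{a}^\dagger$-coefficient of the \emph{conjugated} operator $\hat{V}^\dagger\hat{L}_0\hat{V}$ to be nonzero, where $\hat{V}$ is the Gaussian unitary preparing $\ket{g_0}$ from vacuum; this is not settled by the nonvanishing of the leading coefficient of $\hat{L}_0$ itself, since writing $\hat{V}^\dagger\hat{a}\hat{V}=\mu'\hat{a}+\nu'\hat{a}^\dagger+\gamma'$ the new coefficient is $A'=A\mu'^*+B\nu'$, which could in principle vanish whenever $|A|<|B|$. It does not vanish, and a clean way to close this is via stellar rank: the Bargmann function of the heralded state is a Gaussian times a polynomial whose leading term is $(b z)^n$ with $b\neq 0$ precisely when $\ket{G}$ is entangled, so the output has stellar rank exactly $n$, while $A'=0$ would force the output $\propto\ket{0}$ (the coefficient of $\ket{n}$ in $(A'\hat{a}^\dagger+B'\hat{a}+C')^n\ket{0}$ is $A'^n\sqrt{n!}$). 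The paper gets this for free, since its explicit coefficient is $\eta^{-1/2}$ with $\eta=(c+1)(d+1)/(cd-1)$, finite and nonzero whenever $cd>1$. Second, your remark on the degenerate case $r=0$ is inaccurate: if $\hat{G}_\idler$ contains squeezing, outcomes $n\geq 2$ (even) are heralded with nonzero amplitude, and the resulting Gaussian output indeed cannot be written in particle form with that $n$; this case is excluded not because the amplitudes vanish but because the paper's setting assumes $\ket{G}$ entangled, which is also exactly what guarantees $\tanh r>0$ in your conversion identity and $A'\neq 0$ above.
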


By tuning these two parameters, even at a fixed stellar rank $n$, a surprisingly diverse class of non-Gaussian states can be engineered within this simple setup, as shown in Fig.~\ref{fig:gps_landscape}(b). The case $(s_0,\delta_0)=(0,0)$ yields a Fock state. Increasing $s_0$ gradually transforms it into a phase-sensitive state, while a nonzero $\delta_0$ biases the state, breaking the mirror symmetry in phase space. Notably, the regimes $s_0 > 1$ and $0 \leq s_0 < 1$ correspond to photon-subtracted and photon-added states, respectively (Sec.~\ref{sec:classification}). In addition, two important classes, approximations of Schrödinger cat states and cubic phase states (CPS), lie within this spectrum, providing intuition for the roles of the two parameters, as described below.
\subsection{Two distinctive examples: cat and cubic phase states}
The two control parameters each carry a distinct physical meaning. The parameter $s_0$ controls the balance between creation and annihilation operators, capturing phase-sensitive interference effects between separated components in phase space, as exemplified by cat states. By contrast, the parameter $\delta_0$ introduces an asymmetry that induces nonlinear structure in phase space, naturally leading to the cubic phase state (CPS) as an illustrative case. Examining these two states side by side reveals how the abstract parameters $(s_0,\delta_0)$ manifest in concrete, experimentally relevant non-Gaussian resources.

\subsubsection{Non-Gaussian phase sensitivity \texorpdfstring{$s_0$}{s₀}: Cat state}
When $s_0 > 1$ and $\delta_0 = 0$, the output state approximates a Schrödinger cat state \cite{cat_code,cat_state,cat_origin}
\begin{align}
    \ket{\mathrm{cat}_{\pm}(\alpha)} \propto \ket{\alpha} \pm \ket{-\alpha},\label{eq:cat_state}
\end{align}
with the correspondence
\begin{align}
    \ket{\psi}_{s_0,0,n} \gausssim \ket{\mathrm{cat}_{\mathrm{sgn}((-1)^n)}\qty(\sqrt{(n+1/2)/s_0})}.
\end{align}
(See Sec.~\ref{sec:classification} for the derivation.)  

The key feature of cat states is the coherent superposition of two well-separated, localized components in phase space, which produces the characteristic interference fringes in the Wigner function. This separation can be captured quantitatively by the so-called $x^2$-squeezing~\cite{cat_nls}, illustrated in Fig.~\ref{fig:gps_evaluate}(a), defined as
\begin{align}
    \xi_{\mathrm{cat}} = \min_{\lambda>0}\ev*{\qty(\tfrac{\hat{x}^2}{\lambda^2}-1)^2},\label{eq:cat_sqz_def}
\end{align}
where we define the quadrature operators as $\hat{x}=\hat{a}+\hat{a}^\dagger$ and $\hat{p}=-i\hat{a}+i\hat{a}^\dagger$ (see Sec.~\ref{sec:idler_rep_gps} for our convention). This measure captures the concentration of the quadrature distribution around two symmetric peaks in phase space. It is strictly positive and tends to zero for large-amplitude cat states, whereas Gaussian states satisfy the bound $\xi_{\mathrm{cat}} \geq 2/3$. For small $s_0$, the output resembles a Fock state, showing no visible double-peak structure, while for very large $s_0$ the peaks merge and coherence is lost. Between these extremes lies an optimal $s_0$, as shown in Figs.~\ref{fig:gps_evaluate}(b) and \ref{fig:gps_evaluate}(c), where the state exhibits maximum cat-like character and approaches the stellar-rank limit, nearly saturating the generator's ability to produce cat states.

\subsubsection{Non-Gaussian asymmetry \texorpdfstring{$\delta_0$}{δ₀}: Cubic phase state}

This example clarifies the role of non-Gaussian phase asymmetry $\delta_0$. When $s_0=0$ and $|\delta_0|>0$, the output state approximates a cubic phase state (CPS)~\cite{gkp,cps_nls,cps_konno,atsushi_cpg},
\begin{equation}
    \ket{\mathrm{CPS}} = e^{-i\hat{x}^3} \ket{p=0},\label{eq:cps}
\end{equation}
as derived in Sec.~\ref{sec:classification}.  

The hallmark of CPS is the cubic nonlinearity of its wavefunction. This feature can be quantified by the \emph{cubic nonlinear squeezing}~\cite{cps_nls}, illustrated in Fig.~\ref{fig:gps_evaluate}(d), defined as
\begin{align}
    \xi_{\mathrm{CPS}} = \min_{\lambda>0,d\in\mathbb{R}}\ev*{\qty(\lambda \hat{p}-\tfrac{\hat{x}^2}{4\sqrt{2}\lambda^2}-d)^2}.\label{eq:cps_sqz_def}
\end{align}
This quantity measures the extent to which the phase-space distribution is concentrated along a parabolic curve, reflecting the cubic structure of the state. For an ideal CPS, $\xi_{\mathrm{CPS}}$ approaches zero, while Gaussian states satisfy $\xi_{\mathrm{CPS}} \geq 3/4$~\cite{cps_nls,cps_konno}. For small $|\delta_0|$, the output resembles a symmetric, Fock-like state with little cubic character, whereas for very large $|\delta_0|$ the cubic coherence is washed out. Between these regimes lies an optimal $|\delta_0|$, highlighted in Figs.~\ref{fig:gps_evaluate}(e) and \ref{fig:gps_evaluate}(f), where the generated state exhibits its strongest cubic nonlinearity. 

Together, the cat and cubic phase states demonstrate how the control parameters $(s_0,\delta_0)$ map directly onto qualitatively distinct non-Gaussian resources. 
A more detailed classification of the states produced by two-mode generators is presented in Sec.~\ref{sec:classification}.

\subsection{Non-Gaussian control parameters as quantifier of non-Gaussianity}
The key insight from above examples is that the degree of non-Gaussianity is not determined by the stellar rank alone but depends crucially on the control parameters $(s_0,\delta_0)$. While all states share the same stellar rank $n$, their non-classical features vary widely depending on $(s_0,\delta_0)$. For instance, when $\delta_0=0$ and $s_0$ is too large, the cat state collapses to a trivial Gaussian, erasing non-Gaussianity. This shows that a state generator does not necessarily exploit the full non-Gaussian potential implied by its stellar rank $n$. We will explore the resource-theoretic role of this limitation in Sec.~\ref{sec:cparam_conversion}.

Building on this observation, we develop a general optimization framework. In certain regimes, by tailoring $(s_0,\delta_0)$ appropriately, one can approximate the same target state using fewer detected photons, thereby lowering the required stellar rank (Sec.~\ref{sec:approx_less_n}). Moreover, we identify transformations that preserve $(s_0,\delta_0)$ while modifying the success probability, enabling optimization of state generation rates without changing the output (Sec.~\ref{sec:damping_gps}). Generalizing these tools to multiple control modes, we propose in Sec.~\ref{sec:optimization} an algorithm that systematically optimizes non-Gaussian state generators, achieving nearly the same output states with reduced stellar rank and enhanced probability of success.
\section{Formal definition of control parameters in the two-mode non-Gaussian state generator}\label{sec:two-mode}
\begin{figure*}[tbp]
    \centering
    \input{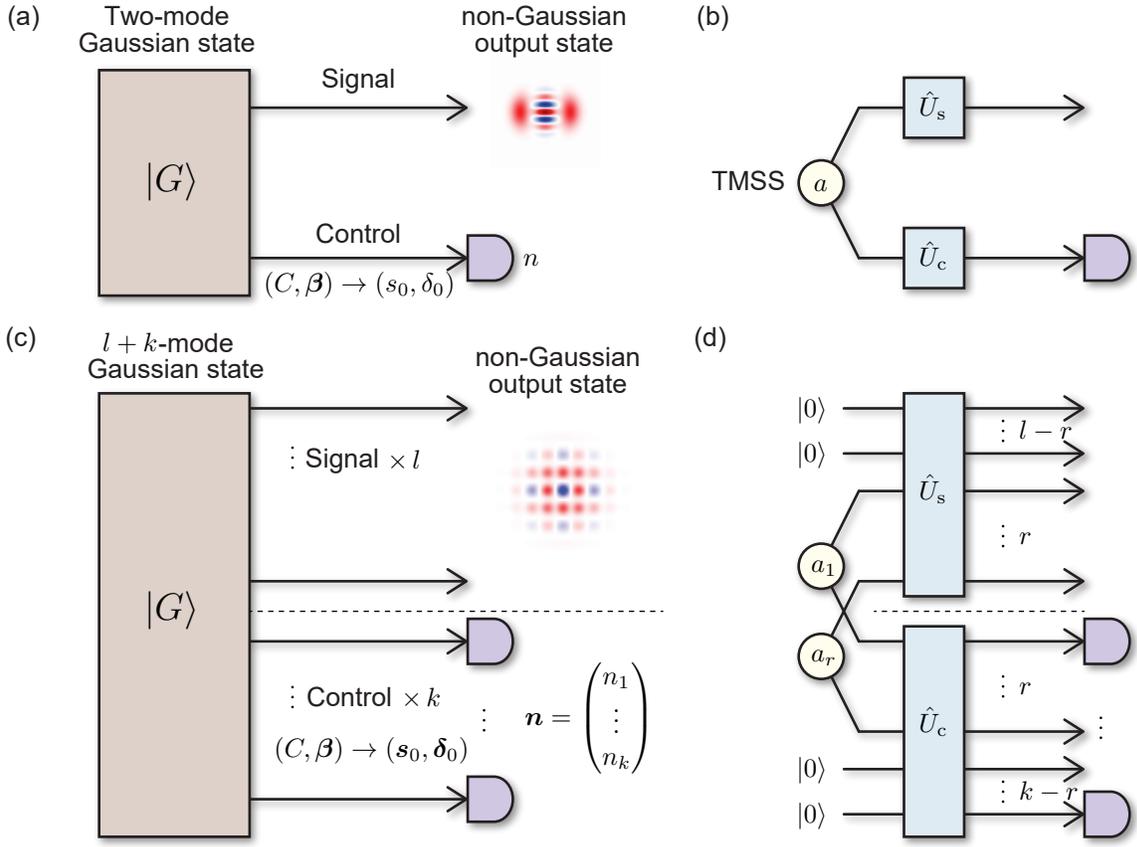}
    \caption{Two-mode and multi-mode non-Gaussian state generators and their canonical forms. 
(a) A two-mode non-Gaussian state generator, consisting of a control mode (measured in the photon-number basis) and a signal mode (where the non-Gaussian output state is generated). The output state and its success probability are determined solely by the control moments $(C,\bm\beta)$ and the measured photon number $n$ (Corollary~\ref{cor:idler_rep_gps}). 
(b) Canonical form of the two-mode generator, equivalent to the system in (a). The squeezing strength $a$ of the two-mode squeezed state and the Gaussian unitary $\hat{U}_\idler$ on the control mode depend only on the control moments $(C,\bm\beta)$ (Theorem~\ref{thm:purificationUniquenessGps}). 
(c) A general multi-mode non-Gaussian state generator with $k$ signal modes and $l$ control modes (Sec.~\ref{sec:multi-mode}). The $k$-mode output state and its success probability are determined solely by the control moments $(C,\bm\beta)$ and the measured photon numbers $\bm{n}$ (Corollary~\ref{cor:idler_rep}). 
(d) Canonical form of the multi-mode generator, equivalent to the system in (c). The parameters $a_1,\dots,a_k$ and the Gaussian unitary $\hat{U}_\idler$ depend only on the control moments $(C,\bm\beta)$ (Theorem~\ref{thm:purificationUniqueness}).}

\label{fig:gbs}
\end{figure*}
Having seen through examples that the non-Gaussian features of a state depend crucially on $(s_0,\delta_0)$, we now turn to a formal derivation, while still remaining in the simple two-mode case. For this, we introduce the \emph{control-mode representation}, which characterizes both the output state and its generation probability entirely in terms of the properties of the control mode. Within this framework, the parameters $s_0$ and $\delta_0$ naturally arise as a minimal parametrization of the output state. This two-mode case analysis provides the formal foundation for the multi-mode extension in Sec.~\ref{sec:multi-mode} and the optimization framework developed in Sec.~\ref{sec:optimization}.

\subsection{Control-mode representation}\label{sec:idler_rep_gps}
We start by giving a formal definition of a two-mode non-Gaussian state generator. A two-mode Gaussian state $\ket{G}$ is prepared, with modes denoted as the \emph{signal mode} and the \emph{control mode}. The control mode is subjected to a photon-number measurement, projecting it onto a Fock state $\ket{n}$ and thereby heralding a non-Gaussian state $\ket{\psi}\propto {}_{\idler}\!\braket{n}{G}$ in the signal mode, with success probability $p_n=\norm{{}_{\idler}\!\braket{n}{G}}^2$.

Let $\hat{\bm{q}}_{\signal}=(\hat{x}_{\signal},\hat{p}_{\signal})^T$ and $\hat{\bm{q}}_\idler=(\hat{x}_{\idler},\hat{p}_{\idler})^T$ denote the quadrature vectors of the signal and control modes, respectively. The full quadrature vector is then written as $\hat{\bm{q}}=(\hat{\bm{q}}_{\signal}^{T},\hat{\bm{q}}_{\idler}^{T})^{T}$.
Each mode satisfies the canonical commutation relation $\comm{\hat{x}}{\hat{p}}=2i$, corresponding to the convention $\hbar=2$, which will be used throughout the paper.

The Gaussian state $\ket{G}$ is completely characterized by the mean $\bm\gamma\in\mathbb{R}^{4}$ and covariance $\Sigma\in\mathbb{R}^{4\times 4}$, defined as \cite{gaussian_qi}:
\begin{align}
    \bm{\gamma}&\coloneqq \expval*{\hat{\bm q}}{G},\label{eq:mean_def}\\
    \Sigma &\coloneqq \tfrac{1}{2}\expval*{\acomm{\hat{\bm q}-\bm{\gamma}}{(\hat{\bm q}-\bm{\gamma})^{T}}}{G},\label{eq:cov_def}
\end{align}
where $\acomm{\cdot}{\cdot}$ denotes the anti-commutator. These can be written in block form as
\begin{align}
    \Sigma = \begin{pmatrix} A & B^T \\ B & C \end{pmatrix}, 
    \quad 
    \bm{\gamma} = \begin{pmatrix} \bm{\alpha} \\ \bm{\beta} \end{pmatrix},
    \label{eq:sigma_gps}
\end{align}
with $A,B,C\in\mathbb{R}^{2\times 2}$ and $\bm{\alpha},\bm{\beta}\in\mathbb{R}^{2}$.
The covariance matrix must satisfy the uncertainty relation
\begin{align}
    \Sigma \geq i\Omega^{\otimes 2},\label{eq:uncertainty_all_gps}
\end{align}
where
\begin{align}
    \Omega=\mqty(0&1\\-1&0)
\end{align}
is the symplectic form.

We now focus on the properties of the control mode, and denote the partial moments $(C,\bm\beta)$ as the \emph{control moments}. From Eq.~\eqref{eq:uncertainty_all_gps}, the control moments must also satisfy the uncertainty constraint
\begin{align}
    C\geq i\Omega.\label{eq:uncertainty_gps}
\end{align}
An important observation is that the state $\ket{G}$ can be regarded as a purification of the control mode. Corresponding to the well-known fact that a purification of a quantum state is unique up to unitaries acting on the auxiliary system~\cite{nielsen_chuang}, an analogous result holds for Gaussian states, where the unitary can be restricted to Gaussian unitaries~\cite{gaussian_qi,duan_simon_duan,duan_simon_simon}. Tailored to our setting, we have the following theorem.
\begin{restatable}[Canonical form]{theorem}{thmpurificationUniquenessGps}\label{thm:purificationUniquenessGps}
Any pure two-mode Gaussian state $\ket{G}$ can be expressed in the following \emph{canonical form} (Fig.~\ref{fig:gbs}(b)):
\begin{align}
    \ket{G} = \big(\hat{U}_\signal \otimes \hat{U}_\idler\big) \ket{\mathrm{TMSS}(a)},
\end{align}
where
\begin{align}
    \ket{\mathrm{TMSS}(a)} = \frac{2\sqrt{a}}{a+1}\sum_{j=0}^{\infty} \bigg(\frac{a-1}{a+1}\bigg)^{j} \ket{j}\ket{j}
\end{align}
is a two-mode squeezed state, and $\hat{U}_\signal$, $\hat{U}_\idler$ are Gaussian unitaries acting on the signal and control modes, respectively. The parameter $a>1$ depends only on $C$, and $\hat{U}_\idler$ can be chosen depending only on $(C,\bm{\beta})$.
\end{restatable}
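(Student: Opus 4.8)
The plan is to recognize the statement as a Gaussian version of the purification-uniqueness theorem and to reduce it to the single-mode normal form of the control mode. Since $\ket{G}$ is pure, the reduced state on the control mode is the single-mode Gaussian state $\rho_\idler = \Tr_\signal \ketbra{G}{G}$ with covariance $C$ and mean $\bm\beta$, and $\ket{G}$ is by construction a purification of $\rho_\idler$. The whole theorem then amounts to exhibiting one distinguished Gaussian purification---the two-mode squeezed state---and transporting $\ket{G}$ onto it by Gaussian unitaries acting separately on the two modes.

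First I would normalize the control mode. By Williamson's theorem in one mode, there is a symplectic matrix $S$ (implementing a Gaussian unitary built from a squeezer and a phase rotation) with $S C S^T = \nu I$, where $\nu = \sqrt{\det C} \geq 1$ is the symplectic eigenvalue; composing with the displacement that removes $\bm\beta$ yields a Gaussian unitary $\hat{W}_\idler$, depending only on $(C,\bm\beta)$, such that $\hat{W}_\idler\, \rho_\idler\, \hat{W}_\idler^\dagger$ is the zero-mean thermal state $\rho_{\mathrm{th}}(\nu)$ with covariance $\nu I$. I would then set $\hat{U}_\idler := \hat{W}_\idler^{-1}$, which manifestly depends only on $(C,\bm\beta)$.

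Next I identify the canonical purification. A direct Fock-space computation shows that tracing one mode of $\ket{\mathrm{TMSS}(a)}$ yields a zero-mean thermal state with mean photon number $\bar n = (a-1)^2/(4a)$, hence isotropic covariance $(2\bar n + 1)\,I = \tfrac{a^2+1}{2a}\,I$. Matching this to $\nu I$ gives $\nu = (a^2+1)/(2a)$, i.e.\ $a = \nu + \sqrt{\nu^2-1}$; this is the unique root with $a>1$ (equivalently $a = e^{2r}$ with $\cosh 2r = \nu$), it is strictly increasing in $\nu$, and it depends only on $\nu = \sqrt{\det C}$ and hence only on $C$. Because $\ket{G}$ is entangled, $\rho_\idler$ is genuinely mixed, so $\nu > 1$ and therefore $a > 1$ as claimed.

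Finally I invoke uniqueness of purifications. Both $(\hat{I}_\signal \otimes \hat{W}_\idler)\ket{G}$ and $\ket{\mathrm{TMSS}(a)}$ are pure states whose control-mode marginal is $\rho_{\mathrm{th}}(\nu)$; by the Gaussian purification-uniqueness result cited above, they differ by a Gaussian unitary $\hat{U}_\signal$ acting only on the purifying signal mode, so $(\hat{I}_\signal \otimes \hat{W}_\idler)\ket{G} = (\hat{U}_\signal \otimes \hat{I}_\idler)\ket{\mathrm{TMSS}(a)}$. Applying $\hat{W}_\idler^{-1} = \hat{U}_\idler$ to the control mode then gives $\ket{G} = (\hat{U}_\signal \otimes \hat{U}_\idler)\ket{\mathrm{TMSS}(a)}$. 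I expect the main obstacle to be the careful justification that the connecting unitary may indeed be taken Gaussian in this continuous-variable, infinite-dimensional setting---the abstract purification argument supplies only a unitary a priori---together with tracking the residual freedom in the Williamson step (a leftover phase rotation on the control mode, which commutes with the isotropic thermal marginal), so that a single well-defined choice of $\hat{U}_\idler$ can be fixed from $(C,\bm\beta)$ alone.
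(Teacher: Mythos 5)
Your reduction steps are sound---normalizing the control mode by a Williamson squeezer--rotation plus a displacement to get $\hat{W}_\idler$ depending only on $(C,\bm{\beta})$, and fixing $a$ by matching the thermal marginal of $\ket{\mathrm{TMSS}(a)}$ (your computation of the marginal covariance $\tfrac{a^2+1}{2a}I$ from the Fock-basis definition is correct, so $a=\nu+\sqrt{\nu^2-1}$ with $\nu=\sqrt{\det C}$ indeed depends only on $C$). But the step you yourself flag as ``the main obstacle'' is not a technicality to be tracked; it is the entire mathematical content of the theorem. Abstract purification uniqueness supplies only \emph{some} unitary $\hat{U}_\signal$ connecting $(\hat{I}_\signal\otimes\hat{W}_\idler)\ket{G}$ to $\ket{\mathrm{TMSS}(a)}$; the claim that it can be chosen \emph{Gaussian} is exactly what must be proven, and invoking a ``Gaussian purification-uniqueness result'' at that point is invoking a statement equivalent to the theorem itself. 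The paper is explicit about this: its appendix notes that the cited references (Duan et al., Simon) give only a proof in principle and do not establish the required dependence of $\hat{U}_\idler$ on $(C,\bm{\beta})$, which is why a self-contained proof is supplied. As written, your proposal is therefore a correct reduction plus an acknowledged hole at the decisive step, not a proof.

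The paper fills that hole by pure matrix analysis: after using Williamson to bring \emph{both} marginals to diagonal form $A=C=D$ (possible because a pure state's two marginals share their symplectic spectrum), purity of $\ket{G}$ means the full covariance $\Sigma=\mqty(A & B^T\\ B & C)$ is symplectic, and the resulting block conditions force $B=Z\sqrt{D^2-1}\,S$ with $S$ an \emph{orthogonal} symplectic matrix satisfying $S^TDS=D$, which can then be absorbed into $\hat{U}_\signal$. You could complete your own argument the same way, and in the two-mode case it is short: the normalized state has covariance $\mqty(A' & B'^T\\ B' & \nu I)$, purity forces $A'$ to have symplectic eigenvalue $\nu$, a Williamson transformation on the signal side sets $A'=\nu I$, and the symplectic conditions then pin $B'$ down to $\sqrt{\nu^2-1}\,RZ$ for a rotation $R$, which a final signal-side rotation removes, yielding the TMSS covariance. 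Two smaller points: your entanglement remark is needed for $a>1$ (a product state gives $\nu=1$, $a=1$, i.e.\ the vacuum), and note that your marginal formula differs from the covariance $\Sigma_a$ with diagonal blocks $aI$ used in the paper's appendix---the paper's Fock-basis definition and its appendix covariance correspond to two different parametrizations of the TMSS, a discrepancy your calculation actually exposes; under either convention $a$ remains a function of $\det C$ alone, so the statement of the theorem is unaffected.
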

\begin{proof}
See Appendix~\ref{sec:proof_Cmat}.
\end{proof}

A two-mode non-Gaussian state generator is specified by $\ket{G}$ together with the measurement outcome $n$, which sets the stellar rank of the output state. From Theorem~\ref{thm:purificationUniquenessGps}, $\ket{G}$ depends only on a Gaussian unitary $\hat{U}_\signal$ and the control moments $(C,\bm\beta)$. Hence, a two-mode non-Gaussian state generator is completely characterized by the set $(C,\bm{\beta},n,\hat{U}_\signal)$, as summarized below.
\begin{cor}[Control-mode representation]\label{cor:idler_rep_gps}
For a two-mode non-Gaussian state generator with measured photon number $n$ and control moments $(C,\bm\beta)$:
\begin{itemize}
    \item The output non-Gaussian state is determined by $(C,\bm{\beta},n)$, up to Gaussian unitaries.
    \item The success probability of the state generation is determined by $(C,\bm{\beta},n)$.
\end{itemize}
\end{cor}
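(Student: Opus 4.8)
The plan is to insert the canonical form from Theorem~\ref{thm:purificationUniquenessGps} into the two defining expressions recalled above, $\ket{\psi}\propto{}_{\idler}\!\braket{n}{G}$ and $p_n=\norm{{}_{\idler}\!\braket{n}{G}}^2$, and then simply bookkeep which factor carries the $(C,\bm\beta,n)$ dependence and which is the free signal-mode unitary. Since Theorem~\ref{thm:purificationUniquenessGps} already isolates all of the physically relevant data, the corollary should follow by a short tracking argument rather than any new computation.

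First I would substitute $\ket{G}=(\hat{U}_\signal\otimes\hat{U}_\idler)\ket{\mathrm{TMSS}(a)}$. Because the control-mode Fock projector ${}_{\idler}\!\bra{n}$ acts trivially on the signal mode, it passes through $\hat{U}_\signal$, leaving
\begin{align}
    {}_{\idler}\!\braket{n}{G}
    = \hat{U}_\signal\Big({}_{\idler}\!\bra{n}\,\hat{U}_\idler\,\ket{\mathrm{TMSS}(a)}\Big)
    \eqqcolon \hat{U}_\signal\ket{\chi},
\end{align}
where the contraction in parentheses, obtained by applying $\hat{U}_\idler$ to the control mode of the two-mode squeezed state and then projecting that mode onto $\bra{n}$, is a vector $\ket{\chi}$ living purely in the signal mode. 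The next step is to observe that $\ket{\chi}$ depends only on $(C,\bm\beta,n)$: by Theorem~\ref{thm:purificationUniquenessGps} the squeezing parameter $a$ is a function of $C$ alone and $\hat{U}_\idler$ can be chosen as a function of $(C,\bm\beta)$ alone, while $n$ is the given outcome; crucially, none of these quantities involves $\hat{U}_\signal$.

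With this decomposition the two claims are immediate. For the output state, normalizing gives $\ket{\psi}=\hat{U}_\signal\ket{\chi}/\norm{\ket{\chi}}$, so it equals a fixed $(C,\bm\beta,n)$-determined state $\ket{\chi}/\norm{\ket{\chi}}$ acted on by the Gaussian unitary $\hat{U}_\signal$; hence the state is determined by $(C,\bm\beta,n)$ up to a Gaussian unitary, establishing the first bullet. For the success probability, unitarity of $\hat{U}_\signal$ gives $p_n=\norm{\hat{U}_\signal\ket{\chi}}^2=\norm{\ket{\chi}}^2$, which depends only on $(C,\bm\beta,n)$, establishing the second bullet. The only point requiring care---and thus the closest thing to an obstacle---is verifying that the signal-mode unitary genuinely factors out of both expressions, i.e.\ that the choices of $a$ and $\hat{U}_\idler$ supplied by the theorem are independent of $\hat{U}_\signal$; this is exactly what the phrasing ``$a$ depends only on $C$'' and ``$\hat{U}_\idler$ can be chosen depending only on $(C,\bm\beta)$'' in Theorem~\ref{thm:purificationUniquenessGps} guarantees.
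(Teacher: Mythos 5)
Your proposal is correct and takes essentially the same route as the paper: both bullets are read off from the canonical form of Theorem~\ref{thm:purificationUniquenessGps}, with the signal-mode unitary $\hat{U}_\signal$ factored out of ${}_{\idler}\!\braket{n}{G}$. The only cosmetic difference is that for the success probability the paper simply notes that photon-number statistics on the control mode depend only on its reduced state, hence on $(C,\bm\beta)$, whereas you reach the same conclusion via unitary invariance of the norm of $\hat{U}_\signal\ket{\chi}$; the two arguments are interchangeable.
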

\begin{proof}
The first statement follows from Theorem~\ref{thm:purificationUniquenessGps}. The second follows because the probability of measuring $n$ photons depends only on the control moments.
\end{proof}

Therefore, for a fixed stellar rank $n$ (i.e.\ a fixed measurement outcome $n$), the control moments $(C,\bm\beta)$ completely determine the non-Gaussian features of the output state---somewhat paradoxically, this information is contained entirely in the control-mode subsystem, even though the non-Gaussian state itself is generated in the signal mode. We will henceforth identify a state generator with this set of parameters and simply write ``non-Gaussian state generator $(C,\bm{\beta},n,\hat{U}_\signal)$,'' with the output state and the success probability denoted by $\ket{\psi}_{C,\bm{\beta},n,\hat{U}_\signal}$ and $p_n(C,\bm\beta)$, respectively. This representation in terms of the control moments will be called the \emph{control-mode representation}. Whenever the Gaussian unitary $\hat{U}_\signal$ is clear from the context or not of primary interest, we simply write the state as $\ket{\psi}_{C,\bm{\beta},n}$ and the generator as $(C,\bm\beta,n)$, ignoring the Gaussian degree of freedom. Moreover, for states $\ket{\psi}$ and $\ket{\psi'}$, we introduce the notations
\begin{align}
    \ket{\psi} &\gaussprop \ket{\psi'}, \\
    \ket{\psi} &\gausssimprop \ket{\psi'},
\end{align}
to denote that there exists a Gaussian unitary $U$ such that $\ket{\psi} \propto U\ket{\psi'}$, either exactly or approximately, respectively.
\subsection{Rotation and Damping}\label{sec:damping_gps}
Although the control-mode representation $(C, \bm{\beta}, n, \hat{U}_\signal)$ uniquely specifies the non-Gaussian output state, the corresponding state generator is not necessarily unique for a given output state. In fact, let us consider the phase-rotation operator
\begin{align}
    \hat{R}(\theta)=e^{-i\theta \hat{n}}
\end{align}
with $\theta \in \mathbb{R}$, which corresponds to a Gaussian unitary operation, and the damping operator
\begin{align}
    \hat{\Gamma}(\lambda)=e^{-\lambda \hat{n}}
\end{align}
with $\lambda \in \mathbb{R}$, which is a purity-preserving Gaussian filter \cite{gaussian_filter} also known as noiseless linear attenuation ($\lambda>0$) or amplification ($\lambda<0$) \cite{nl_amp,usugaNoisepoweredProbabilistic,zavattaHighfidelityNoiseless, kocsisHeraldedNoiseless,nl_attn,nl_amp_attn, marekCoherentstatePhase}. Since both operators are diagonal in the Fock basis, inserting either of these operators immediately before the photon-number measurement yields a new two-mode state generator that produces the same output state as the original state generator (Fig.~\ref{fig:opt_method_all}(a)). Importantly, this modification can be implemented in practice by appropriately adjusting the control moments $(C,\bm{\beta})$, without physically applying these operators in the laboratory. This observation leads to the following theorems.

\begin{theorem}[Rotation transformation]\label{thm:fock_equivalence_rot_gps}
For a two-mode non-Gaussian state generator $(C, \bm{\beta}, n,\hat{U}_\signal)$, the output state and the success probability remain invariant:
\begin{align}
    \ket{\psi}_{(C', \bm{\beta}', n,\hat{U}_\signal)}&\propto \ket{\psi}_{(C, \bm{\beta}, n,\hat{U}_\signal)},\\
    p_n(C', \bm{\beta}')&= p_n(C, \bm{\beta}),
\end{align}
under the transformation
\begin{align}
    C' &= O C O^T, \\
    \bm{\beta}' &= O \bm{\beta},
\end{align}
where $O \in SO(2)$.
\end{theorem}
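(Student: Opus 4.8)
The plan is to realise the abstract phase-space rotation $O\in SO(2)$ as a concrete physical operation—the phase-rotation unitary $\hat{R}(\theta)=e^{-i\theta\hat{n}}$ acting on the control mode—and then to exploit the fact, already emphasised in the surrounding text, that $\hat{R}(\theta)$ is diagonal in the Fock basis. Once this identification is in place, the whole statement should follow without any heavy computation, since inserting a Fock-diagonal operator immediately before the photon-number measurement can only rescale the heralded vector by a scalar.

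First I would pin down the correspondence between the matrix $O$ and the unitary $\hat{R}(\theta)$ at the level of quadratures. Using $\comm{\hat{n}}{\hat{a}}=-\hat{a}$ one obtains $\hat{R}^\dagger(\theta)\,\hat{a}\,\hat{R}(\theta)=e^{-i\theta}\hat{a}$, which rewritten for $\hat{\bm{q}}_\idler=(\hat{x}_\idler,\hat{p}_\idler)^T$ becomes $\hat{R}^\dagger(\theta)\,\hat{\bm{q}}_\idler\,\hat{R}(\theta)=O(\theta)\,\hat{\bm{q}}_\idler$ with $O(\theta)=\mqty(\cos\theta&\sin\theta\\-\sin\theta&\cos\theta)$. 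Since $O(\theta)$ sweeps out all of $SO(2)$ as $\theta$ ranges over $\mathbb{R}$, any prescribed $O\in SO(2)$ is attained by a suitable $\theta$. Defining the new generator through $\ket{G'}=(\hat{I}_\signal\otimes\hat{R}(\theta))\ket{G}$ and applying this Heisenberg relation to the control block of Eqs.~\eqref{eq:mean_def}--\eqref{eq:cov_def}, I would read off directly that the control moments transform as $\bm{\beta}'=O(\theta)\bm{\beta}$ and $C'=O(\theta)\,C\,O(\theta)^T$, which is precisely the claimed map.

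Next I would project onto the heralding outcome $\ket{n}$. Because $\hat{R}(\theta)\ket{n}=e^{-i\theta n}\ket{n}$, we have ${}_{\idler}\!\braket{n}{G'}=e^{-i\theta n}\,{}_{\idler}\!\braket{n}{G}$, so the heralded signal-mode vector is unchanged apart from the irrelevant global phase $e^{-i\theta n}$; hence $\ket{\psi}_{(C',\bm{\beta}',n,\hat{U}_\signal)}\propto\ket{\psi}_{(C,\bm{\beta},n,\hat{U}_\signal)}$. Taking norms removes the phase entirely and gives $p_n(C',\bm{\beta}')=\norm{{}_{\idler}\!\braket{n}{G'}}^2=\norm{{}_{\idler}\!\braket{n}{G}}^2=p_n(C,\bm{\beta})$, establishing both invariances at once.

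The only genuinely delicate point is the orientation bookkeeping in the second step: I must confirm that inserting $\hat{R}(\theta)$—rather than its adjoint—produces exactly $C\mapsto OCO^T$ and $\bm{\beta}\mapsto O\bm{\beta}$ with the stated sign, and note that the ambiguity $\theta\mapsto-\theta$ merely swaps $O$ with $O^T=O^{-1}$, both lying in $SO(2)$, so that every rotation is reached regardless of the convention chosen. Everything else is an immediate consequence of the Fock-diagonality of $\hat{R}(\theta)$.
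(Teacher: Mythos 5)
Your proposal is correct and follows essentially the same route as the paper's proof: identify the moment transformation $C\mapsto OCO^T$, $\bm{\beta}\mapsto O\bm{\beta}$ with insertion of the phase-rotation operator $\hat{R}(\theta)$ on the control mode, then use its Fock-diagonality, $\hat{R}(\theta)\ket{n}=e^{-i\theta n}\ket{n}$, to conclude that the heralded state acquires only a global phase and the success probability is unchanged. The paper states the $O\leftrightarrow\hat{R}(\theta)$ correspondence without derivation, so your explicit Heisenberg-picture check and the remark on the $\theta\mapsto-\theta$ convention are a harmless (and careful) elaboration of the same argument.
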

\begin{proof}
This transformation of the control moments corresponds to inserting a rotation operator $\hat{R}(\theta)$ on the control mode. The theorem then follows from
\begin{align}
    \ket{\psi}_{(C', \bm{\beta}', n,\hat{U}_\signal)}
        &\propto {}_\idler\!\matrixel{n}{\hat{I}_\signal \otimes \hat{R}(\theta)_\idler}{G} \\
        &= e^{-i\theta n}\, {}_\idler\!\braket{n}{G} \\
        &\propto e^{-i\theta n}\ket{\psi}_{(C, \bm{\beta}, n,\hat{U}_\signal)}.
\end{align}
\end{proof}

\begin{restatable}[Damping transformation]{theorem}{thmfockEquivalenceDampGps}\label{thm:fock_equivalence_damp_gps}
Let $I$ denote the identity matrix.  For a two-mode non-Gaussian state generator $(C,\bm{\beta},n,\hat{U}_\signal)$ and any $t \in \mathbb{R}$ satisfying
\begin{align}
    t > 1 \quad \text{or} \quad tI < -C,
\end{align}
under the following \emph{damping transformation} of the control moments
\begin{align}
    C' &=\mathcal{D}_t(C)\coloneqq (tC + I)(C + tI)^{-1}, \label{eq:damping_trans_gps_C}\\
    \bm{\beta}' &=\mathcal{D}_t(\bm\beta)\coloneqq \sqrt{t^2 - 1}\,(C + tI)^{-1} \bm{\beta}, \label{eq:damping_trans_gps_beta}
\end{align}
there exists a Gaussian unitary $\hat{U}_\signal'$ such that
\begin{align}
    \ket{\psi}_{(C', \bm{\beta}', n,\hat{U}_\signal')}
    \propto \ket{\psi}_{(C, \bm{\beta}, n,\hat{U}_\signal)}.
\end{align}
\end{restatable}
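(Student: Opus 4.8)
The plan is to realize the damping transformation physically, exactly as anticipated in the paragraph preceding the statement: insert the damping operator $\hat{\Gamma}(\lambda)=e^{-\lambda\hat{n}}$ on the control mode immediately before the photon-number measurement, and then (i) argue that this leaves the heralded output unchanged up to a scalar, and (ii) compute how the control moments $(C,\bm{\beta})$ transform, showing they map to $(\mathcal{D}_t(C),\mathcal{D}_t(\bm\beta))$ with $t=\coth\lambda$. Throughout I would work with the (generally unnormalized) Gaussian state $\ket{G'}\coloneqq(\hat{I}_\signal\otimes\hat{\Gamma}(\lambda)_\idler)\ket{G}$; since $\hat{\Gamma}(\lambda)$ is an invertible Gaussian operator, $\ket{G'}$ is again a pure Gaussian state whenever it is normalizable, so it defines a bona fide two-mode generator to which Corollary~\ref{cor:idler_rep_gps} and Theorem~\ref{thm:purificationUniquenessGps} apply.

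First I would dispatch the invariance of the output. Because $\hat{\Gamma}(\lambda)$ is diagonal in the Fock basis, ${}_\idler\!\bra{n}\hat{\Gamma}(\lambda)=e^{-\lambda n}\,{}_\idler\!\bra{n}$, so the heralded signal state obeys ${}_\idler\!\braket{n}{G'}=e^{-\lambda n}\,{}_\idler\!\braket{n}{G}$, which is proportional to the original output $\ket{\psi}_{(C,\bm\beta,n,\hat{U}_\signal)}$. Expressing $\ket{G'}$ in the canonical form of Theorem~\ref{thm:purificationUniquenessGps} then supplies the signal-mode Gaussian unitary $\hat{U}_\signal'$ asserted in the statement. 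It thus remains only to identify the control moments of $\ket{G'}$.

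Here lies the real content. Since $\hat{\Gamma}(\lambda)$ acts only on the control mode and is Hermitian, the control marginal transforms as $\rho_\idler\mapsto\hat{\Gamma}(\lambda)\rho_\idler\hat{\Gamma}(\lambda)$, a map that depends solely on the original single-mode Gaussian moments $(C,\bm\beta)$; the induced transformation of $(C,\bm\beta)$ is therefore intrinsic, independent of the signal mode and of the chosen purification. To evaluate it I would use the dilation $e^{-\lambda\hat{n}}=\bra{0}_{\mathrm{anc}}\hat{B}(\tau)\ket{0}_{\mathrm{anc}}$ with $\tau=e^{-2\lambda}$, i.e.\ a beam splitter of transmissivity $\tau$ against an ancillary vacuum followed by projection of the ancilla onto vacuum. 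Tensoring $(C,\bm\beta)$ with the vacuum, applying the symplectic beam-splitter map, and then applying the standard Gaussian conditioning (Schur-complement) formula for the vacuum projection yields $C'$ and $\bm\beta'$ as explicit rational expressions in $C$ and $\bm\beta$. Every matrix that appears is a function of the single matrix $C$, so they commute and the simplification can be carried out as if $C$ were a scalar; collecting terms and substituting $t=\frac{1+\tau}{1-\tau}=\coth\lambda$, I expect the conditional covariance to collapse to $C'=(tC+I)(C+tI)^{-1}$ and the conditional mean to $\bm\beta'=\sqrt{t^2-1}\,(C+tI)^{-1}\bm\beta$, which are precisely $\mathcal{D}_t(C)$ and $\mathcal{D}_t(\bm\beta)$.

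Finally I would reconcile the validity conditions. The beam-splitter dilation is literally physical only for $\tau\in(0,1)$, i.e.\ $\lambda>0$, giving $t=\coth\lambda>1$; there $C+tI>0$ is invertible and normalizability is automatic. The unphysical regime $\lambda<0$ (net amplification) is handled by analytic continuation of the same rational formulas---equivalently, by damping the two-mode squeezed state of Theorem~\ref{thm:purificationUniquenessGps}, where $\hat{\Gamma}(\lambda)$ merely rescales the Schmidt weights---and one checks that normalizability of $\hat{\Gamma}(\lambda)\rho_\idler\hat{\Gamma}(\lambda)$ is exactly $C+tI<0$, i.e.\ $tI<-C$; this is the origin of the dichotomy ``$t>1$ or $tI<-C$'' in the hypothesis. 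The main obstacle is the moment computation: carrying the Gaussian conditioning algebra through and recognizing that the resulting rational matrix function is exactly $\mathcal{D}_t$ with the correct identification $t=\coth\lambda$. As a sanity check, the rotation invariance of $\hat{\Gamma}(\lambda)$ underlying Theorem~\ref{thm:fock_equivalence_rot_gps} lets one diagonalize $C$ and verify the covariance map eigenvalue by eigenvalue.
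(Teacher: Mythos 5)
Your treatment of the physical regime $t>1$ is sound and is essentially the paper's own strategy: the paper also realizes the transformation by inserting $\hat{\Gamma}(\lambda)$ with $t=\coth\lambda$ on the control mode, and gets invariance of the heralded state from Fock-diagonality exactly as you do, with Theorem~\ref{thm:purificationUniquenessGps} supplying $\hat{U}_\signal'$. The only difference there is the computational tool: the paper evaluates the moment update via the \choijam{} representation of $\hat{\Gamma}(\lambda)$ as a Gaussian map (Appendix~\ref{sec:choi_mat}), whereas you use the dilation $e^{-\lambda\hat{n}}=\bra{0}_{\mathrm{anc}}\hat{B}(\tau)\ket{0}_{\mathrm{anc}}$ with $\tau=e^{-2\lambda}$ followed by the Schur-complement conditioning formula. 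Both routes give $C'=(tC+I)(C+tI)^{-1}$ and $\bm{\beta}'=\sqrt{t^2-1}\,(C+tI)^{-1}\bm{\beta}$ with $t=(1+\tau)/(1-\tau)=\coth\lambda$, and your reduction to scalar algebra is legitimate since every matrix involved is a rational function of $C$ alone.

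Where you genuinely diverge from the paper, and where your proposal has gaps, is the regime $tI<-C$. The paper never touches unbounded operators: it proves the composition law $\mathcal{D}_{t_1}\circ\mathcal{D}_{t_2}=\mathcal{D}_{t_1\circ t_2}$ with $t_1\circ t_2=(t_1t_2+1)/(t_1+t_2)$, notes that for any $t<-1$ there is $\epsilon>0$ with $t\circ(1+\epsilon)>1$, and concludes that $(C,\bm{\beta})$ and $\mathcal{D}_t(C,\bm{\beta})$ (the latter shown to satisfy the uncertainty relation under the hypothesis) both yield the same output as the common generator $\mathcal{D}_{t\circ(1+\epsilon)}(C,\bm{\beta})$, using only \emph{physical} dampings. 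Your alternative---analytic continuation in $\lambda$ through $\lambda=0$ together with the claim that normalizability of $(\hat{I}\otimes\hat{\Gamma}(\lambda))\ket{G}$ is exactly $C+tI<0$---can be made to work, but you leave both the analyticity of the true moments and the normalizability criterion as ``one checks.'' More importantly, your parenthetical justification is wrong: $\hat{\Gamma}(\lambda)$ does not commute with $\hat{U}_\idler$, so inserting it on the control mode of $\ket{G}$ is \emph{not} equivalent to rescaling the Schmidt weights of $\ket{\mathrm{TMSS}(a)}$. Had it been, the normalizability condition would read $|t|>a$ in terms of the symplectic eigenvalue $a$ of $C$, whereas the correct condition $tI<-C$ involves the largest ordinary eigenvalue of $C$; these differ whenever $\hat{U}_\idler$ contains squeezing, so that aside contradicts the very condition you state. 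Dropping it and either proving the normalizability/analyticity claims or switching to the paper's composition trick closes the gap.
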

\begin{proof}
This follows in the same way as Theorem~\ref{thm:fock_equivalence_rot_gps}, from the invariance of the output state under insertion of the damping operator $\hat{\Gamma}(\lambda)$ with $\lambda = \coth^{-1} t$. The complete proof is given in Appendix~\ref{sec:proof_fock_equivalence_damp}.
\end{proof}

Note that under the ``Cayley transform''
\begin{align}
    \tilde{C} &= (C+I)^{-1}(C-I), \\
    \tilde{\beta} &= (C+I)^{-1}\bm{\beta}, \\
    \tilde{t} &= (t+1)^{-1}(t-1),
\end{align}
Eqs.~\eqref{eq:damping_trans_gps_C} and \eqref{eq:damping_trans_gps_beta} simplify to
\begin{align}
    \tilde{C}' = \tilde{t}\,\tilde{C}, \qquad 
    \tilde{\beta}' = \sqrt{\tilde{t}}\,\tilde{\beta}.
\end{align}
(See Appendix~\ref{sec:cayley-transform} for the proof.) While the damping transformation preserves the output state, it generally modifies the success probability of state generation. This degree of freedom plays a crucial role in enhancing the success probability, as discussed in Sec.~\ref{sec:optimization}.
\subsection{Non-Gaussian control parameters \texorpdfstring{$s_0,\delta_0$}{s₀,δ₀}}\label{sec:c0_gps}
According to Corollary~\ref{cor:idler_rep_gps}, for a fixed measured photon number $n$, the output state of a two-mode non-Gaussian state generator can be fully characterized (up to Gaussian unitaries) by the control moments $(C,\bm{\beta})$, which together have five real degrees of freedom. By applying the rotation transformation (Theorem~\ref{thm:fock_equivalence_rot_gps}) and the damping transformation (Theorem~\ref{thm:fock_equivalence_damp_gps}), these degrees of freedom can be reduced to three parameters that uniquely characterize the non-Gaussian features of the output state.

Specifically, $C$ can always be diagonalized as
\begin{align}
    C = O^T \mathrm{diag}(c,d)\,O, \label{eq:diag_C}
\end{align}
with $c \geq d$ and $O \in SO(2)$. Using the rotation transformation (Theorem~\ref{thm:fock_equivalence_rot_gps}), $C$ can be brought to diagonal form $C = \mathrm{diag}(c,d)$.

Then, we find two invariant parameters under the damping transformation (Theorem~\ref{thm:fock_equivalence_damp_gps}), namely the \emph{non-Gaussian phase sensitivity}:
\begin{align}
s_0 = \frac{c-d}{cd-1}, \label{eq:reduced_cparam_c0}
\end{align}
and the \emph{non-Gaussian asymmetry}:
\begin{align}
\delta_0 = \frac{2}{\sqrt{cd-1}} \qty(\sqrt{\frac{d+1}{c+1}}\,\bar{\beta}_x - i \sqrt{\frac{c+1}{d+1}}\,\bar{\beta}_p), \label{eq:reduced_cparam_beta0}
\end{align}
where $(\bar{\beta}_x,\bar{\beta}_p)^T$ is defined by
\begin{align}
    \mqty(\bar{\beta}_x \\ \bar{\beta}_p) \coloneqq O \bm{\beta},
\end{align}
using the same $O$ as in Eq.~\eqref{eq:diag_C}.

These parameters $(s_0,\delta_0)$ provide a description of the output state equivalent to the control-mode representation $(C,\bm\beta)$, as summarized in the following theorem.
\begin{restatable}[Non-Gaussian control parameters]{theorem}{reducedCparam}\label{thm:reduced_cparam}
The output state of a two-mode non-Gaussian state generator $(C,\bm{\beta},n)$ is uniquely determined up to Gaussian unitary operations by the triplet $(s_0,\delta_0,n)$. Furthermore, two sets of control moments $(C,\bm{\beta})$ and $(C',\bm{\beta}')$ that yield the same $(s_0,\delta_0)$ can be transformed into each other via rotation and damping transformations.
\end{restatable}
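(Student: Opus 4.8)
The plan is to show that the rotation and damping transformations of Theorems~\ref{thm:fock_equivalence_rot_gps} and~\ref{thm:fock_equivalence_damp_gps} act on the five-parameter family of control moments $(C,\bm\beta)$ with $(s_0,\delta_0)$ as their complete set of continuous invariants. Concretely, I would produce a canonical representative in each orbit, labelled uniquely by $(s_0,\delta_0)$, and then invoke the fact that both transformations leave the output state unchanged up to Gaussian unitaries. Since Corollary~\ref{cor:idler_rep_gps} already fixes the output state (up to such unitaries) from $(C,\bm\beta,n)$, reducing any generator to its canonical representative shows that the state depends only on $(s_0,\delta_0,n)$, which is the first assertion.

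The first step is to verify invariance of $(s_0,\delta_0)$. Rotation-invariance is essentially definitional: under $C\mapsto OCO^{T}$ and $\bm\beta\mapsto O\bm\beta$ the eigenvalues $(c,d)$ are unchanged and the diagonalizing rotation of Eq.~\eqref{eq:diag_C} merely absorbs $O$, so $(\bar\beta_x,\bar\beta_p)^{T}=O\bm\beta$ and hence Eqs.~\eqref{eq:reduced_cparam_c0}--\eqref{eq:reduced_cparam_beta0} are untouched. Damping-invariance is the computational heart of the proof, and the cleanest route is the Cayley transform already introduced, under which damping becomes the pure scaling $\tilde C\mapsto\tilde t\,\tilde C$, $\tilde\beta\mapsto\sqrt{\tilde t}\,\tilde\beta$. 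Substituting $\tilde c=(c-1)/(c+1)$ and $\tilde d=(d-1)/(d+1)$ into Eqs.~\eqref{eq:reduced_cparam_c0}--\eqref{eq:reduced_cparam_beta0} recasts the parameters in the manifestly scale-homogeneous form
\[
s_0=\frac{\tilde c-\tilde d}{\tilde c+\tilde d},\qquad
\delta_0=\frac{2\sqrt2\,(\tilde\beta_x-i\tilde\beta_p)}{\sqrt{\tilde c+\tilde d}},
\]
both of degree zero under $(\tilde c,\tilde d,\tilde\beta)\mapsto(\tilde t\tilde c,\tilde t\tilde d,\sqrt{\tilde t}\,\tilde\beta)$, hence damping-invariant.

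Next I would exhibit the canonical form by a degree-of-freedom count. One rotation diagonalizes $C$, consuming the single angular degree of freedom; one damping then fixes the remaining continuous scale gauge, most conveniently by normalizing $\tilde c+\tilde d$ to a standard value. What survives is exactly three real numbers, which the displayed formulas identify with $(s_0,\mathrm{Re}\,\delta_0,\mathrm{Im}\,\delta_0)$, and inverting these relations under the chosen normalization shows every admissible $(s_0,\delta_0)$ is realized. This establishes a bijection between orbits and $(s_0,\delta_0)$. The ``furthermore'' clause then follows from invertibility: two control moments sharing $(s_0,\delta_0)$ reduce to the same canonical representative, so composing the reduction of one with the inverse reduction of the other links them using rotations and dampings alone.

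The main obstacle I anticipate is not the invariance algebra but the bookkeeping of admissible ranges. I would have to confirm that both the scale-fixing damping and, crucially, the inverse damping required for the ``furthermore'' lie in the permitted window $t>1$ or $tI<-C$ (equivalently that $\tilde t$ stays in its allowed interval), so that the orbit is genuinely connected by legitimate transformations and every intermediate covariance matrix remains physical, $C\ge i\Omega$, i.e.\ $\tilde c+\tilde d\ge0$. I would also dispatch the degenerate strata separately: $c=d$ (where $s_0=0$ and the diagonalizing rotation is non-unique, so the phase of $\delta_0$ is pure gauge), the boundary $d=1$ giving $s_0=1$, and the limit $cd\to1$ giving $s_0\to\infty$, checking in each case that the orbit-to-$(s_0,\delta_0)$ correspondence persists.
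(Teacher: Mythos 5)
Your proposal is correct and takes essentially the same route as the paper's own proof: diagonalize $C$ by a rotation, pass to the Cayley-transformed variables in which damping acts as the pure scaling $(\tilde c,\tilde d,\tilde{\bm\beta})\mapsto(\tilde t\tilde c,\tilde t\tilde d,\sqrt{\tilde t}\,\tilde{\bm\beta})$, observe that $(s_0,\delta_0)$ are exactly the degree-zero invariants of this scaling, and obtain the converse by matching invariants to produce the connecting $\tilde t$. Your extra bookkeeping on admissibility (which the paper's proof passes over silently) indeed closes cleanly: since both endpoints are physical, $\tilde t=\tilde c'/\tilde c$ automatically lies in the allowed window $(0,1/\tilde c)$ corresponding to $t>1$ or $tI<-C$, and in the degenerate stratum $c=d$ the rotation ambiguity only affects the phase of $\delta_0$, which is pure gauge there.
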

\begin{proof}
See Appendix~\ref{sec:c0_proof}.
\end{proof}

Figure~\ref{fig:s0_cd} illustrates the relation between the eigenvalues $c \geq d$ of $C$ and the corresponding value of $s_0$, providing an intuitive visualization of this definition.
\begin{figure}[!htb]
    \input{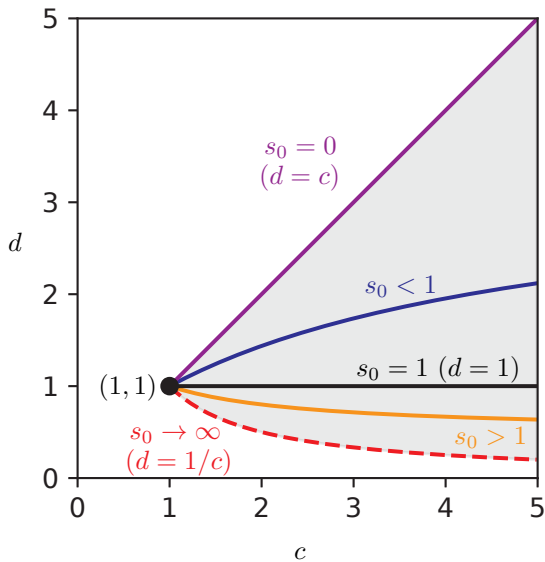}
    \caption{Relation between the eigenvalues $c \geq d$ of $C$ and the non-Gaussian phase sensitivity $s_0$ defined in Eq.~\eqref{eq:reduced_cparam_c0}. 
    Solid curves show contours of constant $s_0$, while the gray shaded region marks the physically allowed domain satisfying Eq.~\eqref{eq:uncertainty_gps} with $c \geq d$. 
    The point $(1,1)$ corresponds to the vacuum state in the control mode.}
    \label{fig:s0_cd}
\end{figure}

We call the pair $(s_0,\delta_0)$ the \emph{non-Gaussian control parameters}. These parameters fully characterize the output state up to a Gaussian unitary. When there is no risk of confusion, we further write $\ket{\psi}_{C,\bm\beta,n}$ simply as $\ket{\psi}_{s_0,\delta_0,n}$, again omitting the dependence on $\hat{U}_\signal$. The state can then be expressed explicitly in terms of $(s_0,\delta_0,n)$ in several equivalent forms.
\begin{restatable}[Particle form]{theorem}{particleForm}\label{thm:particle_form}
The output state of a two-mode non-Gaussian state generator with control parameters $(s_0,\delta_0)$ and measured photon number $n$ can be written as
\begin{align}
\ket{\psi}_{s_0,\delta_0, n} \gaussprop \ket{\psi}_{s_0, \delta_0, n}^{(\mathrm{p})} \coloneqq \left(\hat{a}^{\dagger} + s_0 \hat{a} + \delta_0 \right)^n \ket{0}.\label{eq:fock_form}
\end{align}
\end{restatable}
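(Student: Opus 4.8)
The plan is to reduce the abstract output state $\ket{\psi}\propto {}_\idler\!\braket{n}{G}$ to the claimed particle form in three moves: put $\ket{G}$ into canonical form, transfer the control-mode photon-counting operator onto the signal mode through the correlations of the two-mode squeezed state, and normalize the resulting affine operator. First, by Theorem~\ref{thm:purificationUniquenessGps} I write $\ket{G}=(\hat{U}_\signal\otimes\hat{U}_\idler)\ket{\mathrm{TMSS}(a)}$, and since $\hat{U}_\signal$ acts only on the signal mode it is discarded under $\gaussprop$, leaving $\ket{\psi}\gaussprop {}_\idler\!\matrixel{n}{\hat{U}_\idler}{\mathrm{TMSS}(a)}$. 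The core of the argument is then to evaluate this control-mode matrix element.

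For the transfer step I would write $\bra{n}_\idler=\tfrac{1}{\sqrt{n!}}\bra{0}_\idler\,\hat{a}_\idler^{\,n}$ and commute $\hat{U}_\idler$ to the left via $\hat{a}_\idler^{\,n}\hat{U}_\idler=\hat{U}_\idler\,\hat{O}_\idler^{\,n}$, where $\hat{O}_\idler:=\hat{U}_\idler^\dagger\hat{a}_\idler\hat{U}_\idler=u\hat{a}_\idler+v\hat{a}_\idler^\dagger+w$ is affine in the control quadratures because $\hat{U}_\idler$ is Gaussian (with $|u|^2-|v|^2=1$). The two-mode squeezed state obeys the nullifier relations $(\hat{a}_\signal-t\hat{a}_\idler^\dagger)\ket{\mathrm{TMSS}(a)}=0$ and $(\hat{a}_\idler-t\hat{a}_\signal^\dagger)\ket{\mathrm{TMSS}(a)}=0$ with $t=(a-1)/(a+1)$, which follow directly from the Fock expansion given for $\ket{\mathrm{TMSS}(a)}$. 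Hence a single control operator acting on the state can be exchanged for a signal operator, $\hat{O}_\idler\ket{\mathrm{TMSS}(a)}=\hat{O}_\signal\ket{\mathrm{TMSS}(a)}$ with $\hat{O}_\signal=ut\,\hat{a}_\signal^\dagger+\tfrac{v}{t}\hat{a}_\signal+w$. The key point is that $\hat{O}_\idler$ and $\hat{O}_\signal$ act on different modes and thus commute, so this single-step identity lifts by induction to $\hat{O}_\idler^{\,n}\ket{\mathrm{TMSS}(a)}=\hat{O}_\signal^{\,n}\ket{\mathrm{TMSS}(a)}$. Pulling the signal operator out of the control-mode bracket gives $\ket{\psi}\gaussprop \hat{O}_\signal^{\,n}\ket{G_0}$, where $\ket{G_0}:={}_\idler\!\matrixel{0}{\hat{U}_\idler}{\mathrm{TMSS}(a)}$ is a pure Gaussian signal state (a Gaussian projection of a pure Gaussian state).

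To reach the stated form, write $\ket{G_0}=\hat{V}\ket{0}$ for a Gaussian unitary $\hat{V}$; then $\hat{O}_\signal^{\,n}\hat{V}\ket{0}=\hat{V}(\hat{V}^\dagger\hat{O}_\signal\hat{V})^n\ket{0}\gaussprop L^n\ket{0}$ with $L:=\hat{V}^\dagger\hat{O}_\signal\hat{V}=p\hat{a}^\dagger+q\hat{a}+m$ again affine. A phase rotation $\hat{R}(\theta)$ fixes the vacuum and sends $(p,q)\mapsto(pe^{-i\theta},qe^{i\theta})$; choosing $\theta$ to equalize the phases of the two coefficients and absorbing an overall complex scalar (harmless under $\propto$) makes them real and positive, after which dividing by the $\hat{a}^\dagger$-coefficient yields $\hat{a}^\dagger+s_0\hat{a}+\delta_0$ with $s_0=|q/p|\ge 0$ and $\delta_0\in\mathbb{C}$. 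This establishes existence of the particle form. The only degenerate case is $p=0$, which occurs precisely when the control mode is pure ($\det C=1$, the unentangled boundary) and the output is Gaussian; this is the limit $s_0\to\infty$ lying outside the finite regime $s_0\ge 0$.

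Finally, to confirm that the emergent parameters coincide with the definitions in Eqs.~\eqref{eq:reduced_cparam_c0} and \eqref{eq:reduced_cparam_beta0}, I would evaluate on the diagonal representative $C=\mathrm{diag}(c,d)$ of Eq.~\eqref{eq:diag_C}, where $a$ is fixed by $\sqrt{\det C}=\cosh 2r$ and $\hat{U}_\idler$ is an explicit single-mode squeeze-and-displace turning the symmetric covariance of $\ket{\mathrm{TMSS}(a)}$ into $\mathrm{diag}(c,d)$ with mean $\bm{\beta}$. Substituting the resulting $(u,v,w)$ into $\hat{O}_\signal$ and the induced $\ket{G_0}$ and carrying out the normalization above should reproduce $s_0=(c-d)/(cd-1)$ and $\delta_0$ exactly as in Eq.~\eqref{eq:reduced_cparam_beta0}. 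Alternatively, since the particle-form parameters are manifestly Gaussian-unitary invariants of the output, they are unchanged by the rotation and damping transformations, so by Theorem~\ref{thm:reduced_cparam} they are functions of $(s_0,\delta_0)$ alone and are pinned to equality by a single representative. I expect the main obstacle to be precisely this last bookkeeping: tracking the Bogoliubov coefficients $(u,v,w)$ and the squeezing of $\ket{G_0}$ through the normalization to match the exact closed forms for $(s_0,\delta_0)$, whereas the existence of the particle form itself follows cleanly from the commutation-based transfer argument.
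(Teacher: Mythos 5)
Your route is genuinely different from the paper's and, for the existence part, correct. The paper (Appendix~\ref{sec:proof_fock_wave_form}) applies the \choijam{} isomorphism to the canonical form to write the output as a Gaussian filter acting on a Fock state, $\ket{\psi}=\hat{K}\ket{n}$ with $\hat{K}=\hat{U}_p\hat{\Gamma}(\lambda)\hat{S}(r)\hat{D}(\bm{\beta})$, then uses the free signal-side unitary $\hat{U}_p$ to make $\hat{K}$ vacuum-preserving, so that $\ket{\psi}\propto(\hat{K}\hat{a}^\dagger\hat{K}^{-1})^n\ket{0}$; computing $\hat{K}\hat{a}^\dagger\hat{K}^{-1}$ explicitly produces the particle form \emph{together with} the closed-form coefficients $s_0=(c-d)/(cd-1)$ and the $\delta_0$ of Eq.~\eqref{eq:reduced_cparam_beta0} in a single computation (and the same filter representation is reused for the wave form). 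You instead write $\bra{n}\propto\bra{0}\hat{a}^n$, conjugate through $\hat{U}_\idler$, and transfer the resulting affine operator across the TMSS via the nullifiers $(\hat{a}_\signal-t\hat{a}_\idler^\dagger)\ket{\mathrm{TMSS}(a)}=0$ and $(\hat{a}_\idler-t\hat{a}_\signal^\dagger)\ket{\mathrm{TMSS}(a)}=0$; the commutation-based induction $\hat{O}_\idler^{\,n}\ket{\mathrm{TMSS}(a)}=\hat{O}_\signal^{\,n}\ket{\mathrm{TMSS}(a)}$, the identification of $\ket{G_0}$ as a pure Gaussian state, and the rotation-plus-rescaling normalization of $L=p\hat{a}^\dagger+q\hat{a}+m$ are all valid, and your treatment of the degenerate case $p=0$ (pure control mode, where $s_0$ is undefined anyway) is right. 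This transfer argument is more elementary---no \choijam{} machinery---and isolates cleanly why an $n$-th power of an affine operator must appear.

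The gap is the final step, which you yourself flag as ``bookkeeping'': the theorem does not merely assert that the output equals $(\hat{a}^\dagger+s\hat{a}+\delta)^n\ket{0}$ for \emph{some} $s\geq0$, $\delta\in\mathbb{C}$; it asserts that the coefficients are exactly the control parameters of Eqs.~\eqref{eq:reduced_cparam_c0} and \eqref{eq:reduced_cparam_beta0}, and that identification is precisely the content the paper's computation delivers. Your direct plan (evaluate on $C=\mathrm{diag}(c,d)$, track $(u,v,w)$ and $\hat{V}$ through the normalization) would close it, but it is unexecuted and is comparable in labor to the paper's whole proof. The proposed shortcut does not substitute for it: the emergent coefficients are not ``manifestly'' invariants of the output---that requires the uniqueness theorem (Theorem~\ref{thm:gps_uniqueness}), which holds only for $n\geq2$ and only up to $\delta_0\to-\delta_0$, leaving $n=0,1$ and the sign of $\delta_0$ unpinned; and ``a single representative'' per equivalence class is still a three-real-parameter family of explicit computations, one for each value of $(s_0,\delta_0)$, since rotation and damping only fix a gauge and cannot eliminate the dependence on $(s_0,\delta_0)$ itself. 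So the shortcut saves essentially nothing over the direct bookkeeping, which remains the part that must actually be carried out.
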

\begin{proof}
See Appendix~\ref{sec:proof_fock_wave_form}.
\end{proof}
\begin{restatable}[Wave form]{theorem}{waveForm}\label{thm:wave_form}
The output state of a two-mode non-Gaussian state generator with control parameters $(s_0,\delta_0)$ and measured photon number $n$ can be written as
\begin{align}
\begin{split}
\ket{\psi}_{s_0,\delta_0, n} \gaussprop & \ket{\psi}_{s_0, \delta_0, n}^{(\mathrm{w})} \\\coloneqq &
\exp\qty(-\frac{1}{2\sqrt{s_0 + 1}} \delta_{0x} \hat{p}) \\
& \cdot \exp\qty(\frac{\sqrt{s_0 + 1}}{2} \delta_{0p} \hat{x})\exp\qty(-\frac{s_0}{4} \hat{x}^2) \ket{n},
\end{split}
\label{eq:wave_form}
\end{align}
where $\ket{n} \propto \hat{a}^{\dagger n}\ket{0}$ is the Fock state.
\end{restatable}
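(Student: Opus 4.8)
The plan is to take the particle form (Theorem~\ref{thm:particle_form}) as given and prove the wave form by exhibiting an explicit Gaussian unitary connecting the two representatives. It then suffices to show $e^{-\frac{\delta_{0x}}{2\mu}\hat{p}}\,e^{\frac{\mu}{2}\delta_{0p}\hat{x}}\,e^{-\frac{s_0}{4}\hat{x}^2}\ket{n}\gaussprop(\hat{a}^\dagger+s_0\hat{a}+\delta_0)^n\ket{0}$, where $\mu=\sqrt{s_0+1}$, and I would treat the squeezing factor $e^{-\frac{s_0}{4}\hat{x}^2}$ and the two displacement-type factors in turn.

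First, the core case $\delta_0=0$. Writing $\ket{n}\propto\hat{a}^{\dagger n}\ket{0}$ and conjugating each creation operator through $\hat{G}\coloneqq e^{-\frac{s_0}{4}\hat{x}^2}$ via the one-line identity $\hat{G}\hat{a}^\dagger\hat{G}^{-1}=\hat{a}^\dagger-\frac{s_0}{2}\hat{x}$ (which follows from $\comm{\hat{x}^2}{\hat{a}^\dagger}=2\hat{x}$) yields $e^{-\frac{s_0}{4}\hat{x}^2}\ket{n}\propto(\hat{a}^\dagger-\frac{s_0}{2}\hat{x})^n\,\hat{G}\ket{0}$. The residual factor $\hat{G}\ket{0}$ has Gaussian wavefunction $\propto e^{-(1+s_0)x^2/4}$, i.e.\ it is a squeezed vacuum $\hat{S}_0\ket{0}$ with squeezing fixed by $e^{2r_0}=1+s_0=\mu^2$. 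Undoing this single Gaussian unitary, the key identity is $\hat{S}_0^{-1}(\hat{a}^\dagger-\frac{s_0}{2}\hat{x})\hat{S}_0=\frac{1}{\mu}(\hat{a}^\dagger-s_0\hat{a})$, obtained from $\hat{S}_0^{-1}\hat{x}\hat{S}_0=\hat{x}/\mu$ and $\hat{S}_0^{-1}\hat{p}\hat{S}_0=\mu\hat{p}$ (the $\mu$'s cancel cleanly). Hence $e^{-\frac{s_0}{4}\hat{x}^2}\ket{n}\gaussprop(\hat{a}^\dagger-s_0\hat{a})^n\ket{0}$, and a final $\pi/2$ phase rotation $e^{i\frac{\pi}{2}\hat{n}}$, which fixes $\ket{0}$ and sends $\hat{a}^\dagger-s_0\hat{a}\mapsto i(\hat{a}^\dagger+s_0\hat{a})$, produces $(\hat{a}^\dagger+s_0\hat{a})^n\ket{0}$. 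I would emphasize that this argument is uniform in $s_0\geq0$, since $\hat{S}_0$ exists for every $s_0$; the more naive route of diagonalizing $\hat{a}^\dagger+s_0\hat{a}$ by a single squeeze only works for $s_0<1$, which is why the $e^{-\frac{s_0}{4}\hat{x}^2}$ presentation is the natural one.

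To include $\delta_0$, I would carry the factors $\hat{T}\coloneqq e^{-\frac{\delta_{0x}}{2\mu}\hat{p}}e^{\frac{\mu}{2}\delta_{0p}\hat{x}}$ through the same manipulations. Commuting $\hat{T}$ past $\hat{S}_0$ gives $\hat{T}'\coloneqq\hat{S}_0^{-1}\hat{T}\hat{S}_0=e^{-\frac{\delta_{0x}}{2}\hat{p}}e^{\frac{\delta_{0p}}{2}\hat{x}}$, where the $\mu$'s again cancel---this is exactly why the statement carries the coefficients $\frac{1}{2\sqrt{s_0+1}}$ and $\frac{\sqrt{s_0+1}}{2}$. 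Pushing $\hat{T}'$ through the creation-operator polynomial (using $e^{c\hat{x}}\hat{a}^\dagger e^{-c\hat{x}}=\hat{a}^\dagger+c$, $e^{b\hat{p}}\hat{a}^\dagger e^{-b\hat{p}}=\hat{a}^\dagger-ib$, and the analogous relations for $\hat{a}$) converts $(\hat{a}^\dagger-s_0\hat{a})^n\ket{0}$ into $(\hat{a}^\dagger-s_0\hat{a}+\Delta)^n\,\hat{T}'\ket{0}$ with an explicit constant $\Delta$, while on the vacuum $\hat{T}'\ket{0}\propto\hat{D}\ket{0}$ is a displaced vacuum (since $e^{c\hat{x}}$ acts on $e^{-x^2/4}$ as a genuine displacement). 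Absorbing $\hat{D}$ as a Gaussian unitary and applying the same $\pi/2$ rotation leaves $(\hat{a}^\dagger+s_0\hat{a}+\delta_0')^n\ket{0}$ for a computable $\delta_0'$.

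The main obstacle---the one genuinely computational step---is verifying that the constants (the shift $\Delta$, the displacement carried by $\hat{D}$, and the rotation phase) combine to reproduce exactly $\delta_0=\delta_{0x}+i\delta_{0p}$ in the prescribed normalization; this is precisely where the coefficients in the statement are forced, and everything else is routine BCH bookkeeping. As an independent check I would verify small cases at the level of wavefunctions: for $n=2$ one has $e^{-\frac{s_0}{4}\hat{x}^2}\ket{2}\propto(x^2-1)e^{-(1+s_0)x^2/4}$, which after the squeeze $\hat{S}_0^{-1}$ and the $\pi/2$ rotation matches $(\hat{a}^\dagger+s_0\hat{a})^2\ket{0}\propto\sqrt{2}\ket{2}+s_0\ket{0}$, confirming that the invariant appearing in the wave form is indeed $s_0$ and not some rescaled surrogate.
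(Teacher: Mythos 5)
Your proposal is correct and follows essentially the same route as the paper's proof: both reduce the wave form to the particle form by conjugating $\hat{a}^{\dagger n}$ through the Gaussian filter, identifying the filtered vacuum as a Gaussian unitary acting on $\ket{0}$ (so that $\hat{K}\hat{a}^{\dagger n}\ket{0}=(\hat{K}\hat{a}^{\dagger}\hat{K}^{-1})^{n}\hat{K}\ket{0}$), and finishing with a $\pi/2$ phase rotation---the paper merely organizes the bookkeeping differently, finding the Gaussian unitary for the whole operator at once via a Fourier-transformed wavefunction rather than your factorized squeeze-then-displacement treatment with commutator identities. Regarding your flagged ``main obstacle'': carrying your constants through gives $\Delta=\tfrac{1}{2}\delta_{0p}(1+s_0)+\tfrac{i}{2}\delta_{0x}(1-s_0)$, a displaced vacuum with $\alpha=\tfrac{1}{2}(\delta_{0p}-i\delta_{0x})$, and after the $\pi/2$ rotation the constant $\delta_{0x}-i\delta_{0p}$ rather than $\delta_{0x}+i\delta_{0p}$; this is in fact the paper's implicit convention (compare the sign of the $\bar{\beta}_p$ term in Eq.~\eqref{eq:reduced_cparam_beta0} and the parameter matching at the end of the paper's proof in Appendix~\ref{sec:proof_fock_wave_form}), so your argument closes without modification once that convention is adopted.
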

\begin{proof}
See Appendix~\ref{sec:proof_fock_wave_form}.
\end{proof}
We refer to $\ket{\psi}_{s_0,\delta_0,n}^{(\mathrm{p})}$ and 
$\ket{\psi}_{s_0,\delta_0,n}^{(\mathrm{w})}$ as the \emph{particle form} 
and the \emph{wave form}, respectively. By construction, these two forms are related 
by a Gaussian unitary, which is detailed in Appendix \ref{sec:proof_fock_wave_form}. 
We remark that the relation between the particle form and the wave form is reminiscent of the wave--particle duality in quantum mechanics.
The particle form corresponds to the ``core state'' 
\cite{core_state,stellar_rank_original,stellar_rank}, a superposition of Fock 
states up to $n$ photons that explicitly displays its stellar rank $n$. This picture connects naturally to the intuition behind schemes such as photon subtraction~\cite{ps_cat_multi,ps_cat_multi_endo}, photon addition~\cite{addition_exp,addition_theory}, and coherent superpositions of them~\cite{sp_add_subtract,nonlinear_potential}, all of which fall within our general framework. Indeed, as we will see in Sec.~\ref{sec:classification}, the output states can be expressed as photon-subtracted, photon-added, or balanced superpositions of the two, applied to a Gaussian state, depending on the regime of $s_0$.

In contrast, the wave form represents the state as $\ket{n}$ acted on by Gaussian filters \cite{gaussian_filter} in the quadratures $\hat{x}$ and $\hat{p}$, producing a ``filtered'' wavefunction in the quadrature basis. This representation is 
particularly suited for analyzing coherent structures in phase space. 
Hence, wave-function engineering schemes~\cite{wf_engineer,fock_to_cat} and breeding protocols based on homodyne conditioning~\cite{cat_breeding_theory,gkp_breeding, gkp_kan} are naturally related to this picture and can be reduced to our framework by including the homodyne conditioning into the Gaussian state preparation (see Appendix~\ref{sec:opt_gkp_detail} for the example of GKP breeding \cite{gkp_breeding}). The wave form picture thus provides a natural interpretation of these schemes and also serves as the basis for the approximation of the output state in Sec.~\ref{sec:approx_less_n}.

Moreover, for $n \geq 2$, the parameters $(s_0,\delta_0)$ associated with a given output state are unique, up to the trivial transformation $\delta_0 \to -\delta_0$.

\begin{restatable}[Uniqueness of control parameters]{theorem}{gpsUniqueness}\label{thm:gps_uniqueness}
For $n \geq 2$, if $\ket{\psi}_{s_0,\delta_0,n} 
\gaussprop \ket{\psi}_{s_0',\delta_0',n}$,
then either $s_0 = s_0' = 0$ with $|\delta_0| = |\delta_0'|$, 
or $(s_0,\delta_0) = (s_0', \pm \delta_0')$.
\end{restatable}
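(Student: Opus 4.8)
The plan is to pass to the Bargmann (stellar) representation, in which the output state becomes an explicit polynomial, and then to show that the only Gaussian unitaries relating two such polynomials are phase rotations. Represent $\sum_m \psi_m \ket{m}$ by the entire function $f(z)=\sum_m \psi_m z^m/\sqrt{m!}$, so that $\hat{a}^\dagger \mapsto z$, $\hat{a}\mapsto \partial_z$, and $\ket{0}\mapsto 1$. The particle form of Theorem~\ref{thm:particle_form} then maps to $f_{s_0,\delta_0}(z)=(z+s_0\partial_z+\delta_0)^n\cdot 1$. Using the single-commutator identity $z+s_0\partial_z=e^{(s_0/2)\partial_z^2}\,z\,e^{-(s_0/2)\partial_z^2}$, I would rewrite this as $f_{s_0,\delta_0}(z)=e^{(s_0/2)\partial_z^2}(z+\delta_0)^n$, a monic degree-$n$ polynomial equal to a rescaled Hermite polynomial. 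A short rescaling computation ($w=z+\delta_0=\sqrt{s_0}\,u$) then identifies its zero set as $\{-\delta_0-i\sqrt{s_0}\,\rho_k\}_{k=1}^n$, where $\rho_1,\dots,\rho_n$ are the real, simple zeros of $\mathrm{He}_n$, which are symmetric about the origin and sum to zero. Hence the zeros are collinear, centered at $-\delta_0$, with imaginary-direction spread scaled by $\sqrt{s_0}$ (degenerating to an $n$-fold zero at $-\delta_0$ when $s_0=0$).

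The engine of the proof is the following lemma: \emph{for any nonzero polynomial $f$, a Gaussian unitary $\hat{U}$ sends $f$ to a polynomial if and only if $\hat{U}=e^{i\chi}\hat{R}(\theta)$ is a phase rotation.} I would prove it by Bloch--Messiah, writing $\hat{U}=\hat{D}(\alpha)\hat{R}(\phi)\hat{S}(r)\hat{R}(\psi)$ in terms of displacement $\hat{D}$, rotation $\hat{R}$, and squeezing $\hat{S}(r)$. On Bargmann functions the squeezer acts as $g\mapsto e^{\sigma z^2}\,(e^{\kappa\partial_z^2}g)(\lambda z)$ with $\sigma=-\tfrac12\tanh r$; if $r\neq 0$ the image carries a genuine order-two Gaussian envelope $e^{\sigma z^2}$ and cannot be a polynomial, forcing $r=0$. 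With $r=0$ we have $\hat{U}=\hat{D}(\alpha)\hat{R}(\theta)$ acting as $f\mapsto e^{\alpha z-|\alpha|^2/2}f(e^{-i\theta}(z-\bar\alpha))$, which is polynomial only if $\alpha=0$. Thus $\hat{U}=e^{i\chi}\hat{R}(\theta)$.

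Given $\ket{\psi}_{s_0,\delta_0,n}\gaussprop\ket{\psi}_{s_0',\delta_0',n}$, both Bargmann functions are polynomials, so the lemma yields $f_{s_0,\delta_0}(z)\propto f_{s_0',\delta_0'}(e^{-i\theta}z)$. The zero set of the right-hand side is $\{-e^{i\theta}\delta_0'-i\,e^{i\theta}\sqrt{s_0'}\,\rho_k\}$. Matching centroids (using $\sum_k\rho_k=0$) gives $\delta_0=e^{i\theta}\delta_0'$; matching the centered configurations $\{-i\sqrt{s_0}\,\rho_k\}$ and $\{-i\,e^{i\theta}\sqrt{s_0'}\,\rho_k\}$ and using that $\{\rho_k\}$ has scaling-symmetry group exactly $\{\pm 1\}$ gives $e^{i\theta}\sqrt{s_0'}=\pm\sqrt{s_0}$. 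If $s_0>0$ then $s_0'>0$ and $e^{i\theta}\in\{\pm 1\}$, whence $s_0=s_0'$ and $\delta_0=\pm\delta_0'$; if $s_0=0$ then $s_0'=0$ while $\theta$ is free, giving only $|\delta_0|=|\delta_0'|$. This reproduces the two stated alternatives.

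The main obstacle is the lemma: I must pin down that \emph{any} nonzero squeezing imprints a true order-two envelope on the Bargmann function, so that polynomiality is preserved only by number-conserving Gaussian unitaries. This is where the precise metaplectic action of the squeezer and a growth-order argument are essential, and it is precisely the point a naive ``affine action on zeros'' heuristic would miss (such a heuristic wrongly suggests all $s_0>0$ are equivalent). The remaining steps---the centroid/spread extraction and the $\{\pm 1\}$ scaling symmetry of the Hermite zeros---are routine, but they are exactly where the hypothesis $n\geq 2$ enters: for $n\leq 1$ the polynomial is $z+\delta_0$, independent of $s_0$, so the spread carries no information and uniqueness genuinely fails.
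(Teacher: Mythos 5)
Your proof is correct, but it takes a genuinely different route from the paper's, in both of its halves. The paper's proof pivots on exactly the structural fact you isolate as your lemma --- that the only Gaussian unitaries mapping a finite Fock superposition to another finite Fock superposition are phase rotations --- but it simply \emph{cites} this from the stellar-rank literature, whereas you prove it from scratch via the Bloch--Messiah decomposition and a growth-order argument on Bargmann functions (one small presentational point: strictly you must also note that the order-two envelope $e^{\sigma z^2}$ created by $\hat{S}(r)$ survives the subsequent rotation and displacement, since those only conjugate and shift the quadratic exponent; this is the one-line addition your own remark about growth order already anticipates). After the reduction to $\ket{\psi}^{(\mathrm{p})}_{s_0,\delta_0,n}\propto e^{i\theta\hat{n}}\ket{\psi}^{(\mathrm{p})}_{s_0',\delta_0',n}$, the two arguments diverge: the paper compares the ratios of the top three Fock coefficients of the particle form, $\braket{n-1}{\psi}/\braket{n}{\psi}=\delta_0\sqrt{n}$ and $\braket{n-2}{\psi}/\braket{n}{\psi}=(s_0+\delta_0^2)\sqrt{n(n-1)}/2$, which give the two equations $\delta_0=e^{i\theta}\delta_0'$ and $s_0+\delta_0^2=e^{2i\theta}(s_0'+\delta_0'^2)$, hence $s_0=e^{2i\theta}s_0'$ and the stated dichotomy; you instead identify the Bargmann polynomial with the heat-flowed monomial $e^{(s_0/2)\partial_z^2}(z+\delta_0)^n$ and match zero sets, using the reality, simplicity, reflection symmetry, and scaling rigidity of the Hermite zeros. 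What each buys: the paper's coefficient comparison is shorter and needs no facts about Hermite zeros, but leans on an external citation for the crucial lemma; your version is self-contained and yields a geometric picture (zeros collinear, centroid $-\delta_0$, spread $\propto\sqrt{s_0}$) that makes the role of the hypothesis $n\geq 2$ transparent --- for $n\leq 1$ the zero configuration carries no spread information, which is the same degeneracy the paper sees in the vanishing of the $\ket{n-2}$ coefficient.
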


\begin{proof}
    See Appendix \ref{sec:proof_uniqueness}.
\end{proof}

This shows that for $n \geq 2$, the control parameters $(s_0,\delta_0)$ can be regarded as intrinsic properties of the output states, even though they were originally defined through state generators. By contrast, when $n=1$, the particle form of the output state is
\begin{align}
    \ket{\psi}_{s_0,\delta_0,1}^{(\mathrm{p})}
    &\propto \ket{1} + \delta_0\ket{0} \\
    &\propto \hat{R}(\theta)\!\left(\ket{1} + |\delta_0|\ket{0}\right),
\end{align}
where $\hat{R}(\theta)$ is the phase-rotation operator. Thus, in this case the state is specified by a single real parameter $|\delta_0|$.
\subsection{Classification of the output state}\label{sec:classification}
We now classify the output states of two-mode non-Gaussian state generators according to the regime of the non-Gaussian control parameters $(s_0,\delta_0)$. Representative examples of non-Gaussian states for different values of $(s_0,\delta_0)$ are shown in Fig.~\ref{fig:gps_landscape}.
\subsubsection{\texorpdfstring{$s_0>1$}{s₀>1}: photon-subtracted state}
When $s_0>1$, there exists a Gaussian unitary $\hat{U}$ such that
\begin{align}
    \hat{U}^\dagger \hat{a} \hat{U} \propto \hat{a}^\dagger + s_0 \hat{a} + \delta_0.
\end{align}
Hence, from Theorem~\ref{thm:particle_form}, we obtain
\begin{align}
    \ket{\psi}_{s_0,\delta_0,n} \gaussprop \hat{a}^n \hat{U}\ket{0},
\end{align}
which shows that the state is, up to a Gaussian unitary, an $n$-photon-subtracted Gaussian state.

In particular, when $\delta_0=0$, the state reduces to a photon-subtracted squeezed vacuum state, which is well known to approximate the Schrödinger cat state
\begin{align}
    \ket{\mathrm{cat}_{\pm}(\alpha)} \propto \ket{\alpha} \pm \ket{-\alpha},
\end{align}
for large $n$~\cite{cat_state,GPS}. This can be seen from the wavefunction representation. For real $\alpha$, the $x$-wavefunction of the cat state $\ket{\mathrm{cat}_{\pm}(i\alpha)}$ is
\begin{align}
    \braket{x}{\mathrm{cat}_{+}(i\alpha)} &\propto \cos(\alpha x)e^{-x^2/4}, \\
    \braket{x}{\mathrm{cat}_{-}(i\alpha)} &\propto \sin(\alpha x)e^{-x^2/4}.
\end{align}
On the other hand, from Theorem~\ref{thm:wave_form}, the wavefunction of the generated state is
\begin{align}
    \braket{x}{\psi}_{s_0,0,n}^{(\mathrm{w})} \propto \phi_n(x)\, e^{-s_0 x^2/4},
\end{align}
where $\phi_n(x)$ denotes the wavefunction of the Fock state $\ket{n}$. 
Using the semiclassical approximation of $\phi_n(x)$ around $x=0$~\cite{hermite_approx},
\begin{align}
    \phi_n(x)\appropto \cos(\sqrt{n+1/2}x-\frac{n}{2}\pi),
\end{align}
we obtain
\begin{align}
    \ket{\psi}_{s_0,0,n}^{(\mathrm{w})} \sim \hat{S}\ket{\mathrm{cat}_{\mathrm{sgn}((-1)^n)}\qty(i\sqrt{\qty(n+1/2)/s_0})},\label{eq:cat_approx}
\end{align}
with the squeezing factor
\begin{align}
    \hat{S}^\dagger\mqty(\hat{x}\\\hat{p})\hat{S}=\mqty(1/\sqrt{s_0}&0\\0& \sqrt{s_0})\mqty(\hat{x}\\\hat{p}).\label{eq:cat_approx_sqz}
\end{align}
\subsubsection{\texorpdfstring{$0 \leq s_0 < 1$}{0 ≤ s₀ < 1}: photon-added state}
When $0\leq s_0<1$, there exists a Gaussian unitary $\hat{U}$ such that
\begin{align}
    \hat{U}^\dagger \hat{a}^\dagger \hat{U}=\hat{a}^\dagger +s_0 \hat{a}+\delta_0.
\end{align}
Hence, from Theorem \ref{thm:particle_form}, we obtain
\begin{align}
    \ket{\psi}_{s_0,\delta_0,n}\gaussprop \hat{a}^{\dagger n}\hat{U}\ket{0},
\end{align}
which shows that the state is, up to a Gaussian unitary, an $n$-photon-added Gaussian state. 
In particular, $s_0=\delta_0=0$ corresponds to the Fock state.

When $s_0=0$ and $|\delta_0|$ is sufficiently large, the state provides a good approximation of the cubic phase state (CPS)~\cite{gkp,cps_konno}, defined as
\begin{equation}
\ket{\mathrm{CPS}} = e^{-i \hat{x}^3} \ket{p=0},
\end{equation}
as illustrated in Fig.~\ref{fig:gps_landscape}(b). As detailed in Appendix~\ref{sec:cps_derive}, the asymptotic relation
\begin{align}
    \ket{\psi}_{0,\delta_{0},n}^{(\mathrm{w})}
    \appropto \hat{D}(\tfrac{i}{2}p_0)\hat{S}(\tfrac{1}{3}\ln\gamma)\ket{\mathrm{CPS}}
\end{align}
holds for large $n$, where
\begin{align}
    p_0 = 2\sqrt{n+\tfrac{1}{2}},\gamma = \frac{1}{24\sqrt{n+\tfrac{1}{2}}}.
\end{align}

\subsubsection{\texorpdfstring{$s_0=1$}{s₀=1}: critical state}
At $s_0 = 1$, the state lies on the boundary between photon addition and subtraction:
\begin{align}
    \ket{\psi}_{1,\delta_0,n}^{(\mathrm{p})} \propto (\hat{x} + \delta_0)^n \ket{0}.
\end{align}
Its wavefunction is given by
\begin{align}
    \braket{x}{\psi_{1,\delta_0,n}^{(\mathrm{p})}} \propto (x + \delta_0)^n e^{-x^2/4}.
\end{align}
In particular, when $\delta_0 = 0$, this state still provides a good approximation of a cat state for large $n$, while achieving a larger amplitude than the photon-subtracted cases with $s_0 > 1$:
\begin{align}
    \ket{\psi}_{1,0,n}^{(\mathrm{w})} 
    \sim \ket{\mathrm{cat}_{\mathrm{sgn}((-1)^n)}\qty(i\sqrt{n+1/2})}. 
    \label{eq:cat_approx_critical}
\end{align}

Ref.~\cite{GPS} derived the condition $s_0 = 1$ for the generation of large-amplitude cat states, referring to it as the \emph{critical condition}, in the special case of two orthogonally squeezed vacuum states mixed by a phase-preserving beamsplitter. They showed that Eq.~\eqref{eq:cat_approx_critical} holds asymptotically for large $n$. Ref.~\cite{xanadu_architecture} later referred to this as the ``monomial condition.'' As a consequence of Theorem~\ref{thm:reduced_cparam}, we find that this condition is not restricted to that special case but in fact valid for any two-mode non-Gaussian state generator. In practice, however, Figure~\ref{fig:gps_evaluate}(b) shows the corresponding optimal $s_0$, which is slightly shifted towards smaller values compared to the critical condition $s_0=1$.

\section{Non-Gaussian control parameters as a resource}\label{sec:cparam_conversion}
\begin{figure*}[tbp]
    \centering
    \input{figures/optimize_methods_all_annotated_labels_overlay.tex}
    \caption{Principles for optimizing non-Gaussian state generators. 
(a) Increasing the success probability of a multimode state generator without changing its output state (Sec.~\ref{sec:damping_gps} for the two-mode case and Sec.~\ref{sec:multimode_damping} for the multimode case). 
A non-Gaussian state generator $(C,\bm{\beta},\bm{n})$ can be transformed into another generator $(C',\bm{\beta}',\bm{n})$ by \emph{virtually} applying rotation and damping operators (Theorems~\ref{thm:fock_equivalence_rot_gps}, \ref{thm:fock_equivalence_damp_gps}, and \ref{thm:fock_equivalence_damp}). These transformations leave the output state invariant while potentially increasing the success probability.
(b) Reducing the measured photon number for a two-mode state generator while approximately preserving the output state (Sec.~\ref{sec:approx_less_n}). 
In the wave-form representation (Theorem~\ref{thm:wave_form}), the output wavefunction $\braket{x}{\psi}_{s_0,\delta_0,n}^{\mathrm{(w)}}$ (iv) is given by the Fock-state wavefunction $\phi_n(x)$ (ii) multiplied by a Gaussian envelope (i) determined by $(s_0,\delta_0)$. 
To approximate this with a lower photon number $n'$, the Fock-state wavefunction $\phi_n(x)$ is replaced by $\phi_{n'}(kx-d)$ (iii), which provides a good approximation near the center of the envelope. 
By rescaling the envelope through suitable choice of $s_0'$ and $\delta_0'$, the resulting wavefunction $\braket{x}{\psi}_{s_0',\delta_0',n'}^{\mathrm{(w)}}$ (v) closely approximates the original wavefunction (iv) up to extra Gaussian unitaries.
(c) Reducing the measured photon number for a multi-mode state generator. 
For each control mode $m$ with photon number $n_m$ and partial control moment $(C_m,\bm\beta_m)$, a corresponding two-mode generator $(C_m,\bm\beta_m,n_m)$ can be extracted as a subsystem (Theorem~\ref{thm:extract_cat}). 
By iteratively approximating each subsystem using the two-mode technique and applying the corresponding Gaussian filter $\hat{M}_m$ (Eq.~\eqref{eq:Mm}), an approximation of the entire multi-mode state generator is obtained.}
\label{fig:opt_method_all}
\end{figure*}

In this section, we show that the non-Gaussian phase sensitivity $s_0$, introduced in the previous section, serves as a non-Gaussian resource for two reasons.

First, in the regime $s_0 > 1$, we show that $s_0$ is non-decreasing under Gaussian maps. This monotonicity is a fundamental property in resource theory~\cite{nongaussian_resource}. Second, a state with large $s_0 > 1$ can be approximated by a state with smaller $s_0$ and a smaller photon number $n$. This demonstrates that $s_0$ quantifies the reduction in non-Gaussianity relative to the amount expected from the stellar rank of the state.

The following subsections elaborate on these two points.
\subsection{Monotonicity of \texorpdfstring{$s_0$}{s₀}}
To establish $s_0$ as a proper non-Gaussian resource, it must satisfy the monotonicity property~\cite{nongaussian_resource}: namely, $s_0$ cannot be reduced solely through Gaussian operations. In this subsection, we demonstrate that $s_0$ indeed satisfies this property in the regime $s_0 > 1$.

We derive a condition under which $\ket{\psi}_{s_0,\delta_0,n}$ can be converted into $\ket{\psi}_{s_0',\delta_0',n}$ using only Gaussian maps. Such a transformation is impossible with Gaussian unitaries alone, as shown in Theorem~\ref{thm:gps_uniqueness}, however becomes possible when allowing more general Gaussian maps.

For example, from Theorem~\ref{thm:wave_form} we have
\begin{align}
    e^{-a\hat{x}^2}\ket{\psi}_{s_0,0,n}^{(\mathrm{w})} \propto \ket{\psi}_{s_0',0,n}^{(\mathrm{w})},
\end{align}
with
\begin{align}
    s_0' = s_0 + 4a.
\end{align}
Here the operator $e^{-a\hat{x}^2}$ with $a>0$ corresponds to a physically realizable Gaussian filter \cite{gaussian_filter}, which can be implemented, for example, by mixing with a vacuum mode on a beamsplitter followed by homodyne conditioning. This shows that, for $\delta_0 = 0$, it is always possible to implement a transformation that increases $s_0$.

The necessary and sufficient conditions for general convertibility are summarized in the following theorem.
\begin{restatable}{theorem}{gpsIrreversible}\label{thm:gps_irreversible}
For $n\geq 2$, a state $\ket{\psi}_{s_0,\delta_0,n}$ can be (probabilistically) transformed into $\ket{\psi}_{s_0',\delta_0',n}$ by a Gaussian map, if and only if at least one of the following conditions is satisfied:
\begin{align}
\begin{cases}
0 \leq s_0 < 1, \\
s_0' > s_0, \\
s_0 = s_0' = 1 \ \text{ and } \ \delta_{0x} = 0.
\end{cases}
\end{align}
\end{restatable}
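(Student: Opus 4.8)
The plan is to reduce the convertibility question to an algebraic statement about how a single-mode Gaussian operator acts on the particle-form generator $\hat{a}^\dagger + s_0\hat{a} + \delta_0$. I would first use the structural fact that a probabilistic Gaussian map sending one pure state to another factors, up to Gaussian unitaries on input and output, as a single Gaussian filter $\hat{F}$; since both states carry stellar rank exactly $n$ and Gaussian maps cannot increase stellar rank, $\hat{F}$ must be \emph{invertible} (a rank drop is otherwise forced). Absorbing the flanking Gaussian unitaries into the $\gaussprop$ equivalence via Theorem~\ref{thm:particle_form}, and using $\hat{a}\ket{0}=0$ together with $\hat{F}\ket{0}\gaussprop\ket{0}$, I push $\hat{F}$ through the generator with the conjugation identity
\begin{align}
\hat{F}\,\big(\hat{a}^\dagger + s_0\hat{a} + \delta_0\big)^n\ket{0} \;\propto\; \big(\hat{F}(\hat{a}^\dagger + s_0\hat{a} + \delta_0)\hat{F}^{-1}\big)^n\,\hat{F}\ket{0}.
\end{align}
The right-hand side is again a particle form with a new linear operator $\mu\hat{a}^\dagger + \nu\hat{a} + \rho$, so the whole problem collapses to: which triples $(\mu,\nu,\rho)$ are reachable from $(1,s_0,\delta_0)$ by a \emph{physical} Gaussian operator, with $(s_0',\delta_0')$ then read off in normalized form.

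Next I would analyze the homogeneous part. Conjugation by $\hat{F}$ realizes a complex symplectic transformation of $(\hat{a}^\dagger,\hat{a})$, under which the coefficient ratio transforms by a M\"obius map $s_0 \mapsto s_0' = (r + s_0 p)/(s + s_0 q)$ with $sp-rq=1$. The crucial step is isolating which of these descend from bounded, completely positive filters rather than arbitrary unphysical $Sp(2,\mathbb{C})$ elements; the constraint is that the filter's quadratic form have positive-semidefinite real part, $\operatorname{Re} G \ge 0$. I would show this forces, for $s_0>1$, the monotone inequality $s_0'\ge s_0$ (this is precisely the monotonicity of $s_0$ asserted above), with equality only in the Gaussian-unitary case already excluded by Theorem~\ref{thm:gps_uniqueness}; conversely every $s_0'>s_0$ is hit by an explicit filter, generalizing the worked example $e^{-a\hat{x}^2}:\,s_0\to s_0+4a$ and the damping filter $e^{-\lambda\hat{n}}$. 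For $0\le s_0<1$ I would exhibit bounded squeezing-type filters (packaged so that $\operatorname{Re} G\ge 0$) that realize a \emph{decrease} of $s_0$ as well, establishing that the full range $s_0'\ge 0$ is accessible---the content behind the first listed condition.

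For the inhomogeneous part, once $s_0'$ lies in the admissible range I would use the displacement-like freedom---the constant $\rho$ generated by $\delta_0$ and by the linear part of $\hat{F}$, together with $\hat{F}\ket{0}$ being an arbitrary Gaussian pure state---to reach any target $\delta_0'$, so that $\delta_0$ is irrelevant away from the boundary. The delicate case is $s_0=1$, the self-dual point where the physicality constraint leaves no slack to strictly contract, so the homogeneous part must be a genuine Gaussian unitary and only displacement-type operations survive. Working in the wave form, where $\ket{\psi}_{1,\delta_0,n}\gaussprop(\hat{x}+\delta_0)^n\ket{0}$ up to a fixed Gaussian envelope (Theorem~\ref{thm:wave_form}), I would show these residual operations move $\delta_0$ only along the imaginary ($\delta_{0p}$) direction, so that a target with $s_0'=1$ is reachable exactly when $\delta_{0x}=0$, matching the third condition; a passage $s_0=1\to s_0'<1$ is barred by the same boundary monotonicity argument.

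Finally I would assemble necessity from the monotonicity inequality plus uniqueness (which rules out nontrivial equal-$s_0$ conversions when $s_0\ge 1$), and sufficiency from the explicit filter constructions in each regime. The main obstacle I anticipate is the middle step: rigorously converting the operator-level physicality condition $\operatorname{Re} G\ge 0$ (equivalently, boundedness/complete positivity of $\hat{F}$) into the exact trichotomy for the reachable $s_0'$, and in particular proving the full flexibility claimed for $0\le s_0<1$---that even decreasing $s_0$ is achievable with a \emph{bounded} filter---together with the sharp critical-point condition $\delta_{0x}=0$ at $s_0=1$. The monotone half for $s_0>1$ is comparatively direct; the achievability half and the boundary bookkeeping are where the real work lies.
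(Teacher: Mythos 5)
Your route is genuinely different from the paper's. The paper never analyzes filter conjugation on the signal side: it first proves a general control-mode convertibility criterion (Theorem~\ref{thm:conversion_cond} in Appendix~\ref{sec:proof_conversion_cond}), namely that Gaussian maps on the signal mode realize exactly those transformations of the control moments with $C'\leq C$ and $\bm{\beta}-\bm{\beta}'\in\mathrm{Im}(C-C')$, proved via the \choijam{} representation, the canonical form, and Williamson's decomposition; the trichotomy is then read off by a case analysis over diagonal $C$, exploiting the rotation/damping freedom in representing a given $(s_0,\delta_0)$. Your plan---factor the map into a single invertible Gaussian filter $\hat{F}$ and push it through the particle form---is a legitimate alternative in outline: pure-to-pure probabilistic Gaussian maps do act on the given state as a Gaussian filter, and rank-one filters are excluded because they output Gaussian states of stellar rank $0<n$. (You invoke this factorization rather than prove it, but that gap is fillable by a purification argument.)

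The genuine gap is in your middle step, and as formulated it would derail the proof rather than merely leave work undone. Your identity produces $\bigl(\hat F(\hat a^\dagger+s_0\hat a+\delta_0)\hat F^{-1}\bigr)^n\hat F\ket{0}$, and you then treat this as a particle form because $\hat F\ket{0}\gaussprop\ket{0}$. But $\hat F\ket{0}\propto\hat U_G\ket{0}$ for a nontrivial Gaussian unitary $\hat U_G$, and $(\mu\hat a^\dagger+\nu\hat a+\rho)^n\hat U_G\ket{0}$ is \emph{not} Gaussian-unitarily equivalent to $(\mu\hat a^\dagger+\nu\hat a+\rho)^n\ket{0}$: commuting $\hat U_G$ out through the polynomial changes the coefficients. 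Consequently $s_0'$ is not the normalized ratio of the conjugated operator, and classifying Möbius maps by physicality of $\hat F$ alone cannot work---Gaussian \emph{unitaries} are physical, and their Möbius action is transitive on $\{|z|>1\}$, so your criterion would conclude that any $s_0'>1$ is reachable from any $s_0>1$, contradicting both the theorem and Theorem~\ref{thm:gps_uniqueness}. The fix is to classify instead the operator $\hat K=\hat U_G^{-1}\hat F$, which fixes the vacuum ray; this forces $\hat K\hat a\hat K^{-1}=c\hat a$, hence a triangular conjugation matrix $N=\mqty(c&0\\e&1/c)$ and an \emph{affine} action $z\mapsto c^2z+ce$ on $z=s_0$. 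Physicality then takes the sharp form $N^\dagger\mathcal{J}N\geq\mathcal{J}$ with $\mathcal{J}=\mathrm{diag}(1,-1)$, i.e.\ $|c|\geq1$ and $|e|\leq(|c|^2-1)/|c|$, whence $s_0'=|c^2s_0+ce|\geq|c|^2(s_0-1)+1$; this yields monotonicity for $s_0\geq1$ and full flexibility for $0\leq s_0<1$. The same machinery is also needed for the parts you defer: the vacuum-fixing displacement factors $e^{z\hat a}$ that move $\delta_0$ are unbounded and become physical only when dominated by damping that strictly increases $s_0$, which is exactly why $\delta_0$ is rigid when $s_0=s_0'>1$ and why only the $\delta_{0x}=0$ case survives at $s_0=s_0'=1$. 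Until the vacuum-anchoring reduction and these estimates are carried out, neither direction of the equivalence is established.
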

\begin{proof}
See Appendix~\ref{sec:proof_conversion_cond}.
\end{proof}
This structure of convertibility is depicted in Fig.~\ref{fig:gps_landscape}. The region $0\leq s_0< 1$ includes the Fock state $s_0=\delta_0 = 0$, and within this region, all states are mutually convertible. In contrast, when $s_0 \geq 1$, $s_0$ can only be converted to a state with an equal or larger $s_0$. In other words, $s_0$ is monotonic with respect to Gaussian maps in the regime $s_0 \leq 1$. This irreversible structure characterizes the role of $s_0$ as a non-Gaussian resource.

\subsection{Approximation with lower stellar rank}\label{sec:approx_less_n}

When $s_0\geq 1$ and $\delta_0 = 0$, the state closely approximates the cat state, as shown in Sec.~\ref{sec:classification}. The amplitude of the cat state is an indicator of non-Gaussianity, as it determines the number of interference fringes in the Wigner function. However, from Eq.~\eqref{eq:cat_approx}, this amplitude depends on both the measured photon number $n$ and $s_0$. This implies that the non-Gaussianity of the state cannot be fully captured by $n$ alone, which represents the stellar rank \cite{stellar_rank} of the state. To accurately quantify non-Gaussianity, both $s_0$ and $n$ must be taken into account.

The following proposition states that a state with large $s_0$ can be effectively approximated by a state with a smaller $s_0$ and a smaller photon number $n$.

\begin{proposition}\label{prop:c0_approx}
    For given non-Gaussian control parameters $(s_0,\delta_0)$, measured photon number $n$, and any $n' < n$, if $s_0$ is sufficiently large, then there exist parameters $s_0' < s_0$ and $\delta_0'$ such that
    \begin{align}
        \ket{\psi}_{s_0',\delta_0',n'} \gausssimprop \ket{\psi}_{s_0,\delta_0,n},
    \end{align}
    that is, $\ket{\psi}_{s_0',\delta_0',n'}$ provides a good approximation of $\ket{\psi}_{s_0,\delta_0,n}$ up to a Gaussian unitary.
\end{proposition}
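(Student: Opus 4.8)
The plan is to work entirely in the quadrature (wavefunction) picture supplied by the wave form of Theorem~\ref{thm:wave_form}, where the approximation mechanism is transparent (cf.\ Fig.~\ref{fig:opt_method_all}(b)). Writing out $\braket{x}{\psi}^{(\mathrm{w})}_{s_0,\delta_0,n}$, the two filters generated by $\delta_0$ combine with $e^{-s_0\hat{x}^2/4}$ into a single Gaussian envelope of width $\propto 1/\sqrt{s_0}$ centered at a point $x_0$ fixed by $\delta_0$, multiplied by the Fock wavefunction $\phi_n(x)$ up to an overall linear phase. The decisive observation is that as $s_0$ grows this envelope becomes arbitrarily narrow, so the generated state is governed entirely by the behavior of $\phi_n$ in a shrinking neighborhood of $x_0$.

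First I would invoke the semiclassical form of the Fock wavefunction near $x_0$, namely $\phi_n(x)\appropto \cos\big(\kappa(x_0)(x-x_0)+\varphi\big)$ with local momentum $\kappa(x_0)=\sqrt{n+\tfrac12-x_0^2/4}$ (the $x_0=0$ case is exactly the approximation already used in Eq.~\eqref{eq:cat_approx}). In the narrow window selected by the envelope, only this single local oscillation frequency matters. The key step is then to reproduce the same local frequency with a \emph{lower} Fock wavefunction: I would replace $\phi_n$ by a rescaled and shifted $\phi_{n'}\big(k(x-x_0)+d\big)$, choosing the scale $k$ together with the center $x_0'$ of the $n'$ problem so that $k\,\kappa'(x_0')=\kappa(x_0)$, where $\kappa'$ is the local momentum of $\phi_{n'}$, and using the shift $d$ to match the oscillation phase $\varphi$ (this absorbs the $n$-versus-$n'$ parity offset in $\cos(\cdots-n\pi/2)$). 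Since $\kappa'(x_0')$ can be tuned below $\kappa(x_0)$ by the choice of $x_0'$ (controlled through $\delta_0'$), this matching forces $k>1$.

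Next I would recognize the rescaling $x\mapsto kx$ as a squeezing unitary and the shift/linear phase as displacements, so the entire manipulation is implemented by a Gaussian unitary, consistent with the $\gausssimprop$ relation. The same squeezing rescales the Gaussian envelope, and matching it to the target fixes
\begin{align}
    s_0' = \frac{s_0}{k^2} < s_0,
\end{align}
the strict inequality being precisely the asserted $s_0'<s_0$; in the symmetric case $x_0=0$ this reads $s_0'=s_0\,(n'+\tfrac12)/(n+\tfrac12)$, which keeps the emergent cat amplitude $\sqrt{(n+\tfrac12)/s_0}$ invariant---a useful consistency check against Eq.~\eqref{eq:cat_approx}. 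The residual shift and boost are then read off as $\delta_0'$ through the wave-form dictionary.

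The hard part will be the error estimate: showing that matching only the leading local frequency makes the normalized overlap between $\braket{x}{\psi}^{(\mathrm{w})}_{s_0',\delta_0',n'}$ and $\braket{x}{\psi}^{(\mathrm{w})}_{s_0,\delta_0,n}$ tend to unity as $s_0\to\infty$. I would bound three sources of deviation across the envelope width $\sim 1/\sqrt{s_0}$: the slow variation of the WKB amplitude $1/\sqrt{\kappa(x)}$, the second-order (curvature) correction to the WKB phase---part of which is itself a quadratic chirp absorbable into the Gaussian unitary freedom---and the amplitude-prefactor mismatch between $\phi_n$ and the rescaled $\phi_{n'}$. Each deviation is $O(1/\sqrt{s_0})$ or smaller over the support of the envelope, so the overlap is $1-O(1/s_0)$ up to the fixed Gaussian unitary and can be driven above any prescribed fidelity by taking $s_0$ large enough. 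A secondary bookkeeping obstacle is handling fully general complex $\delta_0$ uniformly---both the position shift from $\delta_{0p}$ and the momentum boost from $\delta_{0x}$---and tracking the classically allowed constraint $|x_0|<2\sqrt{n+\tfrac12}$ so that $\kappa(x_0)$ remains real and a valid $x_0'$ with $\kappa'(x_0')<\kappa(x_0)$ exists.
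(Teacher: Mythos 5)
Your proposal follows essentially the same route as the paper: the wave-form representation (Theorem~\ref{thm:wave_form}) exhibits the state as $\phi_n(x)$ under a Gaussian envelope of width $\sim 1/\sqrt{s_0}$ centered at $x_0\propto\delta_{0p}$, and a semiclassical (WKB) matching of the local momentum replaces $\phi_n(x)$ by $\phi_{n'}(kx-d)$, giving $s_0'=s_0/k^2<s_0$ with the rescaling and shift absorbed into a squeezing-plus-displacement Gaussian unitary, exactly as in Sec.~\ref{sec:approx_less_n} and Appendix~\ref{sec:approx_with_disp}. The only substantive difference is that you sketch an $O(1/\sqrt{s_0})$ error analysis aiming at a rigorous fidelity bound, whereas the paper deliberately stops short of quantitative bounds, supporting the proposition analytically only for $\delta_0=0$ with matching parity (via the cat-state asymptotics, Eq.~\eqref{eq:cat_approx}) and numerically/heuristically for $\delta_0\neq 0$.
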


Although Proposition \ref{prop:c0_approx} does not provide an exact bound on the precision of the approximation, here we provide evidences to support this proposition, analytically in the case of $\delta_0 = 0$, and numerically in the case of $\delta_0\neq 0$.

First we consider the cases without displacement ($\delta_0=0$). From the approximation with the cat state Eq.~\eqref{eq:cat_approx}, when $n$ and $n'$ share the same parity (i.e., $n - n'$ is even), using
\begin{align}
s_0' = \frac{2n'+1}{2n+1} s_0,
\end{align}
we have the approximation:
\begin{align}
    \ket{\psi}_{s_0,0,n}^{(\mathrm{w})} \sim \hat{S}\ket{\psi}_{s_0',0,n'}^{(\mathrm{w})},\label{eq:approx_cat_less_n}
\end{align}
where the squeezing factor is:
\begin{align}
    \hat{S}^\dagger\mqty(\hat{x}\\\hat{p})\hat{S} = \mqty(\sqrt{\frac{s_0'}{s_0}} & 0\\0 & \sqrt{\frac{s_0}{s_0'}})\mqty(\hat{x}\\\hat{p}).\label{eq:approx_cat_less_n_sqz}
\end{align}
This result indicates that a state with large $s_0$ can be approximated by smaller photon number by reducing $s_0$.

Using the wave form representation (Theorem~\ref{thm:wave_form}), this approximation extends to states with displacement ($\delta_0\neq 0$). Figure~\ref{fig:opt_method_all}(b) illustrates this mechanism. From the wave form, we have:
\begin{align}
    \begin{split}
    \ket{\psi}_{s_0, \delta_0, n}^{(\mathrm{w})} \propto 
    &\exp\qty(-\frac{1}{2\sqrt{s_0+1}}\delta_{0x} \hat{p})\\
    &\cdot\exp\left[-\frac{s_0}{4}\qty(\hat{x} - \frac{\sqrt{s_0+1}}{s_0}\delta_{0p})^2\right] \ket{n}.
    \end{split}\label{eq:wave_form_centered}
\end{align}
In the $x$-representation, the operator:
\begin{align}
\exp[-\frac{s_0}{4}\qty(\hat{x} - \frac{\sqrt{s_0+1}}{s_0}\delta_{0p})^2]
\end{align}
acts as a filter, effectively ``cutting out'' the wavefunction around
\begin{align}
    x = x_0 = \frac{\sqrt{s_0+1}}{s_0}\delta_{0p}.
\end{align}

Around this point, a semiclassical approximation \cite{wkb} yields
\begin{align}
    \phi_n(x) \appropto \phi_{n'}(kx - d), \label{eq:wavefunc_approximation}
\end{align}
where the parameters $k$ and $d$ are chosen so that the local behavior of $\phi_n(x)$ near $x_0$ is matched by $\phi_{n'}$. A heuristic construction of such $k$ and $d$ for arbitrary $x_0$ is given explicitly in Appendix~\ref{sec:approx_with_disp}. Using these values, we define the Gaussian unitary corresponding to the coordinate transformation as
\begin{align}
    \hat{U}_{s_0',\delta_0',n'\to s_0,\delta_0,n}
    = \hat{S}(\log k)\,\hat{D}(d), \label{eq:approx_unitary_def}
\end{align}
where $\hat{S}(\zeta)=\exp[\tfrac{1}{2}(\zeta^* \hat{a}^2 - \zeta \hat{a}^{\dagger 2})]$ and $\hat{D}(\alpha)=\exp(\alpha \hat{a}^\dagger - \alpha^* \hat{a})$ denote the squeezing and displacement operators, respectively.

After rescaling the envelope, we obtain the approximation
\begin{align}
    \ket{\psi}_{s_0,\delta_0,n}^{(\mathrm{w})} \appropto \hat{U}_{s_0',\delta_0',n'\to s_0,\delta_0,n}\ket{\psi}_{s_0',\delta_0',n'}^{(\mathrm{w})},\label{eq:approx_wave}
\end{align}
with transformed control parameters
\begin{align}
    s_0' &= \frac{1}{k^2} s_0,\\
    \delta_{0}' &=\sqrt{\frac{s_0+1}{s_0+k^2}}(k^2\delta_{0x}+i\delta_{0p})-i\frac{s_0 d}{k\sqrt{s_0+k^2}}.
\end{align}

In Appendix~\ref{sec:approx_with_disp}, we outline a general method for determining $k$ and $d$. In the specific case where $\delta_0 = 0$ and $n = n' \mod 2$, they can be analytically derived as:
\begin{align}
    k = \sqrt{\frac{2n+1}{2n'+1}}, \quad d = 0,
\end{align}
corresponding to the approximation Eq.~\eqref{eq:approx_cat_less_n}.

This approximation can also be expressed as
\begin{align}
    {}_\idler\!\braket{n}{G}\appropto{}_\idler\!\bra{n'}\qty(\hat{I}\otimes \hat{M}_{C,\bm{\beta},n\to s_0',\delta_0',n'})\ket{G},\label{eq:idler_operator}
\end{align}
using a Gaussian filter $\hat{M}_{C,\bm{\beta},n\to s_{0}',\delta_{0}',n'}$, whose detailed expression is derived in Appendix~\ref{sec:approx_with_disp}. The corresponding transformation of the control moments can be computed via the \choijam{} isomorphism (Appendix~\ref{sec:choi_mat}), which we denote concisely as
\begin{align}
(C,\bm{\beta})
\;\mapsto\;
\hat{M}_{C,\bm{\beta},n\to s_{0}',\delta_{0}',n'}(C,\bm{\beta}),
\end{align}
with a slight abuse of notation. This indicates that the necessary change of the control moments to achieve the approximation can be conveniently computed by virtually applying the Gaussian filter to the control mode. This representation will be useful in Sec.~\ref{sec:multi-mode}, where we discuss the general cases of multiple control modes.
\section{Multi-Mode Non-Gaussian State Generators}\label{sec:multi-mode}

Thus far, we have focused on two-mode non-Gaussian state generators. In this section, we generalize our framework to \emph{multi-mode} non-Gaussian state generators including multiple photon number measurements, as illustrated in Fig.~\ref{fig:gbs}(c). This setup is sometimes referred to as Gaussian Boson Sampling (GBS)~\cite{gbs} in the context of computational complexity, where the system exhibits exponential classical simulation cost.

In this scenario, an $(l + k)$-mode pure Gaussian state $\ket{G}$ is prepared, where the last $k$ modes (the control modes) are projected onto Fock states $\bigotimes_m \ket{n_m}$. This measurement probabilistically generates a non-Gaussian state in the first $l$ modes (the signal modes), whose stellar rank is at most $\sum_{m=1}^{k} n_m$ \cite{gbs_nongauss}. This scheme can produce a much broader range of non-Gaussian states compared to two-mode generators, enabling both multi-mode output states and single-mode states with more complex phase-space structures. An important single-mode example is the scheme for GKP state generation~\cite{xanadu_gkp,gaussian_breeding}, which will be discussed later in Sec.~\ref{sec:optimization}.

The control-mode representation and the non-Gaussian control parameters introduced in Sec.~\ref{sec:two-mode} extend naturally to this multi-mode setting, enabling systematic characterization and optimization of more complex non-Gaussian state generators.
\subsection{Control-mode representation}\label{sec:idler_rep_multi}
In the setup of Fig.~\ref{fig:gbs}(c), we define $\bm{q}_{\signal}=(x_{1\signal},p_{1\signal},\dots,x_{l\signal},p_{l\signal})^T$,$\bm{q}_\idler=(x_{1\idler},p_{1\idler},\dots,x_{k\idler},p_{k\idler})^T$ as the quadrature-operator vectors of the signal and control modes, respectively, and set $\bm{q} = (\bm{q}_\signal^T, \bm{q}_\idler^T)^T$. The mean vector $\bm\gamma \in \mathbb{R}^{2(l+k)}$ and covariance matrix 
$\Sigma \in \mathbb{R}^{2(l+k)\times 2(l+k)}$ are then defined as in Eqs.~\eqref{eq:mean_def} and~\eqref{eq:cov_def}. They admit the block decomposition
\begin{align}
    \Sigma = \begin{pmatrix} A & B^T \\ B & C \end{pmatrix}, 
    \qquad
    \bm{\gamma} = \begin{pmatrix} \bm{\alpha} \\ \bm{\beta} \end{pmatrix},
    \label{eq:sigma}
\end{align}
where $A \in \mathbb{R}^{2l\times 2l}$, $B \in \mathbb{R}^{2k\times 2l}$, $C \in \mathbb{R}^{2k\times 2k}$, and $\bm{\alpha}\in\mathbb{R}^{2l}$, $\bm{\beta}\in\mathbb{R}^{2k}$. Here, $C$ needs to satisfy the uncertainty relation
\begin{align}
    C\geq i\Omega^{\otimes k}.\label{eq:uncertainty}
\end{align}

Similarly to the case of two-mode non-Gaussian state generators, we can define the control-mode representation of this non-Gaussian state generator using the following theorem.
\begin{restatable}{theorem}{thmpurificationUniqueness}\label{thm:purificationUniqueness}
An arbitrary pure $l+k$-mode Gaussian state $\ket{G}$ can be expressed as the following \emph{canonical form}
\begin{align}
    \begin{split}
    \ket{G}&=\qty(\hat{U}_\signal\otimes \hat{U}_\idler)\\&\cdot\qty[\qty(\bigotimes_{m=1}^r\ket{\mathrm{TMSS}(a_m)}_{m,l+m}) \otimes \ket{0}_{\signal}^{\otimes l-r}\otimes \ket{0}_{\idler}^{\otimes k-r}],
    \end{split}\label{eq:standard_form}
\end{align}
where $a_1\geq\dots\geq a_r>1,\dots,1$ are the symplectic eigenvalues \cite{williamson_decomp} of $C$, which are unique for a given $C$ (Fig.~\ref{fig:gbs}(d)). $r$ is called Schmidt rank of $C$. Here,
\begin{align}
    \ket{\mathrm{TMSS}(a)}_{mm'}=\frac{2\sqrt{a}}{a+1}\sum_{j=0}^{\infty}\qty(\frac{a-1}{a+1})^{j}\ket{j}_m\ket{j}_{m'}\label{eq:tmss_def}
\end{align}
is a two-mode squeezed state, and $\hat{U}_\signal$ and $\hat{U}_\idler$ are Gaussian unitary operators acting on the signal modes and the control modes, respectively. $\hat{U}_\idler$ can be chosen depending only on $C,\bm\beta$.
\end{restatable}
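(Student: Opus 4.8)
The plan is to treat $\ket{G}$ as a purification of its reduced Gaussian state on the $k$ control modes and to exploit the essential uniqueness of Gaussian purifications, generalizing the strategy behind the two-mode Theorem~\ref{thm:purificationUniquenessGps}. Let $\rho_\idler$ be the control-mode reduced state obtained by tracing out the $l$ signal modes; by construction it has mean $\bm\beta$ and covariance $C$, and since $\ket{G}$ is pure it is a purification of $\rho_\idler$ with the signal modes as ancilla. I would proceed in three steps: (i) standardize $\rho_\idler$ with a Gaussian unitary $\hat{U}_\idler$ depending only on $(C,\bm\beta)$; (ii) exhibit the reference state on the right-hand side of Eq.~\eqref{eq:standard_form} as a purification of the same standardized state; and (iii) conclude that the two purifications differ by a Gaussian unitary $\hat{U}_\signal$ on the signal modes.

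For step (i), I would invoke Williamson's decomposition~\cite{williamson_decomp}: there is a symplectic $S$ with $S C S^T = \bigoplus_{m=1}^k \mathrm{diag}(a_m,a_m)$, where $a_1 \geq \dots \geq a_k$ are the symplectic eigenvalues of $C$, uniquely determined by $C$. The uncertainty relation~\eqref{eq:uncertainty} forces every $a_m \geq 1$, and I set $r$ to be the number strictly exceeding $1$, so $a_{r+1}=\dots=a_k=1$. Realizing $S$ as a Gaussian unitary and composing with the displacement that cancels $\bm\beta$ produces $\hat{U}_\idler$, depending only on $(C,\bm\beta)$, such that $\hat{U}_\idler^\dagger \rho_\idler \hat{U}_\idler$ is the zero-mean thermal state of covariance $\bigoplus_m \mathrm{diag}(a_m,a_m)$, i.e.\ a product of $r$ single-mode thermal states with $a_m>1$ and $k-r$ vacua.

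For steps (ii) and (iii), I would first verify that the control-mode reduction of the reference state in Eq.~\eqref{eq:standard_form} equals precisely this standardized thermal state: each factor $\ket{\mathrm{TMSS}(a_m)}_{m,l+m}$ from Eq.~\eqref{eq:tmss_def} reduces on its control mode to the thermal state of covariance $\mathrm{diag}(a_m,a_m)$, and the $k-r$ spare control modes are vacua with unit symplectic eigenvalue. Thus $(\hat{I}_\signal \otimes \hat{U}_\idler^\dagger)\ket{G}$ and the reference state are pure Gaussian states on the same $l+k$ modes with identical control-mode reductions. I would then apply the Gaussian uniqueness of purification~\cite{gaussian_qi,nielsen_chuang}: two such states are related by a Gaussian unitary acting only on the $l$ signal modes, which is a genuine unitary (not merely an isometry) because both use the same $l$ ancilla modes. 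This gives $(\hat{I}_\signal \otimes \hat{U}_\idler^\dagger)\ket{G} = (\hat{U}_\signal \otimes \hat{I}_\idler)\,\ket{\mathrm{ref}}$, equivalent to Eq.~\eqref{eq:standard_form}, with $\hat{U}_\idler$ depending only on $(C,\bm\beta)$.

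The main obstacle is upgrading the purification freedom from an arbitrary unitary to a Gaussian one in step (iii): the abstract Nielsen--Chuang argument guarantees only some unitary relating the two purifications, so establishing that it can be chosen Gaussian requires the symplectic/covariance-matrix formulation, precisely where the two-mode Theorem~\ref{thm:purificationUniquenessGps} and the cited Gaussian-information results carry the weight. A secondary point is dimensional bookkeeping: one must confirm the Schmidt-rank bound $r \leq \min(l,k)$ (automatic since $\ket{G}$ is pure on $l+k$ modes) and correctly account for the $l-r$ decoupled signal vacua, so that the relating map is exactly unitary on all $l$ signal modes rather than an embedding into a larger space.
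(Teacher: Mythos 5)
Your proposal is correct in structure and does deliver the stated refinement, but it takes a genuinely different route from the paper at the decisive step. You and the paper share the opening move: Williamson-diagonalize $C$ and absorb $\bm\beta$ into a displacement, so that $\hat{U}_\idler$ is manifestly a fixed function of $(C,\bm\beta)$; your observation that the refinement then comes for free (fix $\hat{U}_\idler$ first, then push all residual freedom onto the signal side) is exactly the right logical order, and your bookkeeping of $r\leq\min(l,k)$ via the matching of nontrivial symplectic eigenvalues across the bipartition mirrors the paper's ``signal modes share the Schmidt coefficients'' step. The divergence is that you then invoke Gaussian purification uniqueness as a black box, whereas the paper's Appendix~\ref{sec:proof_Cmat} proves that fact from scratch: purity is imposed as symplecticity of $\Sigma$, giving the block conditions \eqref{eq:symp_cond1}--\eqref{eq:symp_cond3}; from these the off-diagonal block is shown to have the form $B=Z^{\otimes k}(D^2-1)^{1/2}S$ with $S$ symplectic, and the condition $BD\Omega^{\otimes k}=-D\Omega^{\otimes k}B$ forces $S$ to be block-diagonal over the degenerate symplectic eigenvalues and orthogonal, hence $S^TDS=D$, so all residual freedom can be absorbed into $\hat{U}_\signal$. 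That computation is precisely the content of the uniqueness fact you cite, and the paper carries it out deliberately: as remarked just before the proof, Refs.~\cite{duan_simon_duan,duan_simon_simon} give the statement ``in principle'' but do not make the $(C,\bm\beta)$-dependence of $\hat{U}_\idler$ explicit, so the authors wanted a self-contained argument. Your route buys brevity and makes transparent why the refinement follows formally from the plain uniqueness statement; the paper's route buys self-containedness. One caveat to fix: you cannot lean on Theorem~\ref{thm:purificationUniquenessGps} for the ``Gaussian upgrade'' of the purification unitary, as your last paragraph suggests---in this paper that two-mode statement is itself derived as a special case of the present theorem, so that crutch is circular. You must either cite the multimode Gaussian purification uniqueness from the external literature or prove it via the symplectic block analysis, which is what the paper does.
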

\begin{proof}
    See Appendix \ref{sec:proof_Cmat}.
\end{proof}

Theorem \ref{thm:purificationUniqueness} allows us to represent the non-Gaussian state generator by the control-mode representation $(C, \bm{\beta}, \bm{n},\hat{U}_\signal)$, similarly to the two-mode case. We have the following corollary, which is the multi-mode analogue of Cor.~\ref{cor:idler_rep_gps}.
\begin{cor}\label{cor:idler_rep}
For a multi-mode non-Gaussian state generator $(C, \bm{\beta}, \bm{n}, \hat{U}_\signal)$:
\begin{itemize}
    \item The output non-Gaussian state is determined by $(C,\bm{\beta},\bm{n})$, up to Gaussian unitary operations.
    \item The success probability of the state generation is determined by $(C,\bm{\beta},\bm{n})$, which we denote as $p_{\bm{n}}(C,\bm\beta)$.
\end{itemize}
\end{cor}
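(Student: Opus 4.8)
The plan is to derive this as a direct multi-mode analogue of the proof of Corollary~\ref{cor:idler_rep_gps}, with the canonical form of Theorem~\ref{thm:purificationUniqueness} supplying all the analytic content. The single structural fact I will exploit is that the signal-side Gaussian unitary $\hat{U}_\signal$ acts on a tensor factor disjoint from the control modes: it therefore commutes through the control-mode projection and, being unitary, leaves norms invariant. Everything else is bookkeeping on the tensor structure of Eq.~\eqref{eq:standard_form}.

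For the first bullet I would begin from the heralded state $\ket{\psi}\propto {}_\idler\!\braket{\bm{n}}{G}$ and insert the canonical form $\ket{G}=(\hat{U}_\signal\otimes\hat{U}_\idler)\ket{\mathrm{core}}$, writing $\ket{\mathrm{core}}$ for the product of two-mode squeezed states $\ket{\mathrm{TMSS}(a_m)}$ and vacuum modes in Eq.~\eqref{eq:standard_form}. Since $\hat{U}_\signal$ is supported on the signal modes alone, it can be pulled outside the control-mode bra,
\begin{align}
    \ket{\psi}\propto \hat{U}_\signal\,\big({}_\idler\!\bra{\bm{n}}\hat{U}_\idler\otimes\hat{I}_\signal\big)\ket{\mathrm{core}}.
\end{align}
The surviving factor depends only on $\bm{n}$, on $\hat{U}_\idler$, and on the symplectic eigenvalues $a_m$; by Theorem~\ref{thm:purificationUniqueness} both $\hat{U}_\idler$ and the $a_m$ are fixed by $(C,\bm\beta)$. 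Hence $\ket{\psi}$ equals a Gaussian unitary $\hat{U}_\signal$ applied to a vector determined by $(C,\bm\beta,\bm{n})$, which is exactly the claim up to Gaussian unitaries.

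For the second bullet I would take norms in the same expression. Because $\hat{U}_\signal$ is unitary it drops out, leaving
\begin{align}
    p_{\bm{n}}(C,\bm\beta)=\norm{\big({}_\idler\!\bra{\bm{n}}\hat{U}_\idler\otimes\hat{I}_\signal\big)\ket{\mathrm{core}}}^2,
\end{align}
which again is a function of $(C,\bm\beta,\bm{n})$ alone. In particular, two generators sharing the same control moments and outcome differ only in the freedom residing in $\hat{U}_\signal$, so they yield identical success probabilities and output states coinciding up to a Gaussian unitary.

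I do not expect a genuine obstacle here, since the hard work is entirely absorbed into Theorem~\ref{thm:purificationUniqueness}; the corollary is routine once that canonical form is in hand. The only point deserving care is confirming that the Gaussian freedom of $\ket{G}$ at fixed $(C,\bm\beta)$ is precisely the signal-mode unitary $\hat{U}_\signal$ (the theorem pinning $\hat{U}_\idler$ and the $a_m$ to $(C,\bm\beta)$), so that it neither alters the probability nor affects the output state beyond an allowed Gaussian unitary—this is immediate from the tensor-product structure of Eq.~\eqref{eq:standard_form}.
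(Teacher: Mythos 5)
Your proposal is correct and follows essentially the same route as the paper, which proves the corollary directly from the canonical form of Theorem~\ref{thm:purificationUniqueness}: the paper's proof is a one-line citation of that theorem, and your write-up merely makes explicit the routine steps (pulling $\hat{U}_\signal$ through the control-mode projection ${}_\idler\!\bra{\bm{n}}$ and using its unitarity for the probability) that the paper leaves implicit, exactly as in its two-mode counterpart Corollary~\ref{cor:idler_rep_gps}.
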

\begin{proof}
    It follows from Theorem \ref{thm:purificationUniqueness}.
\end{proof}
Note that the number of output modes $l$ is not solely determined by $(C, \bm{\beta}, \bm{n})$, but the states with the same $(C, \bm{\beta}, \bm{n})$ and different $l$ are convertible by adding vacuum ancillary modes and applying a Gaussian unitary operation.
\subsection{Damping transformation}\label{sec:multimode_damping}
As shown by Theorem~\ref{thm:fock_equivalence_damp_gps}, inserting the damping operator Eq.~\eqref{eq:damping_trans} before the photon-number measurement does not change the output state but alters the probability of success. By generalizing this to multiple control modes, it can be seen that inserting the damping operators to all modes (Fig.~\ref{fig:opt_method_all}(a)) does not change the output state, only changing the success probability. The explicit form of the transformation of the state generator is given by the following theorem.

\begin{restatable}{theorem}{thmfockEquivalenceDamp}\label{thm:fock_equivalence_damp}
For a real vector $\bm{t}=(t_1,\dots,t_k)^T$, the following \emph{damping transformation} of the control-mode representation
\begin{align}
\begin{split}
    (C,\bm{\beta})\to &\mathcal{D}_{\bm{t}}(C,\bm{\beta}) \\
    :=(&T-\sqrt{T^2-1}(C+T)^{-1}\sqrt{T^2-1},\\
    &\sqrt{T^2-1}(C+T)^{-1}\bm{\beta})
\end{split}\label{eq:damping_trans}
\end{align}
does not change the output state up to a unitary Gaussian operation, while changing the success probability $p_{\bm{n}}(C,\bm\beta)$.

Here, $T$ is a diagonal matrix with paired entries
\begin{align}
    T=\mathrm{diag}(t_1,t_1,\dots,t_k,t_k).\label{eq:tdef}
\end{align}
\end{restatable}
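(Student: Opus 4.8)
The plan is to realize the multi-mode damping transformation by inserting the Fock-diagonal operator $\hat\Gamma_{\bm t} = \bigotimes_{m=1}^{k}\hat\Gamma(\lambda_m)$, with $\lambda_m = \coth^{-1} t_m$, on the control modes immediately before the photon-number measurement, exactly as in the two-mode case (Theorem~\ref{thm:fock_equivalence_damp_gps}). Writing $\ket{G'} \propto (\hat I_\signal \otimes \hat\Gamma_{\bm t})\ket{G}$, the output-state invariance and the change in success probability follow immediately from Fock-diagonality: since $\bra{\bm n}\hat\Gamma_{\bm t} = \qty(\prod_m e^{-\lambda_m n_m})\bra{\bm n}$, projecting $\ket{G'}$ onto $\ket{\bm n}$ returns a state proportional to ${}_\idler\!\braket{\bm n}{G}$, while the scalar prefactor merely rescales the norm and hence $p_{\bm n}$. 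By Corollary~\ref{cor:idler_rep}, it then remains only to show that the control moments of $\ket{G'}$ equal $\mathcal D_{\bm t}(C,\bm\beta)$ as in Eq.~\eqref{eq:damping_trans}.

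The key simplification I would exploit is that $\hat\Gamma_{\bm t}$ acts only on the control modes and is Hermitian, so it commutes with the partial trace over the signal modes. Hence the control-mode reduced state transforms as $\rho'_\idler \propto \hat\Gamma_{\bm t}\,\rho_\idler\,\hat\Gamma_{\bm t}$, where $\rho_\idler = \mathrm{Tr}_\signal\ketbra{G}{G}$ is the Gaussian state with covariance $C$ and mean $\bm\beta$. This reduces the problem to a self-contained $k$-mode computation and, in particular, shows that $(C',\bm\beta')$ depends on $(C,\bm\beta)$ and $\bm t$ alone---never on the signal block $A$ or the correlations $B$---which is exactly what justifies writing the transformation purely on the control moments.

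The main calculation is then to evaluate the action of $\hat\Gamma_{\bm t}$ on this $k$-mode Gaussian state. Using $\hat n_m = \tfrac14(\hat x_m^2 + \hat p_m^2) - \tfrac12$, I would write $\hat\Gamma_{\bm t} \propto \exp\qty(-\tfrac14\, \hat{\bm q}_\idler^T \Lambda\, \hat{\bm q}_\idler)$ with $\Lambda = \mathrm{diag}(\lambda_1,\lambda_1,\dots,\lambda_k,\lambda_k)$, and apply the standard rule for a Gaussian state conjugated by such a Gaussian filter operator. This yields a linear-fractional transformation of the covariance whose characteristic building block is $(C+T)^{-1}$, with $T$ as in Eq.~\eqref{eq:tdef}; carrying out the matrix algebra together with the mean transformation should collapse to precisely Eq.~\eqref{eq:damping_trans}. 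As an organizing device and a consistency check I would also verify the compact Cayley form: under $\tilde C = (C+I)^{-1}(C-I)$ the damping becomes the congruence $\tilde C' = \sqrt{\tilde T}\,\tilde C\,\sqrt{\tilde T}$ with $\tilde t_m = (t_m-1)/(t_m+1) = e^{-2\lambda_m}$, generalizing the two-mode scaling $\tilde C' = \tilde t\,\tilde C$; and specializing to $k=1$, $T = tI$ recovers $(tC+I)(C+tI)^{-1}$, matching Theorem~\ref{thm:fock_equivalence_damp_gps}.

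The hard part will be the bookkeeping in this last step: tracking operator ordering in $e^{-\lambda\hat n}$ (the quadratures do not commute), and showing that the matrix manipulations with the paired block structure of $T$ and $\Lambda$ genuinely telescope into the compact $(C+T)^{-1}$ form rather than a messier expression. A secondary issue to treat carefully is the domain of validity---ensuring $C+T$ is invertible and that $\mathcal D_{\bm t}(C)$ remains a legitimate covariance---which generalizes the two-mode condition $t>1$ or $tI<-C$ and delimits the physical ($\lambda_m>0$) versus analytically continued regimes.
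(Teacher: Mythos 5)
Your core argument coincides with the paper's own proof (Appendix~\ref{sec:proof_fock_equivalence_damp}): insert the Fock-diagonal operator $\bigotimes_m e^{-\lambda_m \hat{n}_m}$ with $t_m=\coth\lambda_m$ on the control modes just before the detectors, read off invariance of the output state and rescaling of $p_{\bm n}$ from Fock-diagonality, and then compute the induced transformation of $(C,\bm\beta)$ by Gaussian calculus. Your packaging of that last computation --- reducing to $\rho'_\idler \propto \hat\Gamma_{\bm t}\,\rho_\idler\,\hat\Gamma_{\bm t}$ and conjugating the $k$-mode Gaussian state by $\exp(-\tfrac14\,\hat{\bm q}_\idler^T\Lambda\,\hat{\bm q}_\idler)$ --- is a clean equivalent of the paper's route through the \choijam{} data of the damping map (Eq.~\eqref{eq:damping_cov_mean} together with Eqs.~\eqref{eq:A_trans}--\eqref{eq:beta_trans}); both amount to the same Gaussian integration, and both make $(C',\bm\beta')$ manifestly independent of the blocks $A$ and $B$. (Your worry about operator ordering is moot: $\hat{n}_m$ is a single Hermitian operator, so $e^{-\lambda_m\hat n_m}\propto e^{-\frac{\lambda_m}{4}(\hat x_m^2+\hat p_m^2)}$ holds exactly.)

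The genuine gap is the domain of $\bm t$, which you defer as ``a secondary issue'' but which is where the paper's proof does most of its work. Your insertion argument is valid only when every $\lambda_m>0$, i.e.\ $t_m>1$: for $t_m<-1$ the operator $e^{-\lambda_m\hat n_m}=e^{|\lambda_m|\hat n_m}$ is unbounded and $(\hat I\otimes\hat\Gamma_{\bm t})\ket{G}$ need not be normalizable, so there is nothing to ``insert before the measurement.'' Yet the theorem is needed --- and is used in Step~2 of the optimization algorithm, where $p_{\bm n}$ is maximized over $\bm t$ --- precisely on the larger domain with sign-indefinite $T$. The paper closes this with an idea your proposal lacks: the composition law $\mathcal{D}_{T_1}(\mathcal{D}_{T_2}(C,\bm\beta))=\mathcal{D}_{T_1\circ T_2}(C,\bm\beta)$ with $T_1\circ T_2=(T_1T_2+I)(T_1+T_2)^{-1}$. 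Given $T$ with $|T|>I$ whose image $\mathcal{D}_{\bm t}(C,\bm\beta)$ satisfies the uncertainty relation Eq.~\eqref{eq:uncertainty}, choose a small \emph{physical} damping $T'=(1+\epsilon)I$ so that $T\circ T'>I$; then $(C,\bm\beta)$ and $\mathcal{D}_{\bm t}(C,\bm\beta)$ are each carried by physical dampings to the common generator $\mathcal{D}_{T\circ T'}(C,\bm\beta)$, hence produce the same output state. The splitting $T=T_+\oplus T_-$ into positive and negative parts then yields the explicit admissibility conditions (equivalently $|T|>I$ and $\Pi_-(C+T)^{-1}\Pi_-<0$, or simply $\mathcal{D}_{\bm t}(C)>0$). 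Without this step, your proof establishes the statement only for $t_m>1$ for all $m$, not for the full transformation Eq.~\eqref{eq:damping_trans} as the theorem intends.
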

The domain of $\bm{t}$ is given in Appendix \ref{sec:proof_fock_equivalence_damp}.
\begin{proof}
It suffices to show that the damping operation transforms $(C,\bm\beta)$ according to Eq.~\eqref{eq:damping_trans}. The detailed proof including the condition on $\bm{t}$ is given in Appendix \ref{sec:proof_fock_equivalence_damp}.
\end{proof}

Note that, under the Cayley transform
\begin{align}
\tilde{C}&=(C+I)^{-1}(C-I),\label{eq:cayley_1}\\
\tilde{\beta}&=(C+I)^{-1}\beta,\\
\tilde{T}&=(T+I)^{-1}(T-I),\label{eq:cayley_3}
\end{align}
Eq.~\eqref{eq:damping_trans} can be simplified as
\begin{align}
\mathcal{D}_{\bm{t}}(C,\bm\beta)&=(C',\bm{\beta}'),\label{eq:damping_cayley_1}\\
\tilde{C}'=\sqrt{\tilde{T}}&\tilde{C}\sqrt{\tilde{T}},\tilde{\beta}'=\sqrt{\tilde{T}}\tilde{\beta}.\label{eq:damping_cayley_2}
\end{align}
(See Appendix~\ref{sec:cayley-transform} for the proof.)

In Sec.~\ref{sec:optimization}, we use this degree of freedom to maximize the success probability for given photon numbers $\bm{n}$.
\subsection{Non-Gaussian control parameters}\label{subsec:control_multi}
Finally, we extend the definition of non-Gaussian control parameters, previously defined for two-mode state generators in Sec.\ref{sec:c0_gps}, to general multi-mode cases. Let $C_m, \bm\beta_m$ be the partial matrix and vector of $C,\bm\beta$, corresponding to the $m$-th control mode:
\begin{align}
    C_{m}&=(C_{ij})_{i=2m-1,2m,j=2m-1,2m},\label{eq:partial_cparam_C}\\
    \bm{\beta}_m&=(\bm{\beta}_{i})_{i=2m-1,2m}.\label{eq:partial_cparam_beta}
\end{align}
The non-Gaussian control parameters are defined as vectors $\bm{s}_{0}\in \mathbb{R}^{l}$ and $\bm{\delta}_{0}\in \mathbb{C}^{l}$, whose $m$-th components $s_{0m}$ and $\delta_{0m}$ are defined similarly to the two-mode case(Eqs.~\eqref{eq:reduced_cparam_c0} and \eqref{eq:reduced_cparam_beta0}):
\begin{align}
    s_{0m} &= \frac{d_m - c_m}{c_m d_m - 1},\\
    \delta_0 &= \frac{2}{\sqrt{c_m d_m-1}} \qty(\sqrt{\frac{d_m+1}{c_m+1}}\,\bar{\beta}_{mx} - i \sqrt{\frac{c_m+1}{d_m+1}}\,\bar{\beta}_{mp}). \label{eq:partial_reduced_cparam_beta0}
\end{align}

Here, $c_m \geq d_m$ are the eigenvalues of $C_m$, and we define $\bar{\beta}_{xm}$ and $\bar{\beta}_{pm}$ as
\begin{equation}
    O\bm\beta_m = \mqty(\bar{\beta}_{xm} \\ \bar{\beta}_{pm}),
\end{equation}
using the orthogonal matrix $O$ diagonalizing $C_m$ as
\begin{equation}
    C_m = O^T\mqty(c_m & 0 \\ 0 & d_m)O.
\end{equation}

Note that this definition of non-Gaussian control parameters is not invariant under the damping transformation. When invariance is required, one can instead adopt the \emph{invariant non-Gaussian control parameters} $\bm{\tilde{s}}_0, \bm{\tilde{\delta}}_0$, defined in Appendix~\ref{sec:invariant_s0}.

The following theorem is key to reducing measured photon numbers in multi-mode cases.
\begin{theorem}\label{thm:extract_cat}
The output non-Gaussian state of a multi-mode non-Gaussian state generator can be written as
\begin{equation}
    \left(\bigotimes_{m' \neq m}\bra{n_{m'}}_{m'}\right)\hat{U}\left(\bigotimes_{m' \neq m}\ket{0}_{m'}\right) \otimes \ket{\psi}_{c_{0m}, \beta_{0m}, n_m}^{(\mathrm{p})}\label{eq:extract_cat}
\end{equation}
as illustrated in Fig.~\ref{fig:opt_method_all}(c), where $\hat{U}$ is a Gaussian unitary operator acting only on the modes other than the $m$-th mode.
\end{theorem}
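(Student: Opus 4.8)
The plan is to isolate the single control mode $m$ from the rest of the generator by a Schmidt-type Gaussian decomposition, and then to recognize the heralding on that one mode as an instance of the two-mode theory already established.

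First I would view $\ket{G}$ as a pure Gaussian state across the bipartition $\{m\}$ versus the remaining $l+k-1$ modes, treating mode $m$ as the \emph{only} control mode and every other mode (the $l$ signal modes together with the $k-1$ other control modes) as ``signal.'' Applying Theorem~\ref{thm:purificationUniqueness} in this bipartition, and using that a single mode has at most one symplectic eigenvalue exceeding one, so the Schmidt rank satisfies $r\leq 1$, gives
\begin{align}
\ket{G}=\big(\hat{U}_{\mathrm{rest}}\otimes\hat{U}_m\big)\Big[\ket{\mathrm{TMSS}(a_m)}_{j,m}\otimes\ket{0}^{\otimes(l+k-2)}\Big],
\end{align}
where $j$ is one of the remaining modes, $\hat{U}_{\mathrm{rest}}$ is a Gaussian unitary on all modes other than $m$, and $\hat{U}_m$ acts on mode $m$. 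The essential point is that Theorem~\ref{thm:purificationUniqueness} guarantees $a_m$ and $\hat{U}_m$ to depend only on the reduced moments of mode $m$, that is, on the partial control moments $(C_m,\bm{\beta}_m)$ of Eqs.~\eqref{eq:partial_cparam_C}--\eqref{eq:partial_cparam_beta}. (If mode $m$ is uncorrelated with the rest, then $r=0$ and the claim is immediate.)

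Next I would herald on mode $m$. Since $\hat{U}_{\mathrm{rest}}$ and the vacua commute past $\bra{n_m}_m$, the bracketed factor $\bra{n_m}_m\hat{U}_m\ket{\mathrm{TMSS}(a_m)}_{j,m}$ left on mode $j$ is exactly the output of a two-mode non-Gaussian state generator with control moments $(C_m,\bm{\beta}_m)$ and photon number $n_m$. By the particle form (Theorem~\ref{thm:particle_form}) this equals $\ket{\psi}_{s_{0m},\delta_{0m},n_m}^{(\mathrm{p})}$ up to a Gaussian unitary $\hat{W}_j$ on mode $j$, where $(s_{0m},\delta_{0m})$ are precisely the $m$-th non-Gaussian control parameters built from $(C_m,\bm{\beta}_m)$. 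Absorbing $\hat{W}_j$ and $\hat{U}_{\mathrm{rest}}$ into a single Gaussian unitary $\hat{U}\coloneqq\hat{U}_{\mathrm{rest}}(\hat{W}_j\otimes\hat{I})$, supported only on the modes other than $m$, and recalling that the unpaired modes are already in vacuum, yields
\begin{align}
\bra{n_m}_m\ket{G}\propto\hat{U}\Big[\ket{\psi}_{s_{0m},\delta_{0m},n_m}^{(\mathrm{p})}\otimes\ket{0}^{\otimes(l+k-2)}\Big].
\end{align}
Projecting the remaining control modes $\{m'\neq m\}$ onto their Fock states $\ket{n_{m'}}$ then reproduces Eq.~\eqref{eq:extract_cat}.

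The main obstacle, and the only step requiring genuine care, is the bookkeeping in the heralding stage: one must verify that the two-mode generator extracted by projecting mode $m$ carries control moments \emph{exactly} equal to the partial moments $(C_m,\bm{\beta}_m)$, so that its particle form is governed by $(s_{0m},\delta_{0m})$ rather than by some transformed parameters. This is exactly where the clause of Theorem~\ref{thm:purificationUniqueness} stating that $\hat{U}_\idler$ may be chosen depending only on $(C,\bm{\beta})$ does the work, since it certifies that mode $m$ can be isolated using only its own reduced moments, with the TMSS strength and control-side unitary untouched by the rest of the system. The remaining effort is purely notational: tracking which physical mode $j$ carries the particle state and collecting the various Gaussian unitaries into the single $\hat{U}$ supported away from mode $m$.
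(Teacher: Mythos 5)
Your proposal is correct and follows essentially the same route as the paper's proof: apply Theorem~\ref{thm:purificationUniqueness} to the bipartition separating mode $m$ from all other modes, obtain the $\ket{\mathrm{TMSS}(a_m)}\otimes\ket{0}^{\otimes(l+k-2)}$ decomposition with $\hat{U}_m$ depending only on $(C_m,\bm{\beta}_m)$, herald on mode $m$, and invoke the particle form (Theorem~\ref{thm:particle_form}) before absorbing all Gaussian unitaries on the remaining modes into a single $\hat{U}$. Your added remarks (Schmidt rank $r\leq 1$ for a single mode, the trivial $r=0$ case, and the explicit justification that the extracted two-mode generator carries exactly the partial moments $(C_m,\bm{\beta}_m)$) are correct elaborations of steps the paper leaves implicit.
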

\begin{proof}
Let $\ket{G}$ be the Gaussian state for the multi-mode non-Gaussian state generator. By interpreting only the $m$-th control mode as the control mode and the remaining modes as signal modes and applying Theorem~\ref{thm:purificationUniqueness}, we obtain
\begin{align}
    \ket{G}=\qty(\hat{U}'\otimes \hat{U}_m) \otimes \qty(\ket{0}^{\otimes k+l-2}\otimes \ket{\mathrm{TMSS}(a_m)} )
\end{align}
where $\hat{U}_m$ is a Gaussian unitary operator acting only on the $m$-th control mode, and $\hat{U}'$ acts on the other modes. Since
\begin{align}
    \bra{n_m}\hat{U}_m\ket{\mathrm{TMSS}(a_m)}\propto \hat{U}_\signal\ket{\psi}_{c_{0m}, \beta_{0m}, n_m}^{(\mathrm{p})}
\end{align}
for some $\hat{U}_\signal$, by defining $\hat{U}=\hat{U}'\hat{U}_\signal$, Eq.~\eqref{eq:extract_cat} can be obtained.
\end{proof}

Theorem~\ref{thm:extract_cat} states that the two-mode state generator with non-Gaussian phase sensitivity parameters $s_{0m},\delta_{0m}$ can be extracted as a subsystem of the total state generator. Thus, applying the approximation of $\ket{\psi}_{s_{0m},\delta_{0m},n_m}^{(\mathrm{p})}$ obtained in Sec.~\ref{sec:approx_less_n}, an approximation of $\ket{\psi}_{C,\bm\beta,\bm{n},\hat{U}_\signal}$ can be obtained. Note that the closeness of the approximation of $\ket{\psi}_{s_{0m},\delta_{0m},n_m}^{(\mathrm{p})}$ does not guarantee the closeness of $\ket{\psi}_{C,\bm\beta,\bm{n},\hat{U}_\signal}$, as the remaining system involves projections to Fock states, which are non-unitary processes. Nevertheless, this approximation is practically useful, as demonstrated in the examples in Sec.~\ref{sec:optimization}.

When the approximation for the $m$-th control mode is applied, the corresponding transformation of the system can be described by a (virtual) application of the Gaussian filter
\begin{equation}
    \hat{M}_m=\hat{M}_{C_m, \bm\beta_m, n_m \to s_{0m}', \delta_{0m}', n_m'}\label{eq:Mm}
\end{equation}
defined in Eq.~\eqref{eq:idler_operator} to the $m$-th control mode,whose exact form can be computed via the \choijam{} isomorphism (Appendix~\ref{sec:choi_mat}). Again, note that this change can be experimentally implemented simply by modifying the initial Gaussian state $\ket{G}$ prior to the photon-number measurement, without actually performing the Gaussian filter operation \cite{gaussian_filter}.
By sequencially applying this ``approximation filter'' for all the control modes, we can obtain a new non-Gaussian state generator with reduced measured photon numbers $\bm{n}'$, but approximating the original output state. Note that the the result of this process may depend on the order of the control modes in which the approximations are applied.
\section{Optimization of non-Gaussian state generators}\label{sec:optimization}
As an application of our formalism based on the non-Gaussian control parameters $(s_0,\delta_0)$, we propose a method to find a non-Gaussian state generator that produces approximately the same output state while requiring detection of fewer photons and achieving a higher success probability. Our method applies to general multi-mode non-Gaussian state generators as in Fig.~\ref{fig:gbs}(c).
The optimization procedure is summarized in Fig.~\ref{fig:algorithm}, and its key principles are illustrated in Fig.~\ref{fig:opt_method_all}.

\begin{figure}[!htb]
  \centering
  \hrule
  \vspace{0.5ex}
  \begin{algorithmic}[0]
    \Require{\\Initial state generator $(C,\bm{\beta},\bm{n},\hat{U}_{\mathrm{s}})$.\\Target photon numbers $\bm{n}'$.}
    \Ensure{\\Optimized state generator $(C',\bm{\beta}',\bm{n}',\hat{U}'_{\mathrm{s}})$}
    \Statex

    \State \textbf{Step 1: Reduce detected photon numbers}
    \For{$m = 1$ \textbf{to} $k$}
      \State Get $(s_{0m}, \delta_{0m})$ \Comment{Sec.~\ref{subsec:control_multi}}
      \State Choose $(s_{0m}', \delta_{0m}')$
      \State \hspace{1em} s.t.\ $\ket{\psi}^{(\mathrm{w})}_{s_{0m},\delta_{0m},n_m}
        \gausssim \ket{\psi}^{(\mathrm{w})}_{s_{0m}',\delta_{0m}',n_m'}$ \Comment{Sec.~\ref{sec:approx_less_n}}
      \State $(C,\bm{\beta}) \gets
        \hat{M}_{C_m,\bm{\beta}_m,\,n_m \to s_{0m}',\,\delta_{0m}',\,n_m'}(C,\bm{\beta})$ \Comment{Eq.~\eqref{eq:idler_operator}}
    \EndFor
    \Statex

    \State \textbf{Step 2: Maximize success probability}
    \State Find $\bm{t}^* \gets \arg\max_{\bm{t}} p_{\bm{n}}\qty(\mathcal{D}_{\bm{t}}(C,\bm{\beta}))$ \Comment{Eq.~\eqref{eq:damping_trans}}
    \State $(C',\bm{\beta}') \gets \mathcal{D}_{\bm{t}^*}(C,\bm{\beta})$
    \Statex

    \State \textbf{Return} $(C',\bm{\beta}',\bm{n}',\hat{U}'_{\mathrm{s}})$
  \end{algorithmic}
  \vspace{0.5ex}
  \hrule
  \caption{Algorithm for optimizing non-Gaussian state generators.}
  \label{fig:algorithm}
\end{figure}

The algorithm takes as input an arbitrary non-Gaussian state generator $(C,\bm\beta,\bm{n},\hat{U}_\signal)$ of the form in Fig.~\ref{fig:gbs}(c) together with target photon numbers $\bm{n}'$ satisfying $n_m'\leq n_m, m=1,\dots,k$. It then outputs a new state generator $(C',\bm\beta',\bm{n}',\hat{U}'_\signal)$ with the target photon numbers and an improved success probability. The output state of the new generator closely approximates that of the original.

Our protocol consists of two parts: the reduction of detected photon numbers, based on the approximation in Sec.~\ref{sec:approx_less_n}, and the maximization of success probability, based on the damping operation in Sec.~\ref{sec:multimode_damping}, as described below. Importantly, these procedure does not require a full simulation of the state generation process. Instead, it only involves manipulating the control moments, particularly the submatrix associated with a single control mode. This makes our approach significantly more tractable than conventional brute-force methods that explore the full parameter space.
\subsection{Reduction of measured photon numbers}
The first step reduces the measured photon numbers from $\bm{n}$ to the target $\bm{n}'$, while preserving the output state as faithfully as possible. 
For each control mode $m = 1,\dots,k$, we first compute its non-Gaussian control parameters $(s_{0m},\delta_{0m})$ as defined in Sec.~\ref{subsec:control_multi}. 
By Theorem~\ref{thm:extract_cat}, a two-mode non-Gaussian state generator with parameters $(s_{0m},\delta_{0m})$ can be extracted as a subsystem, as shown in Fig.~\ref{fig:opt_method_all}(c). 

Given this subsystem and the target photon number $n_m' \leq n_m$ for the corresponding control mode, we choose new parameters $(s_{0m}',\delta_{0m}')$ such that the resulting wavefunction approximates the original one up to a Gaussian unitary,:
\begin{align}
\ket{\psi}_{s_{0m},\delta_{0m},n_m}
\gausssimprop
\ket{\psi}_{s_{0m}',\delta_{0m}',n_m'},
\end{align}
using the correspondence established in Sec.~\ref{sec:approx_less_n} (see Fig.~\ref{fig:opt_method_all}(b)). 
The corresponding two-mode generator is then replaced by its reduced counterpart with $(s_{0m}',\delta_{0m}',n_m')$, which amounts to updating the control moments via Eq.~\eqref{eq:idler_operator}, thereby modifying $(C,\bm{\beta})$. The new control moment is calculated via the transformation
\begin{align}
    (C,\bm\beta)\mapsto\hat{M}_m (C,\bm\beta),
\end{align} 
defined in Eq.~\eqref{eq:Mm}.
Iterating this procedure over all control modes yields a reduced state generator $(C'',\bm{\beta}'',\bm{n}',\hat{U}''_\signal)$ whose output state approximates that of the original generator but requires fewer detected photons.

\subsection{Maximization of success probability}
The second step maximizes the success probability $p_{\bm{n}'}$ of detecting the reduced photon number $\bm{n}'$. 
This is accomplished using the multimode damping transformation $\mathcal{D}_{\bm{t}}$ (Sec.~\ref{sec:multimode_damping}, Fig.~\ref{fig:opt_method_all}(a)). 
We optimize over the damping parameters $\bm{t}$ to find
\begin{align}
\bm{t}^* = \arg\max_{\bm{t}} \; p_{\bm{n}'}\!\left(\mathcal{D}_{\bm{t}}(C'',\bm{\beta}'')\right),
\end{align}
and apply the optimal transformation, yielding updated control moments
\begin{align}
(C',\bm{\beta}') = \mathcal{D}_{\bm{t}^*}(C'',\bm{\beta}'').
\end{align}

The resulting generator $(C',\bm{\beta}',\bm{n}',\hat{U}'_\signal)$ produces exactly the same output state as $(C'',\bm{\beta}'',\bm{n}',\hat{U}''_\signal)$, but with strictly higher success probability. 

Thus, the algorithm faithfully preserves the target state while making the generation process experimentally more efficient.

\subsection{Examples}\label{sec:optimization_example}
In this section, we demonstrate the performance of our optimization algorithm on several representative and practically important non-Gaussian state generation tasks. The results are summarized in Fig.~\ref{fig:optimize_results}. The key metrics include the measured photon number $\bm{n}$ and the success probability $p_{\bm{n}}$ before and after optimization, the corresponding change in the non-Gaussian control parameters $(s_0,\delta_0)$, and the fidelity between the initial and optimized output states. For cat states, CPS, and GKP states, we also evaluate tailored metrics collectively referred to as the non-Gaussian squeezing $\xi$ \cite{cat_nls,cps_nls,gkp_squeezing}. Numerical simulations of the state generation circuits were carried out using the open-source software MrMustard~\cite{mrmustard}.

In all cases, we observe a reduction of the measured photon number and an enhancement of the success probability, while maintaining a high fidelity ($>90\%$) and keeping the non-Gaussian squeezing nearly invariant. The details of the construction of the examples, the definitions of the metrics, and the optimization procedure are provided in Appendix~\ref{sec:opt_detail}, which also includes the intermediate state generator obtained after the first step of photon-number reduction. Notably, as a general trend, we find that a noticeable enhancement of the success probability already occurs in this first step, almost without increasing the required squeezing. This behavior can be attributed to the exponentially decaying nature of thephoton-number distribution of Gaussian states \cite{fiurasekMaximumHeralding}.

\begin{figure*}[tbp]
  \centering
  \renewcommand{\arraystretch}{1.25}
    \begin{tabular}{c c c c c c c c c}
        \toprule
         & \makecell{State\\generator} & \multicolumn{2}{c}{\makecell{Output\\(Wigner)}} & \makecell{Photon number\\$\bm{n}$} & \makecell{Probability\\$p_{\bm{n}}$} & \makecell{Control\\parameters\\$(s_0,\delta_0)$} & \makecell{Non-Gaussian\\squeezing\\$\xi$} &\makecell{Fidelity\\$F$} \\
        \midrule

        \multirow{4}{*}{\makecell{Cat\\(odd)}}
        & \multirow{4}{*}{\genfig{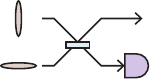}}
        & Bef. 
        & \statefig{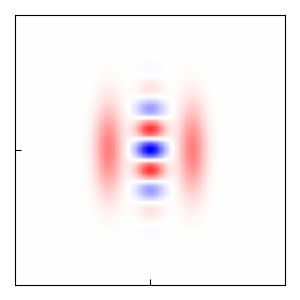}
        & $15$ & $\eformat{1.77e-6}$ & $(3.12,0)$ & $0.158$ & \multirow{4}{*}{\makecell{$0.9986$}} \\
        &   
        & Aft.
        & \statefig{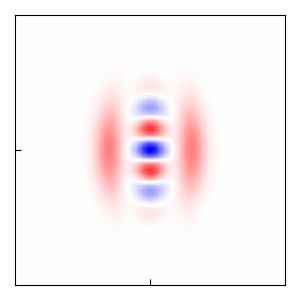}
        & $5$ & $\eformat{4.58e-2}$ & $(1.11,0)$ & $0.165$ &  \\
        \midrule
        \multirow{4}{*}{\makecell{Cat\\(even)}}
        & \multirow{4}{*}{\genfig{figures/optimization/cat_system.pdf}}
        & Bef. 
        & \statefig{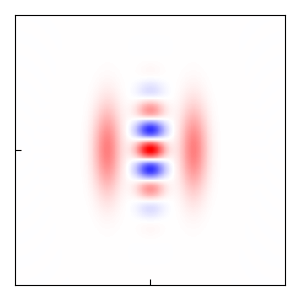}
        & $16$ & $\eformat{8.29e-7}$ & $(3.12,0)$ & $0.151$ & \multirow{4}{*}{\makecell{$0.9983$}} \\
        &   
        & Aft.
        & \statefig{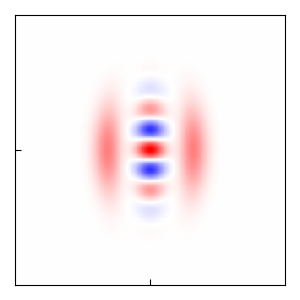}
        & $6$ & $\eformat{3.84e-2}$ & $(1.23,0)$ & $0.155$ &  \\
        \midrule
        \multirow{4}{*}{\makecell{CPS}}
        & \multirow{4}{*}{%
  \begin{minipage}[c]{0.17\textwidth}\centering
    \input{figures/optimization/cps_system_annotated_labels_overlay.tex}
  \end{minipage}
}
        & Bef. 
        & \statefig{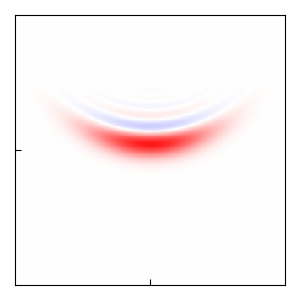}
        & $20$ & $\eformat{2.19e-8}$ & $(0,1.41i)$ & $0.315$ & \multirow{4}{*}{\makecell{$0.9964$}} \\
        &   
        & Aft.
        & \statefig{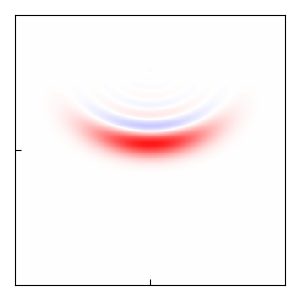}
        & $7$ & $\eformat{7.43e-2}$ & $(0,1.19i)$ & $0.330$ &  \\
        \midrule
        \multirow{4}{*}{\makecell{GKP}}
        & \multirow{4}{*}{\genfig{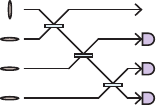}}
        & Bef. 
        & \statefig{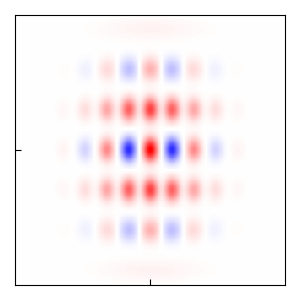}
        & $(18,18,18)$ & $\eformat{1.75e-12}$ & $(3.37,0)$ & $0.429$ & \multirow{4}{*}{\makecell{$0.993$}} \\
        &   
        & Aft.
        & \statefig{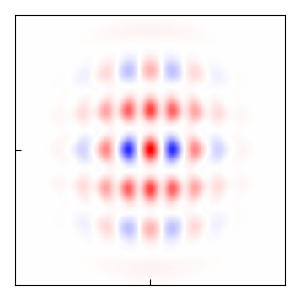}
        & $(6,6,6)$ & $\eformat{1.44e-4}$ & $(1.14,0)$ & $0.426$ &  \\
        \midrule
        \multirow{4}{*}{\makecell{Random}}
        & \multirow{4}{*}{%
  \begin{minipage}[c]{0.17\textwidth}\centering
    \input{figures/optimization/random_system_annotated_labels_overlay.tex}
  \end{minipage}
}
        & Bef. 
        & \statefig{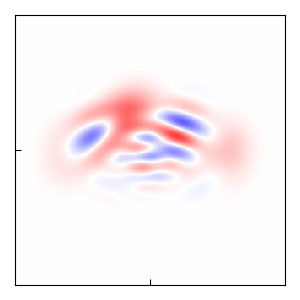}
        & $(14,14,14,14)$ & $\eformat{1.79e-30}$ & \textendash & \textendash & \multirow{4}{*}{\makecell{$0.911$}} \\
        &   
        & Aft.
        & \statefig{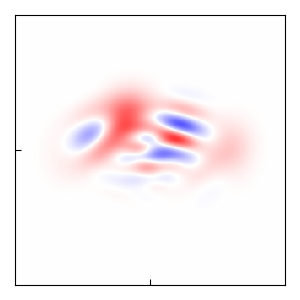}
        & $(9,14,2,12)$ & $\eformat{4.50e-6}$ & \textendash & \textendash &  \\

        \bottomrule
    \end{tabular}
    \caption{Examples of optimizing non-Gaussian state generators using the proposed algorithm (``Bef.'' = before optimization; ``Aft.'' = after optimization). 
For the GKP case, the non-Gaussian control parameters $(s_0,\delta_0)$ are shown only for one mode, since all control modes take the same value due to system symmetry. 
To quantify the quality of the states before and after optimization, suitable metrics are chosen for each case (except the random example): $x^2$-squeezing \cite{cat_nls} for cat states, cubic nonlinear squeezing \cite{cps_nls} for CPS, and GKP nonlinear squeezing \cite{gkp_squeezing} for GKP states. 
These values are shown in the column labeled “Non-Gaussian squeezing $\xi$.” 
The fidelity indicates the overlap between the initial and optimized states. 
See Appendix~\ref{sec:opt_detail} for details of the system construction as well as the parameters and performance.}

    \label{fig:optimize_results}
\end{figure*}
\subsubsection{Cat State}

Cat states, defined in Eq.~\eqref{eq:cat_state}, are widely used in quantum metrology and quantum error correction owing to their non-classical features. Experimentally, they are commonly generated either by photon subtraction from squeezed vacuum states~\cite{ps_cat_multi,mamoru_four_photon} or, more recently, through the generalized photon subtraction (GPS) scheme using two-mode squeezed states~\cite{gps_exp}. Both approaches fall within the class of schemes addressed by our method. In the following, we consider a GPS scheme as a starting point and examine whether further improvements can be achieved.

As in Eq.~\eqref{eq:cat_approx}, different parity of the cat state is obtained depending on the parity of the detected photon number. Both odd and even cases are examined. As an input to the algorithm, we take a GPS system requiring 15 or 16-photon detection to generate a cat states. After optimization, we obtain a new configuration needing only 5 or 6-photon detection, achieving more than $10^4$-fold enhancement of probability, with the fidelity $>99.8\%$ for both cases. The $x^2$-squeezing \cite{cat_nls} defined in Eq.\eqref{eq:cat_sqz_def} is used as a tailored metric for evaluating the quality of the cat states, and only minor degradation is observed. Further details are provided in Appendix~\ref{sec:opt_detail}. It is worth noting that in the first step of optimization, as well as the reduction of photon number, the success probability already improves by $10^2$, although the required squeezing level of the initial Gaussian state went down from (\SI{5.00}{dB}, \SI{5.00}{dB}) to (\SI{3.46}{dB}, \SI{-4.22}{dB}), taking the odd-cat case as an example (see Appendix~\ref{sec:opt_detail}, Fig.~\ref{fig:cat_opt}).

These improvements are attributed to the reduction in the non-Gaussian phase sensitivity parameter $s_0$, getting closer to the optimal value in Fig.~\ref{fig:gps_evaluate}(c).

\subsubsection{Cubic Phase State}
CPS, defined in Eq.~\eqref{eq:cps}, is a crucial resource for universal continuous-variable quantum computation. We start from the CPS generation protocol proposed by Gottesman et al.~\cite{gkp}, which uses a displaced two-mode squeezed state. With our algorithm, the photon number is reduced from 20 to 7, as well as enhancing the probability by $10^6$, while maintaining the fidelity $>99\%$. The cubic nonlinear squeezing \cite{cps_nls} defined in Eq.\eqref{eq:cps_sqz_def} is used as a tailored metric for evaluating the quality of the CPS, and only minor degradation is observed. Further details are provided in Appendix~\ref{sec:opt_detail}.

This enhancement is attributed to the reduction of $\delta_0$, getting closer to the optimal value in Fig.~\ref{fig:gps_evaluate}(f).

\subsubsection{GKP State}

The Gottesman-Kitaev-Preskill (GKP) state~\cite{gkp}, essential for bosonic quantum error correction, is defined as a superposition of displaced squeezed states:
\begin{equation}
\ket{\mathrm{GKP}} = \sum_k \hat{D}(k\Delta) \hat{S}(r) \ket{0}.
\end{equation}
It can be generated via the cat breeding protocol~\cite{cat_breeding_theory,cat_breeding_exp}, which combines multiple cat states using beamsplitters and homodyne conditioning.

We start from a Gaussian breeding protocol that is also employed by the recent experimental demonstration \cite{xanadu_gkp}, which can be obtained as a equivalent circuit of cat breeding protocol. The detail of this reduction is found in Appendix~\ref{sec:opt_detail}.

Applying our algorithm to this system, the photon number is reduced three times, and the probability enhances by $10^8$, while maintaining the fidelity $>99\%$. GKP non-Gaussian squeezing \cite{gkp_squeezing} is used as a tailored metric for evaluating the quality of the GKP state, and only minor degradation is observed (see Appendix~\ref{sec:opt_detail} for the definition).

This improvement can be attributed to the reduction of $s_{0m}$. A more detailed discussion of the trade-off between the control parameters and the photon number is provided in Appendix~\ref{sec:opt_detail}.

\subsubsection{Random state}
To illustrate the general applicability of our method, we also optimize a randomly constructed non-Gaussian state generator. 
The details of the construction are provided in Appendix~\ref{sec:opt_detail}. 
In this case as well, we observe a significant reduction in the required photon numbers together with an enhancement in the success probability, while maintaining the fidelity $>91\%$.

\section{Conclusion \& Discussion}

In summary, we have developed a framework to tackle the challenge of multi-mode non-Gaussian state generation, where brute-force exploration of the exponentially large parameter space is infeasible and traditional benchmarks such as stellar rank capture only part of the picture. To move beyond these limitations, we introduced the \emph{non-Gaussian control parameters} $(s_0,\delta_0)$, which reveal hidden inefficiencies in resource use and provide both a continuous description of non-Gaussianity beyond stellar rank and a practical handle for systematic optimization.

Using this framework, we constructed an optimization method that achieves striking gains in feasibility. Photon-number requirements are reduced by up to a factor of three—depending on the quality of the initial state generator and potentially exceeding this value for less efficient cases—while heralding probabilities increase by several orders of magnitude across key resource states including Schrödinger cat states, CPS, and GKP states. The approach also generalizes to randomly generated targets, underscoring its universality. In particular, our framework provides a practical pathway to improving state-of-the-art experiments, such as the recent demonstration of GKP state generation~\cite{xanadu_gkp}, by reducing photon-number requirements and enhancing success probabilities in a tractable way. Taken together, these results establish non-Gaussian control parameters as a versatile tool for diagnosing and enhancing non-Gaussian state generators, paving the way for more resource-efficient experiments and scalable optical quantum technologies capable of supporting fault-tolerant computation.

Looking ahead, two natural extensions of this work appear particularly promising. First, generalizing the framework to account for errors such as loss, and thereby to mixed states, is essential for realistic experimental considerations on the path toward resource-efficient and fault-tolerant optical quantum computing. In Appendix~\ref{sec:mixed_state}, we provide a preliminary analysis in this direction, showing that our scheme extends naturally to scenarios involving mixed Gaussian states.  

Second, optimizing non-Gaussian state generators is not only experimentally valuable but also closely related to questions of computational complexity~\cite{supremacy_review,boson_sampling,gbs, chabaudResourcesBosonic}. While simulating their full quantum behavior is widely believed to be classically intractable, as exemplified by Gaussian boson sampling~\cite{gbs}, our ability to optimize such generators efficiently highlights an important distinction between simulation and optimization. This perspective offers new insight into the relationship between physical resources and computational hardness, helping to clarify which features of a system drive intractability and where the boundary lies between classically simulable regimes and those that exhibit genuine quantum advantage.

Finally, we note that there are several complementary approaches sharing the goal of systematically optimizing non-Gaussian state generation~\cite{gbs_nongauss,aralovPhotonCatalysis}, as well as refining stellar-rank-based characterizations of non-Gaussianity~\cite{approx_stellar,meleSymplecticRank}. Exploring the fundamental connections between these approaches and our framework based on non-Gaussian control parameters represents an interesting direction for future research.

\begin{acknowledgments}
This work was partly supported by JST [Moonshot R\&D][Grant No.~JPMJMS2064], JSPS KAKENHI (Grant No.~23KJ0498), UTokyo Foundation, and donations from Nichia Corporation. 
PM acknowledges support of the Czech Science Foundation (project 25-17472S), and European Union's HORIZON Research and Innovation Actions under Grant Agreement no. 101080173 (CLUSTEC). P.M. also acknowledges the Programme Johannes Amos Comenius under the Ministry of Education, Youth and Sports of the Czech Republic reg. no. CZ.02.01.01/00/22\_008/0004649 (QUEENTEC).
R.F. acknowledges funding from the project 21-13265X of the Czech Science Foundation. 
F.H. acknowledges the support of Steven Touzard, his team, and CQT during the final preparation of the manuscript, and thanks Ni-Ni Huang for helpful feedback on the clarity and organization of the manuscript.
\end{acknowledgments}
\bibliography{main.bib}
\appendix
\clearpage
\section{Mixed state generators}\label{sec:mixed_state}
In real-world experiments, non-Gaussian state generators inevitably suffer from imperfections such as loss, leading to mixed output states. The dominant errors in optical experiments, including loss, are Gaussian errors, which map a Gaussian state to another (generally mixed) Gaussian state. While we have focused on pure non-Gaussian state generators in the main text, the definition of the control-mode representation in Sec.~\ref{sec:idler_rep_multi} is sufficiently general to also cover mixed non-Gaussian generators affected by Gaussian noise. In fact, the control moments $(C,\bm\beta)$ can be defined in exactly the same way, simply by dropping the assumption of a pure Gaussian state. This broader formulation enables the experimental evaluation of non-Gaussian control parameters for realistic state generators, which is crucial for applications. The following theorem is key for discussing mixed non-Gaussian state generators.

\begin{apptheorem}\label{thm:gaussian_noise_signal}
For an arbitrary $l+k$-mode Gaussian state $\hat{\rho}_G$ and a $l'+k$-mode pure Gaussian state $\ket{G}$ having the same $k$-mode control moments $(C,\bm\beta)$, there exists a Gaussian channel $\mathcal{E}_\signal: \mathcal{D}(\mathcal{H}^{\otimes l'})\to \mathcal{D}(\mathcal{H}^{\otimes l})$ acting only on the signal modes such that
\begin{align}
\hat{\rho}_G=(\mathcal{E}_\signal\otimes \mathcal{I})(\ketbra{G}{G}).\label{eq:mixed_gaussian}
\end{align}
\end{apptheorem}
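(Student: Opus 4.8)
The plan is to reduce the claim to the Gaussian uniqueness of purification already encoded in Theorem~\ref{thm:purificationUniqueness}. First I would purify the mixed state: since $\hat{\rho}_G$ is Gaussian, it admits a pure Gaussian purification $\ket{G''}$ on $l+k+e$ modes for some finite number $e$ of environment modes $E$, so that $\hat{\rho}_G=\operatorname{Tr}_E\ketbra{G''}{G''}$. Because tracing out $E$ leaves the control-mode marginal untouched, and a Gaussian marginal is fixed entirely by its first and second moments, the $k$-mode control moments of $\ket{G''}$ are again precisely $(C,\bm\beta)$. Thus both $\ket{G}$ and $\ket{G''}$ are \emph{pure} Gaussian states sharing the same control moments $(C,\bm\beta)$, where for $\ket{G''}$ the ``signal side'' now consists of the $l$ physical signal modes together with the $e$ environment modes.

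Next I would apply Theorem~\ref{thm:purificationUniqueness} to each state separately, in both cases designating the $k$ measured modes as the control modes. The essential point is the clause that the symplectic eigenvalues $a_1\ge\dots\ge a_r>1$, the Schmidt rank $r$, and the control-side Gaussian unitary $\hat{U}_\idler$ depend \emph{only} on $(C,\bm\beta)$. Since the two states share $(C,\bm\beta)$, these data coincide: both canonical forms are built from the \emph{same} tensor product of two-mode squeezed states and the \emph{same} $\hat{U}_\idler$, differing only in the signal-side Gaussian unitary and in the number of appended vacuum signal modes ($l'-r$ for $\ket{G}$ versus $l+e-r$ for $\ket{G''}$). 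After enlarging $E$ if necessary so that $l+e\ge l'$, I would write out both canonical forms and cancel the common $\hat{U}_\idler$; the residual relation between the two signal sides is then a composition of Gaussian unitaries with a vacuum-appending isometry. This produces a Gaussian isometry $V:\mathcal{H}^{\otimes l'}\to\mathcal{H}^{\otimes(l+e)}$ acting only on the signal and environment modes, with
\begin{align}
\ket{G''}=(V\otimes \hat{I}_\idler)\ket{G}.
\end{align}

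Finally I would trace out the environment. Setting $\mathcal{E}_\signal(\cdot)\coloneqq\operatorname{Tr}_E[V(\cdot)V^{\dagger}]$ defines a CPTP map $\mathcal{D}(\mathcal{H}^{\otimes l'})\to\mathcal{D}(\mathcal{H}^{\otimes l})$ that acts only on the signal modes; being the composition of a Gaussian isometry (whose Stinespring dilation is a Gaussian unitary acting on a vacuum ancilla) with a partial trace, it is a Gaussian channel. Substituting the relation above then gives $\hat{\rho}_G=\operatorname{Tr}_E\big[(V\otimes \hat{I}_\idler)\ketbra{G}{G}(V^{\dagger}\otimes \hat{I}_\idler)\big]=(\mathcal{E}_\signal\otimes\mathcal{I})(\ketbra{G}{G})$, as claimed. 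I expect the main obstacle to be the middle step, namely certifying that the two purifications genuinely share the control-side data of their canonical forms; this is exactly where the ``depends only on $(C,\bm\beta)$'' guarantee of Theorem~\ref{thm:purificationUniqueness} does the work, with the only additional care being to append enough environment modes so that the vacuum-adjoining isometry maps from the smaller into the larger signal space.
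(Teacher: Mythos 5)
Your proof is correct and follows essentially the same route as the paper's: both purify $\hat{\rho}_G$ by a pure Gaussian state, invoke Theorem~\ref{thm:purificationUniqueness} (specifically its guarantee that the canonical-form data on the control side depend only on $(C,\bm\beta)$) to relate the two purifications by a vacuum-appending Gaussian isometry acting only on the signal side, and then trace out the extra modes to obtain the Gaussian channel. The only cosmetic difference is that the paper first reduces without loss of generality to the minimal case $l'=r$ before comparing purifications, whereas you keep $l'$ general and absorb the dimension mismatch into the isometry $V$.
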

\begin{proof}
Because all pure states with the same $(C,\bm\beta)$ are related by a Gaussian unitary transformation from Theorem \ref{thm:purificationUniqueness}, it suffices to consider the case $l'=r$, where $r$ is the Schmidt rank of $C$. The proof is represented graphically in Fig.~\ref{fig:gaussian_noise_signal}. Let $\ket{G}_\mathrm{p}$ be a Gaussian purification of $\hat{\rho}_G$ with $l_\mathrm{p}+k$ modes. Since $\ket{G}_\mathrm{p}$ is also a purification of the control modes, Theorem \ref{thm:purificationUniqueness} gives
\begin{align}
    \ket{G}_\mathrm{p}=(\hat{U}_\signal\otimes\hat{I})\qty(\ket{0}^{\otimes l_\mathrm{p}-r}\otimes \ket{G}),
\end{align}
where $\hat{U}_\signal:\mathcal{H}^{\otimes l_p}\to \mathcal{H}^{\otimes l_p}$ is a unitary operator acting on the signal modes. By tracing out the ancillary modes, we obtain Eq.~\eqref{eq:mixed_gaussian}, where $\mathcal{E}_\signal$ is defined as
\begin{align}
\mathcal{E}_\signal(\rho)=\Tr_{\mathrm{a}}\qty[\hat{U}\qty(\ketbra{0}{0}^{\otimes l_\mathrm{p}-r}\otimes \rho)],
\end{align}
and $\Tr_{\mathrm{a}}$ denotes the partial trace over $l_{\mathrm{p}}-l$ ancillary modes.
\end{proof}

\begin{figure*}
    \centering
    \input{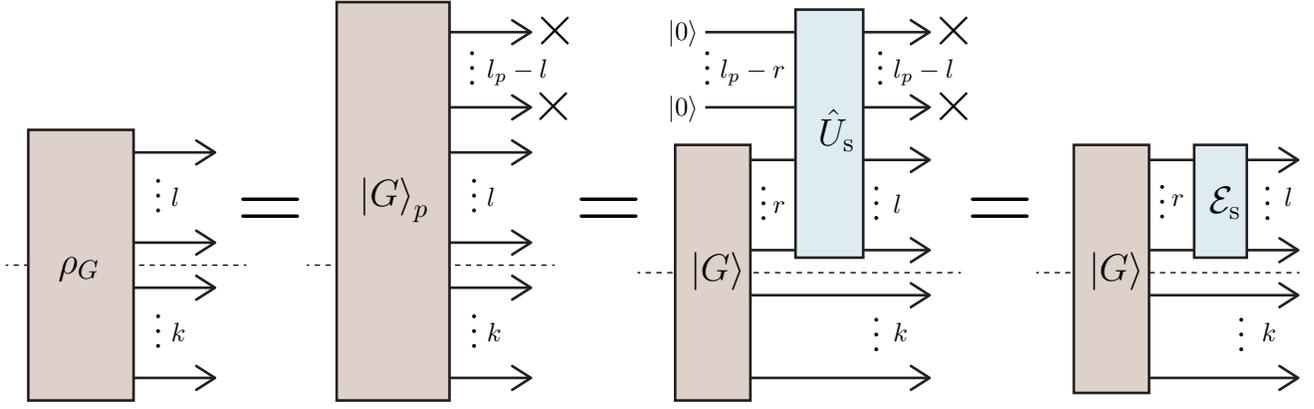}
    \caption{Proof of Theorem \ref{thm:gaussian_noise_signal}. The symbol $\times$ indicates tracing out the mode.}
    \label{fig:gaussian_noise_signal}
\end{figure*}

Mixed non-Gaussian output states can generally result from Gaussian noise acting on either the signal modes or the control modes. However, Theorem \ref{thm:gaussian_noise_signal} shows that Gaussian noise on the control modes is equivalent to noise on the signal modes. Therefore, we can define the control-mode representation in the same way as in the pure-state case, as shown in the following corollary, corresponding to Corollary \ref{cor:idler_rep} for pure states.

\begin{appcor}\label{thm:cparam_uniqueness_mixed}
For a multi-mode (mixed) non-Gaussian state generator $(C, \bm{\beta}, \bm{n}, \hat{U}_\signal)$:
\begin{itemize}
    \item The output non-Gaussian state is determined by $(C,\bm{\beta},\bm{n})$, up to a Gaussian channel. Specifically, for an arbitrary output state $\hat{\rho}_{(C,\bm{\beta}),\bm{n}}$ and a pure output state $\ket{\psi}_{(C,\bm{\beta}),\bm{n}}$ with the same control moments, there exists a Gaussian channel $\mathcal{E}_\signal$ such that
\begin{align}
\hat{\rho}_{(C,\bm{\beta}),\bm{n}} = \mathcal{E}_\signal\qty(\ket{\psi}_{(C,\bm{\beta}),\bm{n}}\bra{\psi}_{(C,\bm{\beta}),\bm{n}}).
\end{align}
    Accordingly, we denote the state as $\hat{\rho}_{C,\bm{\beta},\bm{n},\mathcal{E}_\signal}$.
    \item The success probability of the state generation is determined solely by $(C,\bm{\beta},\bm{n})$, which we denote by $p_{\bm{n}}(C,\bm\beta)$.
\end{itemize}
\end{appcor}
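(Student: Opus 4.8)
The plan is to deduce the corollary directly from Theorem~\ref{thm:gaussian_noise_signal}, exploiting that the heralding measurement acts only on the control modes while the Gaussian noise acts only on the signal modes. First I would take the (possibly mixed) Gaussian state $\hat{\rho}_G$ underlying the generator, read off its control moments $(C,\bm{\beta})$, and invoke Theorem~\ref{thm:gaussian_noise_signal} to write $\hat{\rho}_G=(\mathcal{E}_\signal\otimes\mathcal{I})(\ket{G}\bra{G})$ for some pure Gaussian $\ket{G}$ sharing the same $(C,\bm{\beta})$ and some Gaussian channel $\mathcal{E}_\signal$ supported on the signal modes.

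Next I would express the unnormalized heralded signal state as the partial matrix element $\bra{\bm{n}}_\idler\hat{\rho}_G\ket{\bm{n}}_\idler$ over the control modes. Setting $\sigma=\ket{G}\bra{G}$ and substituting $\hat{\rho}_G=(\mathcal{E}_\signal\otimes\mathcal{I})(\sigma)$, I would write $\mathcal{E}_\signal$ in a Kraus form $\mathcal{E}_\signal(X)=\sum_i K_i X K_i^\dagger$ with each $K_i$ supported on the signal modes. The key step is that $K_i\otimes\hat{I}$ commutes with the rank-one control projector $\hat{I}_\signal\otimes\ket{\bm{n}}\bra{\bm{n}}$, since they act on disjoint tensor factors; hence $\bra{\bm{n}}_\idler(K_i\otimes\hat{I})\,\sigma\,(K_i^\dagger\otimes\hat{I})\ket{\bm{n}}_\idler=K_i\,\bra{\bm{n}}_\idler\sigma\ket{\bm{n}}_\idler\,K_i^\dagger$. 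Summing over $i$ gives $\bra{\bm{n}}_\idler\hat{\rho}_G\ket{\bm{n}}_\idler=\mathcal{E}_\signal\qty(\bra{\bm{n}}_\idler\sigma\ket{\bm{n}}_\idler)$. Since $\bra{\bm{n}}_\idler\sigma\ket{\bm{n}}_\idler\propto\ket{\psi}_{(C,\bm{\beta}),\bm{n}}\bra{\psi}_{(C,\bm{\beta}),\bm{n}}$ by definition of the pure heralded state, normalizing yields the first bullet, $\hat{\rho}_{(C,\bm{\beta}),\bm{n}}=\mathcal{E}_\signal\qty(\ket{\psi}_{(C,\bm{\beta}),\bm{n}}\bra{\psi}_{(C,\bm{\beta}),\bm{n}})$.

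For the second bullet I would argue more directly, in parallel with Corollary~\ref{cor:idler_rep}: the success probability is $p_{\bm{n}}=\Tr[(\hat{I}_\signal\otimes\ket{\bm{n}}\bra{\bm{n}})\hat{\rho}_G]=\bra{\bm{n}}\hat{\rho}_{G,\idler}\ket{\bm{n}}$, where $\hat{\rho}_{G,\idler}=\Tr_\signal\hat{\rho}_G$ is the reduced control-mode state. This reduced state is the Gaussian state whose first and second moments are exactly $(\bm{\beta},C)$, so it is fixed by $(C,\bm{\beta})$, and its photon-number statistics depend only on $(C,\bm{\beta},\bm{n})$. Consistency with the first bullet follows because $\mathcal{E}_\signal$ is trace-preserving, so the normalization of the heralded state, and hence $p_{\bm{n}}$, is unchanged by the channel.

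I expect the only real subtlety to be bookkeeping: confirming that the non-unitary, non-trace-preserving heralding on the global state interacts correctly with the CPTP channel supported on the complementary set of modes. Because the two act on disjoint modes, this commutation is structurally guaranteed, so no genuine obstacle arises beyond the Kraus manipulation above; the argument is simply the mixed-state analogue of the reasoning behind Corollary~\ref{cor:idler_rep}.
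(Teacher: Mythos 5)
Your proposal is correct and takes essentially the same route as the paper: the paper's own proof is just the one-line remark that the corollary follows directly from Theorem~\ref{thm:gaussian_noise_signal}, and your Kraus-operator commutation argument (signal-mode channel commuting with the control-mode Fock projection) together with trace preservation for the success probability is exactly the bookkeeping that this one-liner leaves implicit. No gap.
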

\begin{proof}
    This follows directly from Theorem \ref{thm:gaussian_noise_signal}.
\end{proof}

Furthermore, for the case of two-mode state generators, we have the following theorem, corresponding to Theorem \ref{thm:gps_uniqueness}.

\begin{apptheorem}\label{thm:gps_uniqueness_mixed}
For $n \geq 2$, if 
$\hat{\rho}_{s_0,\delta_0,n,\mathcal{E}_\signal} 
= \hat{\rho}_{s_0',\delta_0',n,\mathcal{E}'_\signal}$, 
at least one of the following holds:
\begin{align}
    &s_0=s_0'=0 \quad\text{and}\quad |\delta_0|=|\delta_0'|,\\
    &(s_0,\delta_0)=(s_0',\pm\delta_0'),\\
    &(s_0,\delta_0)=(s_0',\pm\delta_0'^*).\label{eq:conjugate_possibility}
\end{align}
\end{apptheorem}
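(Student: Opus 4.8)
The plan is to compare the two mixed states through their quantum characteristic functions $\chi_{\hat\rho}(\lambda)=\Tr[\hat\rho\,\hat D(\lambda)]$, where the non-Gaussian content appears as a polynomial prefactor multiplying a Gaussian envelope. For the pure core $\ket{\psi}_{s_0,\delta_0,n}$, built on the vacuum by Theorem~\ref{thm:particle_form}, one has $\chi_{\ket{\psi}}(\lambda)=Q_{s_0,\delta_0}(\lambda,\bar\lambda)\,e^{-|\lambda|^2/2}$ with $Q_{s_0,\delta_0}$ a polynomial of degree $2n$ whose zero configuration encodes $(s_0,\delta_0)$. A Gaussian channel $\mathcal{E}_\signal$ acts on the characteristic function by a real-linear substitution of the argument together with multiplication by a Gaussian, so that
\begin{align}
\chi_{\hat\rho_{s_0,\delta_0,n,\mathcal{E}_\signal}}(\xi)=Q_{s_0,\delta_0}\!\left(X^{T}\xi\right)\,e^{-\frac14\xi^{T}M\xi+i\bm{d}^{T}\xi},
\end{align}
and likewise for the primed data with $(X',M',\bm{d}')$. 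The crucial structural point is that, whereas a Gaussian unitary enforces $\det X=+1$ (a symplectic map), a genuine Gaussian channel may have $\det X<0$, i.e.\ it may contain the reflection $\lambda\mapsto\bar\lambda$ provided enough added noise keeps it completely positive; this is the origin of the extra complex-conjugation freedom absent in the pure-state Theorem~\ref{thm:gps_uniqueness}.

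First I would exploit the hypothesis $\hat\rho_{s_0,\delta_0,n,\mathcal{E}_\signal}=\hat\rho_{s_0',\delta_0',n,\mathcal{E}'_\signal}$ at the level of characteristic functions. Dividing the two expressions yields a ratio of polynomials equal to $\exp$ of a quadratic form. Since a nonconstant $\exp(\text{quadratic})$ is transcendental while a ratio of polynomials is algebraic, both the quadratic and the linear parts must vanish, forcing the envelopes to coincide, $M=M'$ and $\bm{d}=\bm{d}'$, and the prefactors to agree,
\begin{align}
Q_{s_0,\delta_0}\!\left(X^{T}\xi\right)=Q_{s_0',\delta_0'}\!\left(X'^{T}\xi\right).
\end{align}
Under the nondegeneracy assumption that $X,X'$ are invertible (excluded only in degenerate cases such as total loss, where the output retains no non-Gaussian information), this reduces to a polynomial identity $Q_{s_0,\delta_0}(R^{T}\eta)=Q_{s_0',\delta_0'}(\eta)$ with $R=(X')^{-T}X^{T}$ a real invertible matrix.

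The remaining task is to read off the parameter relation from this identity by decomposing $R$ according to its determinant: writing $R=R_0$ or $R=R_0P$ with $\det R_0>0$ and $P=\mathrm{diag}(1,-1)$ the reflection realizing complex conjugation $\lambda\mapsto\bar\lambda$. The orientation-preserving factor $R_0$ corresponds to a symplectic transformation of the core polynomial, i.e.\ to a Gaussian unitary relating $\ket{\psi}_{s_0,\delta_0,n}$ to $\ket{\psi}_{s_0',\delta_0',n}$ (absent the reflection) or to $\ket{\psi}_{s_0',\delta_0'^*,n}$ (with the reflection). Applying the pure-state uniqueness Theorem~\ref{thm:gps_uniqueness} to each case then yields $(s_0,\delta_0)=(s_0',\pm\delta_0')$, or $(s_0,\delta_0)=(s_0',\pm\delta_0'^*)$ when $P$ is present, together with the degenerate alternative $s_0=s_0'=0$, $|\delta_0|=|\delta_0'|$ from the corresponding branch of that theorem.

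The hard part will be two-fold. First, I must justify that the envelope matching $M=M'$ together with invertibility genuinely constrains $R$ to the $O(2)$-type transformations that preserve the core-polynomial normal form built on the isotropic vacuum envelope $e^{-|\lambda|^2/2}$, rather than to an arbitrary element of $GL(2,\mathbb{R})$; this requires tracking how the channel's symplectic scaling interacts with that fixed envelope and arguing that the residual freedom is exactly orientation-preserving symplectics composed with at most one reflection. Second, the nondegeneracy hypothesis must be stated precisely, since the identity becomes vacuous when $X$ or $X'$ is singular; the theorem is naturally read as applying to channels preserving the stellar rank $n$ of the core, and verifying that the transpose relation $\hat\rho_{s_0,\delta_0,n,\mathcal{E}_\signal}^{T}=\hat\rho_{s_0,\delta_0^*,n,\bar{\mathcal{E}}_\signal}$ accounts for \emph{all} of the conjugation ambiguity and nothing more is the technical crux.
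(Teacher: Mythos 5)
Your overall strategy coincides with the paper's: compare characteristic functions, use the factorization into a Gaussian envelope times a degree-$2n$ polynomial, note that a Gaussian channel acts by a real-linear (possibly orientation-reversing, hence conjugating) substitution times a Gaussian prefactor, and reduce to the pure-state Theorem~\ref{thm:gps_uniqueness}. However, the proposal has genuine gaps at precisely the two decisive steps. First, you defer as a ``hard part'' the claim that the residual matrix $R$ is restricted to scalar multiples of $O(2)$, and you look for that constraint in the envelope matching $M=M'$. That is the wrong place: the envelope of each output mixes the channel's own Gaussian prefactor with the transformed vacuum envelope, and since the prefactor is essentially arbitrary, envelope matching constrains nothing about $R$. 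The constraint comes from the polynomial itself. The paper computes, from the Laguerre-polynomial form of $\matrixel{m}{\hat{D}(\xi)}{n}$, that the particle form has polynomial part $P_{s_0,\delta_0,n}(\xi,\xi^*)=|\xi|^{2n}+\mathcal{O}(|\xi|^{2n-1})$ (Eq.~\eqref{eq:poly_part_exp}); matching the top-degree homogeneous parts of $Q_{s_0,\delta_0}(R^{T}\eta)$ and $Q_{s_0',\delta_0'}(\eta)$ gives $(\eta^{T}RR^{T}\eta)^{n}\propto(\eta^{T}\eta)^{n}$, hence $RR^{T}\propto I$, i.e., the two substitutions are proportional up to complex conjugation. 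Without this leading-term computation, which your proposal never performs, the polynomial identity over $GL(2,\mathbb{R})$ does not close.

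Second, even after that reduction a nontrivial scalar survives: the identity becomes
\begin{align}
\chi_{s_0,\delta_0,n}(\xi,\xi^*)\propto\chi_{s_0',\delta_0',n}(k\xi,k^{*}\xi^{*})\,
\exp[-\tfrac12\qty(1-|k|^{2})|\xi|^{2}],
\end{align}
and a dilation $\xi\mapsto k\xi$ with $|k|\neq1$ is not a Gaussian unitary but a loss (or amplification) channel, so your assertion that the orientation-preserving factor ``corresponds to a symplectic transformation, i.e.\ to a Gaussian unitary'' is unjustified as stated ($R_0=cO$ with $O\in SO(2)$ is symplectic only if $c=1$). The paper closes this with a purity argument: both sides are pure and non-vacuum, so the loss channel must be trivial, forcing $|k|=1$; only then does Theorem~\ref{thm:gps_uniqueness} apply, and analogously in the conjugate branch via $\chi_{s_0,\delta_0,n}(\xi^*,\xi)=\chi_{s_0,\delta_0^*,n}(\xi,\xi^*)$. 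Your concern about degenerate channels (singular $X$, where the identity is vacuous) is legitimate — and, to be fair, the paper's proof also tacitly assumes nondegeneracy — but flagging it does not substitute for the two missing steps above.
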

\begin{proof}
We show this by proving the uniqueness of the polynomial part of the characteristic function. For a state $\rho$, the characteristic function is defined as
\begin{align}
    \chi(\xi)=\Tr(\rho\hat{D}(\xi)).
\end{align}
The characteristic function of a state with stellar rank $n$ can be written as
\begin{align}
    \chi(\xi)=G(\xi,\xi^*)P(\xi,\xi^*),
\end{align}
where $G$ is a Gaussian function and $P$ is a polynomial in $\xi,\xi^*$ of total degree at most $2n$. 
Since the characteristic function of $\outerproduct{n}{m}$ for $n\leq m$ is
\begin{align}
\chi_{\outerproduct{n}{m}}(\xi)
&= \matrixel{m}{\hat{D}(\xi)}{n}\\
&= e^{-\tfrac{1}{2}\qty|\xi|^2}
\sqrt{\frac{n!}{m!}}
(-\xi^*)^{m-n}
L_{n}^{(m-n)}\qty(|\xi|^2),
\end{align}
where $L_{n}^{(\alpha)}$ is the generalized Laguerre polynomial, the (pure) particle form $\ket{\psi}_{s_0,\delta_0,n}^{(\mathrm{p})}$ has
\begin{align}
    G_{s_0,\delta_0,n}(\xi,\xi^*)&\propto \exp(-\tfrac12|\xi|^2),\\
    P_{s_0,\delta_0,n}(\xi,\xi^*)&=\qty(|\xi|^{2n}+\mathcal{O}(|\xi|^{2n-1})),\label{eq:poly_part_exp}
\end{align}
where $\mathcal{O}(|\xi|^{2n-1})$ denotes terms of total degree at most $2n-1$.

The action of a Gaussian map $\mathcal{E}$ on the characteristic function is
\begin{align}
    \chi(\xi) \mapsto G_{\mathcal{E}}(\xi,\xi^*)\chi(\alpha_{\mathcal{E}} \xi +\beta_{\mathcal{E}} \xi^*),
\end{align}
where $\alpha_{\mathcal{E}},\beta_{\mathcal{E}}\in\mathbb{C}$ and $G_{\mathcal{E}}(\xi,\xi^*)$ is a Gaussian prefactor \cite{gaussian_qi}. Suppose $\hat{\rho}_{s_0,\delta_0,n,\mathcal{E}}=\hat{\rho}_{s_0',\delta_0',n,\mathcal{E}'}$. Then the polynomial parts must match:
\begin{align}
    \begin{split}        
    &P_{s_0,\delta_0,n}(\alpha_{\mathcal{E}}\xi+\beta_{\mathcal{E}}\xi^*,(\alpha_{\mathcal{E}}\xi+\beta_{\mathcal{E}}\xi^*)^*)\\
    &\propto P_{s_0',\delta_0',n}(\alpha_{\mathcal{E}'}\xi+\beta_{\mathcal{E}'}\xi^*,(\alpha_{\mathcal{E}'}\xi+\beta_{\mathcal{E}'}\xi^*)^*).
    \end{split}
\end{align}
From Eq.~\eqref{eq:poly_part_exp}, comparing leading terms gives
\begin{align}
    \alpha_{\mathcal{E}'}\xi+\beta_{\mathcal{E}'}\xi^*=k(\alpha_{\mathcal{E}}\xi+\beta_{\mathcal{E}}\xi^*)
\end{align}
for some $k\in\mathbb{C}$, or
\begin{align}
    \alpha_{\mathcal{E}'}\xi+\beta_{\mathcal{E}'}\xi^*=k'(\alpha_{\mathcal{E}}\xi+\beta_{\mathcal{E}}\xi^*)^*\label{eq:conjugate_channel}
\end{align}
for some $k'\in\mathbb{C}$. In the first case,
\begin{align}
    P_{s_0,\delta_0,n}(\xi,\xi^*)\propto P_{s_0',\delta_0',n}(k\xi,(k\xi)^*),
\end{align}
so that
\begin{align}
\begin{split}
    \chi_{s_0,\delta_0,n}(\xi,\xi^*)
    &\propto \chi_{s_0',\delta_0',n}(k\xi,k^*\xi^*) \\
    &\quad \cdot \exp[-\tfrac{1}{2}(1-|k|^2)|\xi|^2]
\end{split}
\end{align}

This corresponds to the state $\ket{\psi}_{s_0,\delta_0,n}^{(\mathrm{p})}$ after a phase rotation by $\arg{k}$ and a loss channel with efficiency $|k|^2$. Since the left-hand side is a pure state and not the vacuum, we must have $|k|=1$. Then, by Theorem \ref{thm:gps_uniqueness}, either $s_0=s_0'=0$ and $|\delta_0|=|\delta_0'|$, or $(s_0,\delta_0)=(s_0',\pm\delta_0')$.

In the second case Eq.~\eqref{eq:conjugate_channel},
\begin{align}
    P_{s_0,\delta_0,n}(\xi,\xi^*)\propto P_{s_0',\delta_0',n}(k'\xi^*,k'^*\xi),
\end{align}
and
\begin{align}
    &\chi_{s_0,\delta_0,n}(\xi,\xi^*) \nonumber \\
    &\propto \chi_{s_0',\delta_0',n}(k'\xi^*,k'^*\xi)
      \exp[-\tfrac{1}{2}(1-|k|^2)|\xi|^2] \\
    &= \chi_{s_0',\delta_0'^*,n}(k'^*\xi,k'\xi^*)
      \exp[-\tfrac{1}{2}(1-|k|^2)|\xi|^2].
      \label{eq:conjugate_channel_char}
\end{align}

Here we used
\begin{align}
    \chi_{s_0,\delta_0,n}(\xi^*,\xi)=\chi_{s_0,\delta_0^*,n}(\xi,\xi^*),
\end{align}
since conjugating $\xi\mapsto\xi^*$ corresponds to transposition $\rho\mapsto\rho^T$ in the Fock basis, which is equivalent to $\delta_0\mapsto\delta_0^*$ in the particle form. From Eq.~\eqref{eq:conjugate_channel_char}, as before we obtain either $s_0=s_0'=0$ and $|\delta_0|=|\delta_0'|$, or $(s_0,\delta_0)=(s_0',\pm\delta_0'^*)$. Combining both cases yields the condition stated in the theorem.
\end{proof}

Note that the third condition \eqref{eq:conjugate_possibility}, absent in the pure-state case, arises because of the existence of Gaussian channels that implement time-reversal with added noise, such as the phase-conjugating channel \cite{phase_conj_channel}. The proof of Theorem \ref{thm:cparam_uniqueness_mixed} shows that the polynomial part of the characteristic function acts as a ``fingerprint'' of the non-Gaussian state, allowing the extraction of the control parameters $s_0,\delta_0$ from experimentally accessible data.

Theorem~\ref{thm:cparam_uniqueness_mixed} demonstrates that the non-Gaussian control parameters, determined by the shape of the characteristic function, are invariant under Gaussian errors acting on the signal modes. By contrast, Gaussian errors on the control modes generally modify the control moments $(C,\bm\beta)$, and thus change $(s_0,\delta_0)$. However, if the error channel is known, one can pre-compensate the control mode so that it yields the desired $(s_0,\delta_0)$ after the errors. 

As an illustrative example, consider a loss channel acting on a control mode close to the vacuum state, with 
\begin{align}
    C &= \mqty(1+\Delta c & 0 \\ 0 & 1+\Delta d), \\
    \bm\beta &= \mqty(\Delta \beta_x & 0 \\ 0 & \Delta \beta_p),
\end{align}
where $\Delta c, \Delta d, \Delta \beta_x, \Delta \beta_p \ll 1$. In this regime we have
\begin{align}
    s_0 &\sim \frac{|\Delta c - \Delta d|}{\Delta c + \Delta d}, \\
    \delta_0 &\sim \frac{2}{\sqrt{\Delta c + \Delta d}}\,
                (\Delta \beta_x - i \Delta \beta_p).
\end{align}

The effect of a loss channel with efficiency $\eta$ on $(C,\bm\beta)$ is
\begin{align}
    C &\mapsto \eta C + (1-\eta) I, \\
    \bm\beta &\mapsto \sqrt{\eta}\,\bm\beta.
\end{align}
Equivalently, in terms of the small deviations,
\begin{align}
    \Delta c &\mapsto \eta \Delta c, \\
    \Delta d &\mapsto \eta \Delta d, \\
    \Delta \beta_x &\mapsto \sqrt{\eta}\,\Delta \beta_x, \\
    \Delta \beta_p &\mapsto \sqrt{\eta}\,\Delta \beta_p.
\end{align}
Thus, both $s_0$ and $\delta_0$ remain invariant under loss. This observation reproduces the well-known fact that the effect of loss on the control mode can be mitigated in the ``weak-pump'' limit.
\section{Approximation of CPS}\label{sec:cps_derive}
When $s_0=0$ and $|\delta_0|>0$, the state $\ket{\psi}_{0,\delta_{0},n}^{(\mathrm{w})}$ is a good approximation of CPS~\cite{gkp,cps_konno}, defined as
\begin{equation}
    \ket{\mathrm{CPS}} = e^{-i\hat{x}^3} \ket{p=0}.
\end{equation}

Due to the symmetry at $s_0=0$, we can assume $\delta_{0x}>0$ and $\delta_{0p}=0$ without loss of generality. As shown in Ref.~\cite{gkp}, using the higher-order approximation of the Fock wavefunction $\phi_n(x)$ valid for $\tfrac{x}{2\sqrt{n+1/2}}\ll 1$~\cite{hermite_approx}, one has
\begin{align}
    \begin{split}        
    &\phi_n(x)\\
    &\appropto \cos(\sqrt{n+1/2}\,x - \frac{x^3}{24\sqrt{n+1/2}} + \mathcal{O}(x^5) - \frac{n}{2}\pi).
    \end{split}
\end{align}
Substituting this into the wave form Eq.~\eqref{eq:wave_form} yields
\begin{align}
    &\braket{x}{\psi}_{0,\delta_{0x},n}^{(\mathrm{w})}\\
    &\propto \phi_n(x+i\delta_{0x}) \\
    &\appropto \exp(-i\frac{p_0}{2}x)\exp(-kx^2)\exp(-i\gamma x^3),
\end{align}
where
\begin{align}
    p_0 &= 2\sqrt{n+1/2} + \frac{\delta_{0x}^2}{4\sqrt{n+1/2}},\\
    k   &= \frac{\delta_{0x}}{8\sqrt{n+1/2}},\\
    \gamma &= \frac{1}{24\sqrt{n+1/2}}.
\end{align}

If $\delta_{0x}\gg \frac{1}{2\sqrt{n+1/2}}$, this approximation is valid within the Gaussian envelope $\exp(-kx^2)$. In this regime, we obtain
\begin{align}
    \ket{\psi}_{0,\delta_{0x},n}^{(\mathrm{w})}
    \appropto \hat{D}(\tfrac{i}{2}p_0)\hat{S}(\tfrac{1}{3}\ln\gamma)\exp(-k\gamma^{-2/3}x^2)\ket{\mathrm{CPS}},
\end{align}
where the envelope coefficient is
\begin{align}
    k\gamma^{-2/3} = \frac{9^{1/3}}{2}(n+1/2)^{-1/6}\delta_{0x}.
\end{align}
This factor vanishes asymptotically as $n\to\infty$, which implies that
\begin{align}
    \ket{\psi}_{0,\delta_{0x},n}^{(\mathrm{w})}
    \appropto \hat{D}(\tfrac{i}{2}p_0)\hat{S}(\tfrac{1}{3}\ln\gamma)\ket{\mathrm{CPS}},\label{eq:cps_approx_x}
\end{align}
in the large-$n$ limit.

When we instead assume $\delta_{0x}=0$ and $\delta_{0p}>0$, the same conclusion can be seen in terms of the Fourier-transformed wavefunction of the CPS~\cite{cps_wave_airy}:
\begin{align}
    \braket{p}{\mathrm{CPS}}\propto \mathrm{Ai}\qty(\frac{p}{2\cdot3^{1/3}})
\end{align}
where $\mathrm{Ai}$ is the Airy function of the first kind. Using the asymptotic formula for the Fock wavefunction around the turning point $x=2\sqrt{n+1/2}$~\cite{hermite_approx,hermite_airy}, we have
\begin{align}
    \phi_n(x)\appropto \mathrm{Ai}\qty(n^{1/6}\qty(2\sqrt{n+1/2}-x)).
\end{align}
Then, from Eq.~\eqref{eq:wave_form}, within the range $|2\sqrt{n+1/2}-x| \ll 1/\delta_{0p}$,
\begin{align}
    \braket{x}{\psi}_{0,i\delta_{0p},n}^{(\mathrm{w})}\appropto \mathrm{Ai}\qty(n^{1/6}\qty(2\sqrt{n+1/2}-x)).
\end{align}
This indicates
\begin{align}
    \begin{split}        
    &\ket{\psi}_{0,i\delta_{0p},n}^{(\mathrm{w})}\\&\appropto \hat{D}(2\sqrt{n+1/2})\hat{S}\qty(-\tfrac{1}{3}\ln{24 \sqrt{n}})\hat{R}(\pi/2)\ket{\mathrm{CPS}},
    \end{split}
\end{align}
which is equivalent to Eq.~\eqref{eq:cps_approx_x} up to a $\pi/2$ rotation in the limit $n\to\infty$.
\section{\choijam{} isomorphism for Gaussian maps}\label{sec:choi_mat}

Any completely positive (CP), but not necessarily trace-preserving Gaussian map $M: \mathcal{D}(\mathcal{H}^{\otimes k}) \to \mathcal{D}(\mathcal{H}^{\otimes j})$ has one-to-one correspondence with a $j+k$-mode Gaussian state $\rho_M$ (Choi--Jamiołkowski isomorphism~\cite{nielsen_chuang}). The characteristic function of $\rho_M$ is given by
\begin{align}
    \chi_M(\bm{q}_1,\bm{q}_2)&=\Tr(\rho_M\qty(\hat{D}(\bm{q}_1)\otimes\hat{D}(\bm{q}_2)))\\
    &\propto\exp(-\frac12 (\bm{q}_1^T,\bm{q}_2^T)\Sigma_M\mqty(\bm{q}_1\\\bm{q}_2)+i(\bm{q}_1^T,\bm{q}_2^T)\bm\gamma_M),
\end{align}
where $\bm{q}_1, \bm{q}_2$ are $2j$- and $2k$-dimensional real quadrature vectors, respectively, and  
\begin{align}
    \Sigma_M = \mqty(J & L^T \\ L & K), \quad \bm{\gamma}_M = \mqty(\bm{\epsilon} \\ \bm{\delta}).
\end{align}

Now consider an $l+k$-mode Gaussian input state with covariance matrix and mean  
\begin{align}
    \Sigma = \mqty(A & B^T \\ B & C), \quad \bm{\gamma} = \mqty(\bm{\alpha} \\ \bm{\beta}).
\end{align}
For convenience, we call the first $l$ modes the \emph{signal} modes and the remaining $k$ modes the \emph{control} modes, and denote their quadrature vectors by $\bm{q}_\signal, \bm{q}_\idler$ respectively.

When the map $\mathcal{M}$ acts on the signal modes, the characteristic function of the output state is given by  
\begin{align}
    \chi'(\bm{q}_\signal,\bm{q}_\idler)
    = \int d\bm{q}_{\signal}' \, \chi_M(\bm{q}_\signal, Z^{\otimes l} \bm{q}_{\signal}') \, \chi(\bm{q}_{\signal}', \bm{q}_\idler).
\end{align}
Then the integrand has covariance matrix and mean:  
\begin{align}
    \Sigma_{\mathrm{all}}
    &= \mqty(
        A + Z^{\otimes l} K Z^{\otimes l} & Z^{\otimes l} L & B^T \\
        L^T Z^{\otimes l} & J & 0 \\
        B & 0 & C
    ),\\
    \bm{\gamma}_{\mathrm{all}} &= \mqty(\bm{\alpha} + Z^{\otimes l} \bm{\delta} \\ \bm{\epsilon} \\ \bm{\beta}),
\end{align}
with respect to $(\bm{q}_{s'}, \bm{q}_\signal, \bm{q}_\idler)$.

Since the integration is Gaussian, the output state's covariance matrix and mean can be computed analytically as:  
\begin{align}
    \Sigma' = \mqty(A' & B'^T \\ B' & C'), \quad \bm{\gamma}' = \mqty(\bm{\alpha}' \\ \bm{\beta}'),
\end{align}
\begin{align}
    A' &= K - L^T Z^{\otimes l} (A + Z^{\otimes l} K Z^{\otimes l})^{-1} Z^{\otimes l} L \label{eq:A_trans} \\
    B' &= -B (A + Z^{\otimes l} K Z^{\otimes l})^{-1} Z^{\otimes l} L \\
    C' &= C - B (A + Z^{\otimes l} K Z^{\otimes l})^{-1} B^T \\
    \bm{\alpha}' &= \bm{\epsilon} - L^T Z^{\otimes l} (A + Z^{\otimes l} K Z^{\otimes l})^{-1} (\bm{\alpha} + Z^{\otimes l} \bm{\delta}) \label{eq:alpha_trans} \\
    \bm{\beta}' &= \bm{\beta} - B (A + Z^{\otimes l} K Z^{\otimes l})^{-1} (\bm{\alpha} + Z^{\otimes l} \bm{\delta}) \label{eq:beta_trans}
\end{align}

A particularly important example is the \emph{damping operator}:
\begin{align}
    \hat{\Gamma}(\bm{\lambda}) = \bigotimes_m e^{-\lambda_m \hat{n}_m}
\end{align}
which corresponds to a Gaussian map with covariance matrix and mean:
\begin{align}
    \Sigma_{\Gamma}
    = \mqty(T & \sqrt{T^2 - 1} Z^{\otimes k} \\ \sqrt{T^2 - 1} Z^{\otimes k} & T), \quad \bm{\gamma}_\Gamma = 0\label{eq:damping_cov_mean}
\end{align}
where
\begin{align}
    T = \mathrm{diag}(t_1, t_1, \dots, t_k, t_k) 
\end{align}
and
\begin{align}
    t_m = \coth{\lambda_m}. \label{eq:ta_relation2}
\end{align}
This is equivalent to a tensor product of thermal states.

In the special case where $\rho_M$ is pure, the map $M$ is called a Gaussian filter~\cite{gaussian_filter}, which maps pure states to pure states. Applying Theorem~\ref{thm:purificationUniqueness} to $\hat{\rho}_M$, we find that any Gaussian filter can be expressed as a transformation of the form:
\begin{align}
    \ket{\psi} \to \hat{F} \ket{\psi} = \hat{U}_1 \hat{\Gamma}(\bm{\lambda}) \hat{U}_2 \ket{\psi}
\end{align}
where $\hat{U}_1, \hat{U}_2$ are unitary operators, assuming $j = k$. Even in the case $j \neq k$, the same form can be obtained by adding auxiliary vacuum modes $\ket{0}$.

Using the Baker--Campbell--Hausdorff formula
\begin{align}
    e^{\hat{A}} \hat{B} e^{-\hat{A}} = \hat{B} + [\hat{A}, \hat{B}] + \frac{1}{2}[\hat{A},[\hat{A},\hat{B}]] + \dots ,
\end{align}
the damping operator $\hat{\Gamma}(\lambda) = e^{-\lambda \hat{n}}$ acts on the annihilation and creation operators $\hat{a}, \hat{a}^\dagger$ as:
\begin{align}
    \hat{\Gamma}(\lambda)
    \mqty(\hat{a} \\ \hat{a}^\dagger)
    \hat{\Gamma}(\lambda)^{-1}
    = \mqty(k & 0 \\ 0 & k^{-1}) \mqty(\hat{a} \\ \hat{a}^\dagger), \quad
    k = e^\lambda.
\end{align}
Therefore, $\hat{F}$ also admits a matrix representation with respect to the operator vector $\hat{\bm{a}} = (\hat{a}_1, \hat{a}_1^\dagger, \dots, \hat{a}_k, \hat{a}_k^\dagger)^T$ \cite{gaussian_filter}:
\begin{align}
    \hat{F} \hat{\bm{a}} \hat{F}^{-1} = S \hat{\bm{a}} + \bm{b}.
\end{align}
\section{Derivation of the canonical forms}\label{sec:proof_Cmat}
Here we give proofs of the following theorems in the main text.
\thmpurificationUniquenessGps*
\thmpurificationUniqueness*

Note that, although Refs.~\cite{duan_simon_duan,duan_simon_simon} give a principle proof of Theorem \ref{thm:purificationUniqueness}, the uniqueness of $\hat{U}_\idler$ for a given $C,\bm\beta$ has not been explicitly stated in these proof. Here we provide a complete proof for rigorousity.

Since Thm.~\ref{thm:purificationUniqueness} is a generalized version of Thm.~\ref{thm:purificationUniquenessGps}, it is sufficient to give a proof for Thm.~\ref{thm:purificationUniqueness}.
\begin{proof}[Proof \textup{(Theorem~\ref{thm:purificationUniqueness})}]
By adding extra vacuum modes if necessary, we assume $l=k$ without loss of generality. Because $\bm{\alpha},\bm{\beta}$ can be arbitrary controlled by including displacements in $\hat{U}_\signal,\hat{U}_\idler$, it is sufficient to prove the case of $\bm\alpha=\bm\beta=0$.

From Williamson's theorem \cite{williamson_decomp}, $C$ can be decomposed as
\begin{align}
    C=SDS^T.\label{eq:williamson_decomp_C}
\end{align}
where $S$ is a symplectic matrix and the diagonal matrix with paired elements
\begin{align}
    D=\mathrm{diag}(a_1,a_1, \dots, a_k,a_k)
\end{align}
is called symplectic eigenvalues. Using that the covariance matrix of the two-mode squeezed state $\ket{\mathrm{TMSS}(a)}$ is given by
\begin{align}
    \Sigma_a=\mqty(
    aI& \sqrt{a^2-1}Z\\
    \sqrt{a^2-1}Z& aI
    ),
\end{align}
and using the symplectic matrices $S_\signal,S_\idler$ correponding to the Gaussian unitary operators $\hat{U}_\signal,\hat{U}_\idler$, the decomposition in Eq.~\eqref{eq:standard_form} can be written as
\begin{align}
    \begin{split}
    \Sigma=\mqty(S_\signal^T & 0\\0 & S_\idler^T)\mqty(
    D& \sqrt{D^2-1}Z^{\otimes k}\\
    \sqrt{D^2-1}Z^{\otimes k}& D
    )\mqty(S_\signal & 0\\0 & S_\idler).
    \end{split}\label{eq:standard_form_mat}
\end{align}
We will prove that this decomposition is always possible. By including the Gaussian unitary corresponding to the $S$ in Eq.~\eqref{eq:williamson_decomp_C} to $S_\idler$, we can assume $C=D$. In the same way, $A$ can also be assumed to be diagonal. From Eq.~\eqref{eq:tmss_def}, the symplectic eigenvalues $a_j$ have one-to-one correspondence to the Schmidt coefficients. From general properties of purification, the signal modes should share the Schmidt coefficients with the control modes, hence $A$ also has a Williamson's decomposition with the same symplectic eigenvalues. Thus, we can assume
\begin{align}
    A=C=D.\label{eq:thermal_assumption}
\end{align}

From the condition that $\ket{G}$ is a pure Gaussian state, $\Sigma$ should be a symplectic matrix, hence the block matrices $A,B,C$ should satisfy
\begin{align}
    A\Omega^{\otimes k} A+B^T\Omega^{\otimes k} B&=\Omega^{\otimes k},\label{eq:symp_cond1}\\
    C\Omega^{\otimes k} C+B\Omega^{\otimes k} B^T&=\Omega^{\otimes k},\label{eq:symp_cond2}\\
    B\Omega^{\otimes k} A+C\Omega^{\otimes k} B&=0,\label{eq:symp_cond3}
\end{align}
where $\Omega$ is the symplectic form
\begin{align}
    \Omega=\mqty(0&1\\-1&0)
\end{align}
and
\begin{align}
    \Omega^{\otimes k}=\mqty(\Omega& & \\&\ddots&\\&&\Omega)
\end{align}
From Eqs.~\eqref{eq:thermal_assumption} and \eqref{eq:symp_cond2}, we have
\begin{align}
    B\Omega^{\otimes k} B^T=-(D^2-1)\Omega^{\otimes k}.
\end{align}
By multiplying $(D^2-1)^{-1/2}Z^{\otimes k}$ and its transpose from both sides, where
\begin{align}
    Z=\mqty(1&0\\0&-1),
\end{align}
we obtain
\begin{align}
    (D^2-1)^{-1/2}Z^{\otimes k}B\Omega^{\otimes k} ((D^2-1)^{-1/2}Z^{\otimes k}B)^T=\Omega^{\otimes k},
\end{align}
which means $(D^2-1)^{-1/2}Z^{\otimes k}B$ is symplectic, thus
\begin{align}
    B=Z^{\otimes k}(D^2-1)^{1/2}S\label{eq:b_symplectic}
\end{align}
for some symplectic matrix $S$. Here we used the relation $Z\Omega Z=-\Omega$.

Meanwhile, from Eqs.~\eqref{eq:thermal_assumption} and \eqref{eq:symp_cond3}, we have
\begin{align}
    BD\Omega^{\otimes k}=-D\Omega^{\otimes k}B.\label{eq:B_omega_cond}
\end{align}
By multiplying $Z^{\otimes k}$ from right, we have
\begin{align}
    (BZ^{\otimes k})(D\Omega^{\otimes k})=(D\Omega^{\otimes k})(BZ^{\otimes k}).
\end{align}
This indicates that $BZ^{\otimes k}$ has the same eigenspace as $D\Omega^{\otimes k}$. Writing degeneracy of the symplectic eigenvalues $a_1,\dots,a_k$ as $d_1,\dots,d_m$ ($d_1+\dots+d_m=k$), we can write $B$ in a block-diagonal form as
\begin{align}
    B=\mathrm{diag}(B_1,\dots,B_m)
\end{align}
where, from Eq.~\eqref{eq:B_omega_cond}, each $B_j$ is a $2d_j\times 2d_j$ matrix satisfying
\begin{align}
    \Omega^{\otimes d_j} B_j+B_j\Omega^{\otimes d_j}=0.
\end{align}
Thus $S$ in Eq.~\eqref{eq:b_symplectic} is also block diagonal
\begin{align}
    S=\mathrm{diag}(S_1,\dots,S_m)
\end{align}
and each $S_j$ satisfies
\begin{align}
    \Omega^{\otimes d_j} S_j Z^{\otimes d_j}+S_j Z^{\otimes d_j}\Omega^{\otimes d_j}=0.
\end{align}
By multiplying $S_j^T$ from the left and $\Omega^{\otimes d_j} Z^{\otimes d_j}$ from the right, we obtain
\begin{align}
    S_j^TS_j&=I,
\end{align}
which means $S_j$ is an orthogonal symplectic matrix, thus we have $S^TDS=D$.
Therefore, from Eq.~\eqref{eq:b_symplectic}, we obtain
\begin{align}
    \Sigma=\mqty(S^T & 0\\0 & I)\mqty(
    D& \sqrt{D^2-1}Z^{\otimes k}\\
    \sqrt{D^2-1}Z^{\otimes k}& D
    )\mqty(S & 0\\0 & I).
\end{align}
This proves the possibility of the decomposition in Eq.~\eqref{eq:standard_form_mat}. From the construction, $\hat{U}_\idler$ only depends on $(C,\bm\beta)$.
\end{proof}
\section{Derivation of dampling transformation}\label{sec:proof_fock_equivalence_damp}
Here we give proofs of the following theorems in the main text.
\thmfockEquivalenceDampGps*
\thmfockEquivalenceDamp*

The domain of $\bm{t}$ is given by the following equivalent conditions:

(i)
\begin{align*}
T_+>1,T_-<-(\Pi_-(C+(T_+\oplus 0))^{-1}\Pi_-)^{-1}.
\end{align*}

(ii)
\begin{align*}
|T|&>1,\Pi_-(C+T)^{-1}\Pi_-<0
\end{align*}

(iii)
\begin{align*}
    |T|>1, \mathcal{D}_{\bm{t}}(C)>0.
\end{align*}

Here, $T$ is divided into a positive part and a negative part as
\begin{align}
    T=T_{+}\oplus T_{-},T_{+}>0, T_{-}<0
\end{align}
and $\Pi_-$ is a projection operator to the support of $T_-$.

Thm.~\ref{thm:fock_equivalence_damp} is a generalized version of Thm.~\ref{thm:fock_equivalence_damp_gps}. Thus, it is sufficient to give a proof for Thm.~\ref{thm:fock_equivalence_damp}.

\begin{proof}[Proof \textup{(Theorem~\ref{thm:fock_equivalence_damp})}]
We will first show that the damping operation transforms $(C,\bm\beta)$ according to Eq.~\eqref{eq:damping_trans}, then derive the condition on $T$. For $\lambda_j\geq 0$, suppose the tensor product of damping operators
\begin{align}
    \bigotimes_j e^{-\lambda_j\hat{n}_j}
\end{align}
is applied to the control modes.
By putting $t_m=\coth \lambda_m$ and using the \choijam{} representation of damping operation in Appendix \ref{sec:choi_mat} (Eqs.~\eqref{eq:A_trans}-\eqref{eq:beta_trans}, \eqref{eq:damping_cov_mean}), we obtain the transformation of $(C,\bm\beta)$ as
\begin{align}
\begin{split}
    (C,\bm{\beta})\to &\mathcal{D}_{T}(C,\bm{\beta})=(\mathcal{D}_{T}(C),\mathcal{D}_{T}(\beta)) \\
    :=(&T-\sqrt{T^2-1}(C+T)^{-1}\sqrt{T^2-1},\\&\sqrt{T^2-1}(C+T)^{-1}\bm{\beta}).
\end{split}\label{eq:damping_trans2}
\end{align}

If $T> I$, the transformation corresponds to a physical operation satisfying $\lambda_j>0$. The range of $T$ can be further extended by using that,
for arbitrary $|T_1|,|T_2|>I$ and $(C,\bm\beta)$, there is a relation
\begin{align}
    D_{T_1}(D_{T_2}(C,\bm\beta))&=D_{T_1\circ T_2}(C,\bm\beta)\\
    T_1\circ T_2&:=D_{T_1}(T_2)=\frac{T_1T_2+1}{T_1+T_2}.
\end{align}

For an arbitrary $T$ satisfying $|T|>I$, we can find a sufficiently small $\epsilon>0$ and $T'=(1+\epsilon)I$ such that 
\begin{align}
    T\circ T'>I.
\end{align}
Therefore, if $\mathcal{D}_{T}(C,\bm\beta)$ represents a physical control-mode representation satisfying Eq.~\eqref{eq:uncertainty}, $(C,\bm\beta)$ and $\mathcal{D}_{T}(C,\bm\beta)$ both have the same output state as $D_{T\circ T'}(C,\bm\beta)$.

Let us derive a closed-form condition for $\mathcal{D}_{T}(C,\bm\beta)$ satisfies the phyisical condition Eq.~\eqref{eq:uncertainty}. The case $T<-I$ corresponds to unphysical damping operations where $\lambda_j<0$. From the condition $\mathcal{D}_{T}(C)>0$, $T<-C$ is necessary. Conversely, when $T<-C$, because $C$ satisfies Eq.~\eqref{eq:uncertainty}, we have
\begin{align}
0<-C-T\leq -i\Omega^{\otimes k}-T,
\end{align}
thus
\begin{align}
\mathcal{D}_{T}(C)&\geq T-\sqrt{T^2-1}(i\Omega^{\otimes k}+T)^{-1}\sqrt{T^2-1}\\
&=i\Omega^{\otimes k}.
\end{align}
To summarize, under the assumption $T>0$ or $T<0$, the condition on $T$ is $T>I$ or $T<-C$.

Now we consider the condition on $T$ for general cases. $T$ can be divided into a positive part and a negative part as
\begin{align}
    T=T_{+}\oplus T_{-},
\end{align}
where $T_{+}>0$ and $T_{-}<0$.
We use that
\begin{align}
    T=(-\infty\oplus T_{-})\circ (T_{+}\oplus \infty).
\end{align}
This corresponds to applying the physical and unphysical damping operators separately. The control-mode representation after applying the physical part can be written as
\begin{align}
    C_-&:=\Pi_-D_{T_{+}\oplus \infty}(C)\Pi_-\\&=(\Pi_-(C+(T_+\oplus 0))^{-1}\Pi_-)^{-1},
\end{align}
where we write projection matrices to the support of $T_{\pm}$ as $\Pi_{\pm}$. From the discussion in the last paragraph,
\begin{align}
    T_{-}<-C_{-} \label{eq:T_cond}
\end{align}
is the necessary and sufficient condition for the physical $\mathcal{D}_{T}(C)$. By using the formula for inversion of a block matrix, this is equivalent to
\begin{align}
    \Pi_-(C+T)^{-1}\Pi_-&<0.\label{eq:damping_cond}
\end{align}
It is clear from Eq.~\eqref{eq:damping_trans2} that, Eq.~\eqref{eq:damping_cond} needs to be satisfies for having $\mathcal{D}_{T}(C)>0$. This shows that a weaker condition $\mathcal{D}_{T}(C)>0$ is enough for verifying the physicality condition.

\end{proof}
\section{Derivation of the Cayley-transformed form of the damping operation}\label{sec:cayley-transform}
Here we show that Under the Cayley transform (Eqs.~\eqref{eq:cayley_1}-\eqref{eq:cayley_3}), the damping transformation (Eq.~\eqref{eq:damping_trans}) can be written as Eqs.~\eqref{eq:damping_cayley_1} and \eqref{eq:damping_cayley_2}.

We have
\begin{align}
    C'\pm I&=T\pm I-\sqrt{T^2-I}(C+T)^{-1}\sqrt{T^2-I}\\
    &=\sqrt{T^2-I}\qty[(T\mp I)^{-1}-(C+T)^{-1}]\sqrt{T^2-I}\\
    &=\sqrt{T^2-I}(C+T)^{-1}\qty(C\pm I)(T\mp I)^{-1}\sqrt{T^2-I}\\
    &=\sqrt{T^2-I}(C+T)^{-1}\qty(C\pm I)\sqrt{\frac{T\pm I}{T \mp I}}.
\end{align}
Using this, we obtain
\begin{align}
\tilde{C}'&=(C'+I)^{-1}(C'-I)\\
&=\sqrt{\frac{T-I}{T+I}}\frac{C-I}{C+I}\sqrt{\frac{T-I}{T+I}}\\
&=\sqrt{\tilde{T}}\tilde{C}\sqrt{\tilde{T}}.
\end{align}
Also,
\begin{align}
\tilde{\beta}'&=(C'+I)^{-1}\beta'\\
&=(C'+I)^{-1}\sqrt{T^2-I}(C+T)^{-1}\beta\\
&=\sqrt{\frac{T-I}{T+I}}(C+I)^{-1}\beta\\
&=\sqrt{\tilde{T}}\tilde{\beta}.
\end{align}
\section{Proof of Theorem \ref{thm:gps_irreversible} and its generalization}\label{sec:proof_conversion_cond}
Here we give a proof for the following theorem in the main text.
\gpsIrreversible*
For proving Theorem \ref{thm:gps_irreversible}, we first show the following more general theorem.
\begin{apptheorem}\label{thm:conversion_cond}
    A pure Gaussian state with a control-mode representation $(C,\bm\beta)$ can be transformed to another Gaussian state with $(C',\bm\beta')$ via a Gaussian map acting on the signal modes, if and only if
    \begin{align}
        C'\leq C\label{eq:conv_cond_C}
    \end{align}
    and
    \begin{align}
        \bm{\beta}-\bm{\beta'}\in \mathrm{Im}(C-C').\label{eq:conv_cond_beta}
    \end{align}
\end{apptheorem}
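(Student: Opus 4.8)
The plan is to phrase the problem through the Choi--Jamiołkowski transformation formulas of Appendix~\ref{sec:choi_mat}, which describe exactly how a Gaussian map on the signal modes acts on the control moments, and then treat the two directions separately. Writing the signal map through its Choi covariance $\Sigma_M=\begin{pmatrix}J&L^{T}\\L&K\end{pmatrix}$ and mean $(\bm{\epsilon},\bm{\delta})$, Eqs.~\eqref{eq:A_trans}--\eqref{eq:beta_trans} give the induced control moments
\begin{align}
C'&=C-B M^{-1} B^{T},\\
\bm{\beta}'&=\bm{\beta}-B M^{-1}\big(\bm{\alpha}+Z^{\otimes l}\bm{\delta}\big),
\end{align}
with $M=A+Z^{\otimes l}K Z^{\otimes l}$. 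The theorem is then the assertion that, as the Choi data range over all physical Gaussian maps, the reachable $(C',\bm{\beta}')$ are precisely those satisfying \eqref{eq:conv_cond_C}--\eqref{eq:conv_cond_beta}.

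For necessity I would note that physicality of the Choi state forces $K\geq i\Omega^{\otimes l}$, hence $K>0$, while $A\geq i\Omega^{\otimes l}$ gives $A>0$; therefore $M>0$ and $M^{-1}>0$. Positivity of $B M^{-1} B^{T}$ then yields $C'\leq C$ at once. For the mean, $\bm{\beta}-\bm{\beta}'=B M^{-1}(\bm{\alpha}+Z^{\otimes l}\bm{\delta})\in\mathrm{Im}(B)$, while factoring $M^{-1}=R R^{T}$ with $R$ invertible shows $\mathrm{rank}(B M^{-1} B^{T})=\mathrm{rank}(BR)=\mathrm{rank}(B)$, so $\mathrm{Im}(C-C')=\mathrm{Im}(B M^{-1}B^{T})=\mathrm{Im}(B)$. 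Combining the two gives $\bm{\beta}-\bm{\beta}'\in\mathrm{Im}(C-C')$, as required.

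For sufficiency I would construct the map explicitly. First reduce $\ket{G}$ to the canonical form of Theorem~\ref{thm:purificationUniqueness}, so that (up to a signal Gaussian unitary, which leaves $(C,\bm{\beta})$ untouched) one may take $A=C$ and $B=Z^{\otimes l}\sqrt{C^{2}-I}$. Control modes with symplectic eigenvalue $1$ are unentangled vacua: for them $C'\geq i\Omega$ together with $C'\leq C=I$ forces $\det C'=1$ with all ordinary eigenvalues $\leq 1$, hence $C'=I$, and since these directions lie in $\ker(C-C')$ the hypothesis also forces $\bm{\beta}'=\bm{\beta}$ there, so they can be left alone. On the genuinely entangled block $B$ is invertible, and given a target with $C-C'=:\Delta>0$ one inverts the covariance relation to obtain $M=B^{T}\Delta^{-1}B$ and hence the candidate Choi block $K=Z^{\otimes l}(M-A)Z^{\otimes l}$; the Choi mean is then fixed by solving $\bm{\alpha}+Z^{\otimes l}\bm{\delta}=M B^{-1}(\bm{\beta}-\bm{\beta}')$, which is always solvable once $\Delta$ is invertible. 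The degenerate case $\ker(C-C')\neq\{0\}$ is handled by a map acting as the identity along $\ker\Delta$ and as the strict filter above on its complement---realized, if needed, as a limit of physical Gaussian maps (ideal, infinitely squeezed Choi resources, as in perfect homodyne). This is exactly where $\bm{\beta}-\bm{\beta}'\in\mathrm{Im}(C-C')$ becomes indispensable, since no mean shift is available along $\ker\Delta$.

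The main obstacle is verifying that the reconstructed $K$ is a legitimate Choi block, i.e.\ that $K\geq i\Omega^{\otimes l}$. Using $Z\Omega Z=-\Omega$ this is equivalent to $M\geq C-i\Omega^{\otimes l}$, that is, to a matrix inequality relating $\sqrt{C^{2}-I}\,Z^{\otimes l}(C-C')^{-1}Z^{\otimes l}\sqrt{C^{2}-I}$ to $C-i\Omega^{\otimes l}$. A direct eigenvalue check in the isotropic single-mode case ($C=aI$, $C'=a'I$) reduces it to $a'\geq 1$, confirming that the inequality holds precisely when $C'\geq i\Omega$; I expect the general case to reduce, after simultaneously diagonalizing, to this same one-mode statement, so that physicality of the \emph{target} control moments is exactly what guarantees physicality of the \emph{constructed} map. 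The remaining care is to make the rank-deficient boundary ($C-C'$ singular) fully rigorous through the identity-on-$\ker\Delta$ / limiting construction sketched above, which I regard as the delicate technical point of the argument.
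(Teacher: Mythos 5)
Your necessity argument is correct and is essentially the paper's: compute the induced control moments from the Choi data, obtain $C'\leq C$ from positivity of $BM^{-1}B^{T}$ with $M=A+Z^{\otimes l}KZ^{\otimes l}>0$, and obtain \eqref{eq:conv_cond_beta} from $\mathrm{Im}(BM^{-1}B^{T})=\mathrm{Im}(B)$. The gaps are both in the sufficiency direction. First, the purification you start from is not valid: a Gaussian state with blocks $A=C$, $B=Z^{\otimes l}\sqrt{C^{2}-I}$ is pure only when $C$ already coincides with its Williamson form. For a single control mode with $C=\mathrm{diag}(c,d)$, purity of $\Sigma$ forces $cd-\sqrt{(c^{2}-1)(d^{2}-1)}=1$, i.e. $c=d$; since purity is preserved by signal-mode Gaussian unitaries, no such unitary can bring the true purification to your form when $c\neq d$. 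The paper's canonical form (Theorem~\ref{thm:purificationUniqueness}) is $A=D$, $B=S\sqrt{D^{2}-I}\,Z^{\otimes k}$ with $C=SDS^{T}$ the Williamson decomposition, and the symplectic factor $S$ cannot be dropped. Consequently your reconstructed $M=B^{T}\Delta^{-1}B$ and $K=Z^{\otimes l}(M-A)Z^{\otimes l}$ are built from the wrong state except precisely in the isotropic case where you tested them.

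Second, and more fundamentally, the step you yourself call the main obstacle---showing $K\geq i\Omega^{\otimes k}$---is the actual mathematical content of the theorem, and you leave it as a conjecture. The route you propose (reduce to the one-mode isotropic computation by ``simultaneously diagonalizing'') does not go through: $\sqrt{D^{2}-I}$, $(C-C')^{-1}$, and $\Omega^{\otimes k}$ do not commute in general, so there is no common eigenbasis and the inequality is genuinely a matrix statement. The paper closes this gap by recognizing the constructed block on the entangled subspace as a damping transformation with unphysical parameter, $K_{+}=\sqrt{D_{+}^{2}-I}\,(D_{+}-C'_{+})^{-1}\sqrt{D_{+}^{2}-I}-D_{+}=\mathcal{D}_{-D_{+}}(C'_{+})$, and invoking Theorem~\ref{thm:fock_equivalence_damp}: with $T=-D_{+}<-C'_{+}$, the uncertainty relation $C'_{+}\geq i\Omega^{\otimes}$ gives $0<-C'_{+}-T\leq -i\Omega^{\otimes}-T$, and anti-monotonicity of the matrix inverse yields $K_{+}\geq T+\sqrt{T^{2}-I}\,(-i\Omega^{\otimes}-T)^{-1}\sqrt{T^{2}-I}=i\Omega^{\otimes}$. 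This argument uses nothing but physicality of the \emph{target} moments and works in any dimension; it is exactly the lemma missing from your proof. (Your observations that the vacuum block must satisfy $C'=I$, $\bm{\beta}'=\bm{\beta}$, and that the rank-deficient boundary $\ker(C-C')\neq\{0\}$ on the entangled block requires a limiting construction, are both sensible---the paper in fact glosses over the latter---but they do not repair the two gaps above.)
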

\begin{proof}[Proof \textup{(Theorem~\ref{thm:conversion_cond})}]
Let the number of signal and control modes be $l$ and $k$, respectively. The covariance matrix and mean vector of the total system are given by
\begin{align}
\Sigma=\mqty(A & B^T\\B & C),\bm{\gamma}=\mqty(\bm{\alpha}\\\bm{\beta})
\end{align}

Suppose we apply a Gaussian map $M$ described by
\begin{align}
    \Sigma_M=\mqty(J&L^T\\L&K), \bm{\gamma}_M=\mqty(\bm{\epsilon}\\\bm{\delta})
\end{align}
(see Section \ref{sec:choi_mat}) to the signal mode and obtain a new control-mode representation $(C',\bm\beta')$ as a result. Here we will show that the necessary and sufficient condition for the existence of such an $M$ is given by
\begin{align}
    C'\leq C\label{eq:conv_cond_C_appendix}
\end{align}
and
\begin{align}
    \bm{\beta}-\bm{\beta'}\in \mathrm{Im}(C-C').\label{eq:conv_cond_beta_appendix}
\end{align}

From Theorem \ref{thm:purificationUniqueness}, by adding extra vacuum modes if necessary, we can assume without loss of generality that $l=k$, and
\begin{align}
    A&=D,\\
    B&=S(D^2-1)^{1/2}Z^{\otimes k},\\
    \bm{\alpha}&=0.
\end{align}
Here,
\begin{align}
    C=SDS^T
\end{align}
is the Williamson decomposition of $C$. Then, the transformation of the control-mode representation by the map $M$ is expressed as
\begin{align}
    C-C'&=S\sqrt{D^2-1}(D+K)^{-1}\sqrt{D^2-1}S^T\label{eq:C_trans2}\\
    \bm{\beta}-\bm{\beta}'&=S\sqrt{D^2-1}(D+K)^{-1}
    \bm{\delta}\label{eq:beta_trans2}
\end{align}
(see Section \ref{sec:choi_mat}).

Suppose the map $M$ exists. From Eq.~\eqref{eq:C_trans2}, the first condition~\eqref{eq:conv_cond_C_appendix} clearly holds. Also, since $(D+K)^{-1}>0$ and $\sqrt{D^2-1}\geq 0$, we have
\begin{align}
\begin{split}
    &\mathrm{Im}\qty(\sqrt{D^2-1}(D+K)^{-1}\sqrt{D^2-1})\\&=\mathrm{Im}\qty(\sqrt{D^2-1}(D+K)^{-1}).
\end{split}
\end{align}
Since $S$ is invertible, the second condition~\eqref{eq:conv_cond_beta_appendix} also holds.

Conversely, suppose Eqs.~\eqref{eq:conv_cond_C_appendix} and \eqref{eq:conv_cond_beta_appendix} hold. We define
\begin{align}
    \tilde{C}'=S^{-1}C'S^{-T}.
\end{align}
Then,
\begin{align}
    \tilde{C}'\leq S^{-1}CS^{-T}=D
\end{align}
Also, Eq.~\eqref{eq:C_trans2} can be written as
\begin{align}
    \tilde{C}'&=D-\sqrt{D^2-1}(D+K)^{-1}\sqrt{D^2-1}\\
    &=D_{D}(K)\label{eq:C_trans_damping}
\end{align}
where $\mathcal{D}_{T}(\cdot)$ is the damping transformation (Theorem \ref{thm:fock_equivalence_damp}). We want to a $K$ satisfying this. Assume $D$ can be written as
\begin{align}
    D&=\mathrm{diag}(t_1,t_1,\dots,t_m,t_m,1,1,\dots,1,1),\\
    &=\mathrm{diag}(D_{+},I_{2(k-m)})
\end{align}
where $D_+>1$. Using the same division, we write $\tilde{C}'$ in block form as
\begin{align}
\tilde{C}'&=\mqty(\tilde{C}_+'&\tilde{C}'^T_{+1}\\\tilde{C}_{+1}'&\tilde{C}_1')
\end{align}
From $\tilde{C}'\leq D$, it follows that $\tilde{C}_1'\leq I_{2(k-m)}$. Because $\tilde{C}_1'$ must satisfy the uncertainty principle, we conclude $\tilde{C}_1'=I_{2(k-m)}$, which is the covariance matrix of a pure state. This implies $\tilde{C}'^T_{+1}=0$, and thus
\begin{align}
    \tilde{C}'&=\mathrm{diag}(C'_+,I_{2(k-m)}),\\ &C'_+\leq D_+.
\end{align}
Therefore,
\begin{align}
    K&=\mathrm{diag}(K_+,I_{2(k-m)})\\
    \begin{split}
    K_+&=\sqrt{D_{+}^2-1}\qty(D_+-C'_+)^{-1}\sqrt{D_{+}^2-1}-D_+\\
    &=D_{-D_{+}}(C_{+}),
    \end{split}
\end{align}
satisfies Eq.~\eqref{eq:C_trans_damping}, and hence Eq.~\eqref{eq:C_trans2}. From Theorem \ref{thm:fock_equivalence_damp}, this $K$ represents a physical Gaussian map.

Moreover, from Eq.~\eqref{eq:conv_cond_beta_appendix},
\begin{align}
    (C-C')\tilde{\delta}=\bm{\beta}-\bm{\beta}'
\end{align}
holds for some $\tilde{\delta}$, thus we can set
\begin{align}
    \delta=\sqrt{D^2-1}S^T\tilde{\delta}
\end{align}
to satisfy Eq.~\eqref{eq:beta_trans2}. Thus, Eqs.~\eqref{eq:conv_cond_C_appendix} and \eqref{eq:conv_cond_beta_appendix} are necessary and sufficient conditions.
\end{proof}
\begin{proof}[Proof \textup{(Theorem~\ref{thm:gps_irreversible})}]
Suppose $n \geq 2$ and consider the conversion from $\ket{\psi}_{s_0, \delta_0, n, \hat{U}_\signal}$ to $\ket{\psi}_{s_0', \delta_0', n, \hat{U}_\signal'}$. From Theorems~\ref{thm:gps_uniqueness} and \ref{thm:conversion_cond}, this conversion is possible if and only if there exist control-mode representations $(C, \bm{\beta})$ and $(C', \bm{\beta}')$ corresponding to the non-Gaussian phase-sensitivity parameters $(s_0, \delta_0)$ and $(s_0', \delta_0')$, respectively, such that Eqs.~\eqref{eq:conv_cond_C} and \eqref{eq:conv_cond_beta} are satisfied.

Without loss of generality, we restrict our attention to diagonal control matrices $C = \mathrm{diag}(c, d)$ with $c > d$ and $C' = \mathrm{diag}(c', d')$ with $c' > d'$. Figure~\ref{fig:s0_cd}, which illustrates the relation between $c$, $d$, and $s_0$, provides an intuitive understanding of the convertibility conditions.

From Eq.~\eqref{eq:conv_cond_C}, we require $c' \leq c$ and $d' \leq d$. Therefore:
\begin{itemize}
    \item When $s_0 \geq 1$ and $s_0' < s_0$, this condition cannot be satisfied.
    \item When $0 \leq s_0 < 1$, it is always possible to choose control parameters such that $1 < d < c$. By selecting $c' = d' = 1 + \epsilon$ with sufficiently small $\epsilon > 0$ and $\bm{\beta}' = 0$, both Eqs.~\eqref{eq:conv_cond_C} and \eqref{eq:conv_cond_beta} are satisfied. This implies that the state can be converted to a Fock state (corresponding to $s_0 = 0, \delta_0 = 0$). Since $c = d = N$ for arbitrarily large $N$ also represents the same Fock state, this Fock state can further be converted to any other $(C', \bm{\beta}')$.
    \item When $s_0' > s_0$, since the damping transformation brings $(c, d)$ to $(\infty, 1/s_0)$, we can always find $C$ and $C'$ corresponding to $s_0$ and $s_0'$ that satisfy $C' < C$. Hence, both Eqs.~\eqref{eq:conv_cond_C} and \eqref{eq:conv_cond_beta} can be satisfied.
    \item When $s_0 = s_0' > 1$, Eq.~\eqref{eq:conv_cond_C} requires $C = C'$. Then, Eq.~\eqref{eq:conv_cond_beta} requires $\bm{\beta} = \bm{\beta}'$, which implies that the two states are identical: $(s_0, \delta_0) = (s_0', \delta_0')$.
    \item When $s_0 = s_0' = 1$, Eq.~\eqref{eq:conv_cond_C} requires $C = \mathrm{diag}(c, 1)$, $C' = \mathrm{diag}(c', 1)$ with $c' \leq c$. In this case, Eq.~\eqref{eq:conv_cond_beta} requires $\delta_{0x} = 0$.
\end{itemize}

In summary, the necessary and sufficient condition for the convertibility is that one of the following holds:
\begin{align}
\begin{cases}
0 \leq s_0 < 1, \\
s_0' > s_0, \\
s_0 = s_0' = 1 \ \text{ and } \ \delta_{0x} = 0.
\end{cases}
\end{align}
\end{proof}
\section{Derivation of particle and wave forms}\label{sec:proof_fock_wave_form}
Here we give proofs of the following theorems in the main text.
\particleForm*
\waveForm*

In addition, we show the following theorem, describing the relation between these forms.
\begin{apptheorem}[Particle form $\to$ Wave form]\label{thm:fock_to_wave}
The particle form and the wave form are related by a Gaussian unitary operator $\hat{U}^{(\mathrm{p}\to\mathrm{w})}_{s_0,\delta_0,n}$:
\begin{align}
    \ket{\psi}_{s_0,\delta_0,n}^{(\mathrm{w})} 
    = \hat{U}^{(\mathrm{p}\to\mathrm{w})}_{s_0,\delta_0,n}\,
      \ket{\psi}_{s_0,\delta_0,n}^{(\mathrm{p})}.
\end{align}
The action of $\hat{U}^{(\mathrm{p}\to\mathrm{w})}_{s_0,\delta_0,n}$ on the quadratures is given by
\begin{align}
\begin{split}
&\hat{U}^{(\mathrm{p}\to\mathrm{w})\dagger}_{s_0,\delta_0,n}
\begin{pmatrix}
\hat{x} \\[4pt]
\hat{p}
\end{pmatrix}
\hat{U}^{(\mathrm{p}\to\mathrm{w})}_{s_0,\delta_0,n}\\
&=
\begin{pmatrix}
0 & \tfrac{1}{\sqrt{s_0+1}} \\
-\sqrt{s_0+1} & 0
\end{pmatrix}
\begin{pmatrix}
\hat{x} \\
\hat{p}
\end{pmatrix}
+
\begin{pmatrix}
-\delta_{0p} \\
\delta_{0x}
\end{pmatrix}.
\end{split}\label{eq:particle_wave_trans}
\end{align}
\end{apptheorem}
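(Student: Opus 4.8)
The plan is to treat the statement as the verification of an explicit operator identity, since both the particle form (Eq.~\eqref{eq:fock_form}) and the wave form (Eq.~\eqref{eq:wave_form}) are already given by Theorems~\ref{thm:particle_form} and~\ref{thm:wave_form}. First I would confirm that the linear part appearing in Eq.~\eqref{eq:particle_wave_trans} is a genuine symplectic map: its $2\times2$ matrix has determinant $0\cdot 0-\tfrac{1}{\sqrt{s_0+1}}\cdot(-\sqrt{s_0+1})=1$, which in one mode is exactly the symplectic condition. Hence a Gaussian unitary $\hat{U}\equiv\hat{U}^{(\mathrm{p}\to\mathrm{w})}_{s_0,\delta_0,n}$ with precisely this affine action on $(\hat{x},\hat{p})$ exists and is unique up to a global phase, and it remains only to check $\hat{U}\ket{\psi}^{(\mathrm{p})}_{s_0,\delta_0,n}=\ket{\psi}^{(\mathrm{w})}_{s_0,\delta_0,n}$.

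The core step is to push $\hat{U}$ through the particle form. Writing the generator in quadratures as $\hat{b}:=\hat{a}^\dagger+s_0\hat{a}+\delta_0=\tfrac{1+s_0}{2}\hat{x}+\tfrac{i(s_0-1)}{2}\hat{p}+\delta_0$, I would use the Heisenberg identity $\hat{U}\hat{b}^n\ket{0}=(\hat{U}\hat{b}\hat{U}^\dagger)^n\,\hat{U}\ket{0}$. In parallel I read off from Eq.~\eqref{eq:wave_form} the Gaussian filter $\hat{F}=\exp(-\tfrac{\delta_{0x}}{2\sqrt{s_0+1}}\hat{p})\exp(\tfrac{\sqrt{s_0+1}}{2}\delta_{0p}\hat{x})\exp(-\tfrac{s_0}{4}\hat{x}^2)$, so that $\ket{\psi}^{(\mathrm{w})}=\hat{F}\ket{n}=(\hat{F}\hat{a}^\dagger\hat{F}^{-1})^n\hat{F}\ket{0}$. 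The problem thus splits into matching the conjugated ladder operators and the base Gaussian states on the two sides.

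For the matching I would invert Eq.~\eqref{eq:particle_wave_trans} to obtain $\hat{U}\hat{x}\hat{U}^\dagger$ and $\hat{U}\hat{p}\hat{U}^\dagger$, substitute into $\hat{b}$, and — using $[\hat{x},\hat{p}]=2i$ together with Baker--Campbell--Hausdorff to evaluate $\hat{F}\hat{a}^\dagger\hat{F}^{-1}$ — verify the proportionality of the operator (non-constant) parts, which I expect to be $\hat{U}\hat{b}\hat{U}^\dagger=-i\sqrt{s_0+1}\,\hat{F}\hat{a}^\dagger\hat{F}^{-1}$, alongside the equality of base states $\hat{U}\ket{0}\propto\hat{F}\ket{0}$; both are the same displaced--squeezed Gaussian, as one confirms from the common $x$-quadrature variance $\langle\Delta\hat{x}^2\rangle=(s_0+1)^{-1}$. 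The overall scalar $(-i\sqrt{s_0+1})^n$ and normalizations are immaterial, the forms being defined up to normalization. A transparent alternative that avoids operator ordering entirely is to compare $x$-wavefunctions: the differential operator $\tfrac{1+s_0}{2}x+(s_0-1)\partial_x+\delta_0$ applied $n$ times to $e^{-x^2/4}$ equals, after the linear change of variables encoded by the symplectic matrix and the shift $(-\delta_{0p},\delta_{0x})$, the filtered Hermite function $e^{-s_0x^2/4}\phi_n(\cdot)$ of Eq.~\eqref{eq:wave_form}.

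The main obstacle is the affine (displacement) sector when $\delta_0\neq0$: there the operator parts of $\hat{U}\hat{b}\hat{U}^\dagger$ and $\hat{F}\hat{a}^\dagger\hat{F}^{-1}$ pick up constant offsets while $\hat{U}\ket{0}$ and $\hat{F}\ket{0}$ differ by a displacement, and one must show that these two effects cancel. I expect the cleanest dispatch is to reduce to the manifestly clean $\delta_0=0$ case by writing $\hat{b}=\hat{D}(\hat{a}^\dagger+s_0\hat{a})\hat{D}^{-1}$, which is possible whenever $s_0\neq1$ by solving $\alpha^*+s_0\alpha=\delta_0$, matching the induced displacements on both sides and handling the boundary $s_0=1$ and the degenerate point $s_0=\delta_0=0$ separately. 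Throughout, the principal bookkeeping hazard is consistency of the $\hbar=2$ convention ($\hat{a}=(\hat{x}+i\hat{p})/2$, $\hat{p}=-2i\partial_x$) in the quadrature--ladder dictionary; once this is fixed, the remaining verification is routine algebra.
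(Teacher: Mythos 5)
Your main line of attack is, in substance, the paper's own argument run in the opposite direction: the paper proves this theorem by pointing back to the proof of Theorem~\ref{thm:wave_form}, where the operator $\hat{U}^{(\mathrm{p}\to\mathrm{w})}_{s_0,\delta_0,n}=\hat{S}\bigl(\tfrac12\log(s_0+1)\bigr)\hat{D}(-\delta_{0p},\delta_{0x})\hat{R}(\pi/2)$ is constructed from exactly the two matchings you propose --- the base-state identity $\hat{K}\ket{0}\propto\hat{U}_w\ket{0}$ (Eq.~\eqref{eq:U_w_def}) and the conjugation of $\hat{a}^\dagger$ through the filter followed by a $\pi/2$ rotation (Eq.~\eqref{eq:wave_form_rot}). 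Your anticipated factor $-i\sqrt{s_0+1}$ is precisely the constant absorbed in the paper's proportionality sign at that step. So your symplectic-existence step plus the two matchings reproduces the paper's computation, only in verification rather than construction order, and that part of the plan is sound.

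The genuine gap is your proposed treatment of the $\delta_0\neq 0$ sector. The reduction $\hat{b}=\hat{D}(\hat{a}^\dagger+s_0\hat{a})\hat{D}^{-1}$ is circular: it gives $\hat{b}^n\ket{0}=\hat{D}(\hat{a}^\dagger+s_0\hat{a})^n\hat{D}^{-1}\ket{0}$, and $\hat{D}^{-1}\ket{0}$ is a coherent state, not the vacuum; commuting $\hat{D}^{-1}$ back through $(\hat{a}^\dagger+s_0\hat{a})^n$ merely regenerates the constant $\delta_0$ and returns $\hat{b}^n\ket{0}$ unchanged, so nothing has been reduced. Indeed, no dispatch of this kind can exist: by Theorem~\ref{thm:gps_uniqueness}, for $n\geq 2$ and $\delta_0\neq 0$ the states $\ket{\psi}_{s_0,\delta_0,n}$ and $\ket{\psi}_{s_0,0,n}$ are not related by any Gaussian unitary, displacement or otherwise. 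Fortunately you do not need it: carrying out your direct matching, there is no ``cancellation between two effects'' to arrange --- $\hat{U}\ket{0}$ and $\hat{F}\ket{0}$ come out exactly proportional (both are proportional to $\hat{U}_w\ket{0}$, since $\hat{R}(\pi/2)\ket{0}=\ket{0}$), and the constant offsets of $\hat{U}\hat{b}\hat{U}^\dagger$ and $-i\sqrt{s_0+1}\,\hat{F}\hat{a}^\dagger\hat{F}^{-1}$ agree exactly, so the displaced case closes by the same computation as $\delta_0=0$. Two bookkeeping cautions if you do this: matching the $x$-variance alone does not identify the base Gaussian state (compare full wavefunctions, or the complete mean and covariance), and the decomposition convention is $\delta_0=\delta_{0x}-i\delta_{0p}$, as one reads off from the matching at the end of the paper's proof of Theorem~\ref{thm:wave_form}; if you instead take $\delta_0=\delta_{0x}+i\delta_{0p}$, the affine constants will not match.
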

\begin{proof}[Proof \textup{(Theorem~\ref{thm:particle_form})}]
Applying \choijam{} isomorphism (Appendix~\ref{sec:choi_mat}) to the standard form (Theorem \ref{thm:purificationUniquenessGps}), the output state of a two-mode non-Gaussian state generator can be represented as
\begin{align}
    \ket{\psi}_{C,\bm\beta,n,\hat{U}_\signal} = \hat{K} \ket{n},
\end{align}
using a Gaussian filter
\begin{align}
    \hat{K} = \hat{U}_p \hat{\Gamma}(\lambda) \hat{S}(r) \hat{D}(\bm{\beta}).
\end{align}

Here, $\hat{S}(r)= \exp \left[ \frac{1}{2} \left( r^* \hat{a}^2 - r \hat{a}^{\dagger 2} \right) \right]$, $\hat{D}(\bm{\beta})=\exp \left( \beta \hat{a}^\dagger - \beta^* \hat{a} \right)$, and $\hat{\Gamma}(\lambda)=\exp(-\lambda \hat{n})$ are the squeezing, displacement and damping operators, respectively, and $\hat{U}_p$ is a Gaussian unitary operator. Without loss of generality, we assume $C$ is diagonal, with eigenvalues $c \geq d$. Then, the parameters $\lambda,r$ are determined as
\begin{align}
    \lambda = \coth^{-1}{\sqrt{cd}}, r = \frac{1}{4} \log\left(\frac{c}{d}\right).
\end{align}

If $\hat{U}_p$ can be chosen such that
\begin{align}
    \hat{K} \hat{a} \hat{K}^{-1} \propto \hat{a}, \label{eq:fock_form_condition}
\end{align}
then
\begin{align}
    \hat{K} \ket{0} \propto \ket{0}, \label{eq:fock_form_condition_ket}
\end{align}
and hence the output state can be expressed as
\begin{align}
    \ket{\psi}_{C,\bm\beta,n,\hat{U}_\signal} &\propto \hat{K} \hat{a}^{\dagger n} \ket{0} \\
    &= \left(\hat{K} \hat{a}^\dagger \hat{K}^{-1}\right)^n \hat{K} \ket{0} \\
    &\propto \left(\hat{K} \hat{a}^\dagger \hat{K}^{-1}\right)^n \ket{0},\label{eq:particle_fock_sp}
\end{align}
which is a superposition of Fock states up to photon number $n$. In what follows, we determine such a $\hat{U}_p$ and derive the particle form Eq.~\eqref{eq:fock_form}.

We first consider the case $\bm{\beta} = 0$. In this case, we can choose a squeezing operator $\hat{S}(r')$ as $\hat{U}_p$. Then the action of $\hat{K}$ becomes
\begin{align}
&\hat{K} \mqty(\hat{a} \\ \hat{a}^\dagger) \hat{K}^{-1}\\
&= 
\mqty(\cosh r & \sinh r \\
      \sinh r & \cosh r)
\mqty(k & 0 \\
      0 & k^{-1})
\mqty(\cosh r' & \sinh r' \\
      \sinh r' & \cosh r')
\mqty(\hat{a} \\ \hat{a}^\dagger) \\
&= 
\mqty(k \cosh r & k^{-1} \sinh r \\
      k \sinh r & k^{-1} \cosh r)
\mqty(\cosh r' & \sinh r' \\
      \sinh r' & \cosh r')
\mqty(\hat{a} \\ \hat{a}^\dagger),
\end{align}
where
\begin{align}
    k = e^{\lambda}.
\end{align}
By choosing $r'$ such that
\begin{align}
    \tanh r' = -k^{-2} \tanh r,
\end{align}
we get:
\begin{align}
&\hat{K} \mqty(\hat{a} \\ \hat{a}^\dagger) \hat{K}^{-1}\\
\begin{split}        
&= 
\frac{1}{\sqrt{\eta}}
\mqty(k \cosh r & k^{-1} \sinh r \\
      k \sinh r & k^{-1} \cosh r)\\
&\quad\quad\cdot\mqty(k \cosh r & -k^{-1} \sinh r \\
      - k^{-1} \sinh r & k \cosh r)
\mqty(\hat{a} \\ \hat{a}^\dagger)
\end{split}\\
&= 
\mqty(\eta^{1/2} & 0 \\
      \eta^{-1/2} s_0 & \eta^{-1/2})
\mqty(\hat{a} \\ \hat{a}^\dagger),
\end{align}
where 
\begin{align}
    s_0 = \frac{c-d}{cd - 1}
\end{align}
is the non-Gaussian phase sensitivity, and 
\begin{align}
    \eta &= k^2 \cosh^2 r - k^{-2} \sinh^2 r \\
         &= \frac{(c+1)(d+1)}{cd - 1}.
\end{align}

Since this satisfies condition Eq.~\eqref{eq:fock_form_condition}, we obtain
\begin{align}
    \ket{\psi}_{C,n} &\propto \hat{K} \hat{a}^{\dagger n} \ket{0} \\
    &\propto \left(\hat{K} \hat{a}^\dagger \hat{K}^{-1}\right)^n \ket{0} \\
    &\propto \left(\hat{a}^\dagger +s_0 \hat{a}\right)^n \ket{0}.
\end{align}
This gives the particle form for $\bm{\beta} = 0$.

Next, we consider the case with finite displacement $\bm{\beta} \neq 0$. We modify $\hat{U}_p$ to include a displacement:
\begin{align}
    \hat{U}_p = \hat{D}(\beta') \hat{S}(r').
\end{align}
Then,
\begin{align}
\hat{K} \mqty(\hat{a} \\ \hat{a}^\dagger) \hat{K}^{-1}
&= 
\mqty(\eta^{1/2} & 0 \\
      \eta^{-1/2} s_0 & \eta^{-1/2})
\mqty(\hat{a} + \beta' \\ \hat{a}^\dagger + \beta'^*)
+
\mqty(\beta \\ \beta^*).
\end{align}
By choosing
\begin{align}
    \beta' = -\eta^{-1/2} \beta,
\end{align}
we get
\begin{align}
\hat{K} \mqty(\hat{a} \\ \hat{a}^\dagger) \hat{K}^{-1}
&=
\mqty(\eta^{1/2} & 0 \\
      \eta^{-1/2} s_0 & \eta^{-1/2})
\mqty(\hat{a} \\ \hat{a}^\dagger)
+
\mqty(0 \\ \eta^{-1/2} \delta_0),
\end{align}
where
\begin{align}
    \delta_0 &= -\eta^{-1/2} s_0 \beta + (\eta^{1/2} - \eta^{-1/2}) \beta^*\\
    &=\frac{2}{\sqrt{cd - 1}} \qty(\sqrt{\frac{d + 1}{c + 1}} \beta_x - i \sqrt{\frac{c + 1}{d + 1}} \beta_p)
\end{align}
Hence, the output state is
\begin{align}
    \ket{\psi}_{s_0,\delta_0,n} \propto \qty(\hat{a}^\dagger +s_0 \hat{a} + \delta_0)^n \ket{0}. \label{eq:fock_form_appendix}
\end{align}
\end{proof}
\begin{proof}[Proof \textup{(Theorem~\ref{thm:wave_form})}]
We consider the operator
\begin{align}
\hat{K}=\exp(-b_p \hat{p})\exp(-b_x \hat{x})\exp(-k\hat{x}^2).
\end{align}
and calculate the wavefuntion representation of $\hat{K}\ket{0}$. Because
\begin{align}
    \begin{split}        
    &\bra{x}\exp(-b_x \hat{x})\exp(-k\hat{x}^2)\ket{0}\\&\propto\exp[-\qty(k+\frac{1}{4})x^2-b_x x],
    \end{split}
\end{align}
by a Fourier transform we obtain
\begin{align}
\begin{split}        
    &\bra{p}\exp(-b_x \hat{x})\exp(-k\hat{x}^2)\ket{0}\\&\propto\exp[-\frac{1}{4(4k+1)}p^2+\frac{ib_x}{4k+1}p].
\end{split}
\end{align}
Thus, $p$-wave function of $\hat{K}\ket{0}$ can be calculated as
\begin{align}
    \bra{p}\hat{K}\ket{0}&\propto\exp[-\frac{1}{4(4k+1)}p^2+\frac{ib_x}{4k+1}p-b_pp].
\end{align}
This can be written as
\begin{align}
    \hat{K}\ket{0}\propto \hat{U}_w\ket{0}.
\end{align}
where
\begin{align}
    \hat{U}_w=\hat{S}\qty(\frac12\log(4k+1))D\qty(\frac{2b_x}{\sqrt{4k+1}},2\sqrt{4k+1}b_p).\label{eq:U_w_def}
\end{align}

If we define
\begin{align}
    \hat{K}'=\hat{U}_w^\dagger \hat{K},
\end{align}
we have
\begin{align}
    \hat{K}'\ket{0}\propto\ket{0},
\end{align}
thus
\begin{align}
    \hat{K}'\ket{n}\propto \left(\hat{K}' \hat{a}^\dagger \hat{K}'^{-1}\right)^n \ket{0},
\end{align}
similarly to the case of the particle form (Eq.~\eqref{eq:particle_fock_sp}).
The transformation of the quadrature operators is calculated as
\begin{align}
&\hat{K}\mqty(\hat{x}\\\hat{p})\hat{K}^{-1}\\
&=\mqty(1 & 0\\-4ik&1)\mqty(\hat{x}+2ib_p\\\hat{p}-2ib_x)\\
&=\mqty(1 & 0\\-4ik&1)\mqty(\hat{x}\\\hat{p})+\mqty(2ib_p\\8ab_p-2ib_x).
\end{align}
Therefore,
\begin{align}
    &\hat{K}'\mqty(\hat{x}\\\hat{p})\hat{K}'^{-1}\\
    \begin{split}        
    &=\mqty(1 & 0\\-4ik&1)\mqty((4k+1)^{-1/2}\hat{x}-\frac{2b_x}{4k+1}\\(4k+1)^{1/2}\hat{p}-2(4k+1)b_p)\\&\quad+\mqty(2ib_p\\8ab_p-2ib_x)
    \end{split}\\
    &=\mqty(\frac{1}{\sqrt{4k+1}} & 0\\-\frac{4ik}{\sqrt{4k+1}}&\sqrt{4k+1})\mqty(\hat{x}\\\hat{p})+\qty(-\frac{2b_x}{4k+1}+2ib_p)\mqty(1\\i).
\end{align}
Hence, $\hat{a}^\dagger$ is transformed as
\begin{align}
    &\hat{K}'\hat{a}^\dagger\hat{K}'^{-1}\\
    &=\frac12\mqty(1&-i)\mqty(\frac{1}{\sqrt{4k+1}} & 0\\-\frac{4ik}{\sqrt{4k+1}}&\sqrt{4k+1})\mqty(\hat{x}\\\hat{p})-\frac{2b_x}{4k+1}+2ib_p\\
    &=\frac12\qty(\frac{1-4k}{\sqrt{1+4k}}\hat{x}-i\sqrt{4k+1}\hat{p})-\frac{2b_x}{4k+1}+2ib_p\\
    &=\frac{1}{\sqrt{1+4k}}\hat{a}^\dagger-\frac{4k}{\sqrt{1+4k}}\hat{a}-\frac{2b_x}{4k+1}+2ib_p\\
    &\propto \hat{a}^\dagger-4k\hat{a}-\frac{2b_x}{\sqrt{4k+1}}+2i\sqrt{4k+1}b_p.
\end{align}
For comparing this to the particle form, we add an extra $-\pi/2$ phase rotation, obtaining
\begin{align}
    \begin{split}        
    &\hat{R}\qty(-\frac{\pi}{2})\hat{K}'\hat{a}^\dagger\hat{K}'^{-1}\hat{R}^\dagger\qty(-\frac{\pi}{2})\\&\propto \hat{a}^\dagger+4k\hat{a}+2\sqrt{4k+1}b_p+\frac{2ib_x}{\sqrt{4k+1}}.
    \end{split}\label{eq:wave_form_rot}
\end{align}
From this, we have the correspondence
\begin{align}
    k&=\frac{s_0}{4}\\
    b_x&=-\frac{1}{2}\sqrt{s_0+1}\delta_{0p}\\
    b_p&=\frac{1}{2\sqrt{s_0+1}}\delta_{0x},
\end{align}
leading to the wave form Eq.~\eqref{eq:wave_form}.
\end{proof}
\begin{proof}[Proof \textup{(Theorem~\ref{thm:fock_to_wave})}]
From the proof of Theorem \ref{thm:wave_form}, the conversion from the particle form to the wave form is given by the Gaussian unitary operator $\hat{U}_w\hat{R}\qty(\frac{\pi}{2})$ (Eqs.~\eqref{eq:U_w_def} and \eqref{eq:wave_form_rot}). Redefining this operator as $\hat{U}^{(\mathrm{p}\to\mathrm{w})}_{s_0, \delta_0, n}$, it can be written as
\begin{align}
    \hat{U}^{(\mathrm{p}\to\mathrm{w})}_{s_0, \delta_0, n}=\hat{S}\qty(\frac12\log(s_0+1))\hat{D}\qty(-\delta_{0p},\delta_{0x})\hat{R}\qty(\frac{\pi}{2}),
\end{align}
and its action is given by Eq.~\eqref{eq:particle_wave_trans}.
\end{proof}
\section{Proof of Theorem \ref{thm:gps_uniqueness}}\label{sec:proof_uniqueness}
Here we give a proof of the following theorem.
\gpsUniqueness*
\begin{proof}[Proof \textup{(Theorem~\ref{thm:gps_uniqueness})}]
In the particle form (Theorem \ref{thm:particle_form}), the coefficients for $\ket{n},\ket{n-1},\ket{n-2}$ can be explicitly written as
\begin{align}
    \braket{n}{\psi}_{s_0,\delta_0,n}^{(\mathrm{p})}&\propto  \sqrt{n!},\\
    \braket{n-1}{\psi}_{s_0,\delta_0,n}^{(\mathrm{p})}&\propto \delta_0 \sqrt{n \cdot n!},\\
    \braket{n-2}{\psi}_{s_0,\delta_0,n}^{(\mathrm{p})}&\propto \frac{1}{2} (s_0+\delta_0^2)\sqrt{n(n-1)\cdot n!}.
\end{align}
Hence,
\begin{align}
    \frac{\braket{n-1}{\psi}_{s_0,\delta_0,n}^{(\mathrm{p})}}{\braket{n}{\psi}_{s_0,\delta_0,n}^{(\mathrm{p})}}&= \delta_0 \sqrt{n},\label{eq:fock_coeff_nm1}\\
    \frac{\braket{n-2}{\psi}_{s_0,\delta_0,n}^{(\mathrm{p})}}{\braket{n}{\psi}_{s_0,\delta_0,n}^{(\mathrm{p})}}&= (s_0+\delta_0^2) \frac{\sqrt{n(n-1)}}{2}\label{eq:fock_coeff_nm2}.
\end{align}
Now, suppose
\begin{align}
\ket{\psi}_{s_0,\delta_0,n,\hat{U}_\signal}=\ket{\psi}_{s_0',\delta_0',n,\hat{U}'_\signal}.
\end{align}
We can assume one of the state is in the particle form, without loss of generality. Since phase rotation is the only Gaussian unitary operation leaving the state in finite superposition of Fock states \cite{stellar_rank}, we have
\begin{align}
\ket{\psi}_{s_0,\delta_0,n}^{(\mathrm{p})}=e^{i\theta\hat{n}}\ket{\psi}_{s_0',\delta_0',n}^{(\mathrm{p})}.
\end{align}
Thus, from Eqs.~\eqref{eq:fock_coeff_nm1} and \eqref{eq:fock_coeff_nm2}, we have
\begin{align}
    \delta_0&=\delta_0' e^{i\theta},\\
    s_0+\delta_0^2&=(s_0'+\delta_0'^2) e^{2i\theta}.
\end{align}
The solutions to these equations are either $s_0=s_0'=0$ and $|\delta_0|=|\delta_0'|$, or $s_0=s_0'> 0$ and $\delta_0=\pm \delta_0'$.
\end{proof}
\section{Proof of Theorem~\ref{thm:reduced_cparam}}\label{sec:c0_proof}

Here we provide the proof of Theorem~\ref{thm:reduced_cparam} in the main text.

\reducedCparam*

\begin{proof}[Proof \textup{(Theorem~\ref{thm:reduced_cparam})}]
Using the rotation transformation from Theorem~\ref{thm:fock_equivalence_rot_gps}, we can diagonalize $C$ as:
\begin{align}
    C = O^T \begin{pmatrix} c & 0 \\ 0 & d \end{pmatrix} O,
\end{align}
where, without loss of generality, we assume $c \geq d$. Thus, it suffices to consider the case where $C$ is diagonal.

Let us introduce the Cayley transform of the variables:
\begin{align}
    \tilde{c} &= \frac{c - 1}{c + 1}, \quad \tilde{d} = \frac{d - 1}{d + 1}, \\
    \tilde{\beta}_x &= \frac{1}{c + 1} \beta_x, \quad \tilde{\beta}_p = \frac{1}{d + 1} \beta_p.
\end{align}

Then, from Eqs.~\eqref{eq:cayley_1}--\eqref{eq:cayley_3}, the damping transformation in Theorem~\ref{thm:fock_equivalence_damp_gps} can be written as:
\begin{align}
    (\tilde{c}, \tilde{d}, \tilde{\beta}_x, \tilde{\beta}_p) \to \qty( \tilde{t} \tilde{c}, \, \tilde{t} \tilde{d}, \, \sqrt{ \tilde{t} } \tilde{\beta}_x, \, \sqrt{ \tilde{t} } \tilde{\beta}_p ).
\end{align}

Since this is simply a scaling transformation, the following three parameters are invariant:
\begin{align}
    \tilde{s}_0 &=\frac{ \tilde{d} }{ \tilde{c} }, \\
    \tilde{\beta}_{0x} &= \frac{ \tilde{\beta}_x }{ \sqrt{ \tilde{c} } }, \\
    \tilde{\beta}_{0p} &= \frac{ \tilde{\beta}_p }{ \sqrt{ \tilde{c} } }.
\end{align}

$s_0$ (Eq.~\eqref{eq:reduced_cparam_c0}) and $\delta_0$ (Eq.~\eqref{eq:reduced_cparam_beta0})can be expressed in terms of these invariants:
\begin{align}
    s_0 &= \frac{1 - \tilde{s}_0}{1 + \tilde{s}_0}, \\
    \delta_0 &= 2 \sqrt{ \frac{2}{1 + \tilde{s}_0} } \left( \tilde{\beta}_{0x} - i \tilde{\beta}_{0p} \right).
\end{align}

Thus, $s_0$ and $\delta_0$ are invariant under the damping transformation.

Conversely, if different sets of parameters $(c, d, \beta_x, \beta_p)$ yield the same $(s_0, \delta_0)$, they necessarily share the same $( \tilde{s}_0, \tilde{\beta}_{0x}, \tilde{\beta}_{0p} )$, which guarantees that there exists a $\tilde{t}$ that transforms one parameter set into the other.

\end{proof}
\section{Detail of approximation with lower stellar rank}\label{sec:approx_with_disp}
\begin{figure*}[htb]
    \centering
    \input{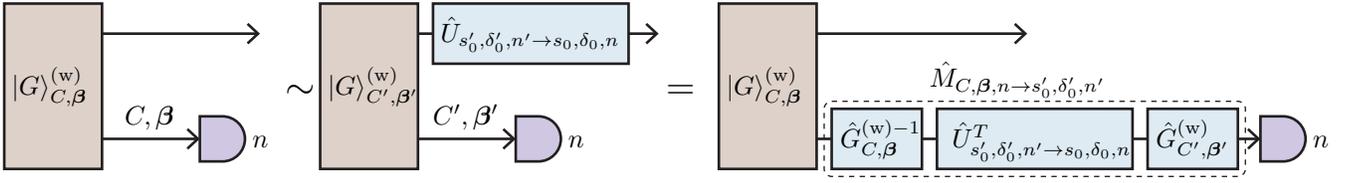}
    \caption{Graphical illustration of how the approximation of a two-mode non-Gaussian state generator can be represented as a Gaussian filter acting on the control mode.}
    \label{fig:photon_number_reduction_gps}
\end{figure*}
In Sec.~\ref{sec:approx_less_n}, we have shown how to obtain a two-mode non-Gaussian state generator with a smaller measured photon number $n'$, from the constants $k,d$ satisfying
\begin{align}
    \phi_n(x) \appropto \phi_{n'}(kx - d),
\end{align}
where $\phi_n(x)$ is the wavefunction of Fock state $\ket{n}$ and the approximation holds around
\begin{align}
    x = x_0 = \frac{\sqrt{s_0+1}}{s_0}\delta_{0p}.
\end{align}
In this section, we explain how to obtain such $k,d$. We also show that the transformation of the non-Gaussian state generator can be expressed by application of a Gaussian filter to the control mode.

\subsection{When \texorpdfstring{$x_0 = 0$}{x₀ = 0} and \texorpdfstring{$n \equiv n' (\mathrm{mod}\ 2)$}{n ≡ n' (mod 2)}}
First, we consider the simplest case with no displacement ($x_0 = 0$) and with $n$ and $n'$ of the same parity. In this case, we may set $d=0$. Let the wavefunction of the Fock state $\ket{n}$ be $\phi_n(x)$. This satisfies the Schrödinger equation
\begin{align}
   \phi_n''(x) = -(4n+2 - x^2) \phi_n(x).
\end{align}
Using the ``local momentum''
\begin{align}
    p(x) = \sqrt{4n+2 - x^2},
\end{align}
this can be rewritten as
\begin{align}
    \phi_n''(x) = -\qty[p(x)]^2 \phi_n(x).
\end{align}
Near $x=0$, and taking parity into account, the semiclassical approximation~\cite{wkb,gkp_kan,hermite_approx} gives
\begin{align}
    \phi_n(x) \appropto \cos\left(\int_0^x p(x') \, dx' - \frac{n}{2} \pi\right). \label{eq:wkb}
\end{align}

Now consider the scaled wavefunction
\begin{align}
    \tilde{\phi}_{n'}(x) \coloneqq \phi_{n'}(kx).
\end{align}
It satisfies
\begin{align}
    \tilde{\phi}_{n'}''(x) &= -\qty[\tilde{p}(x)]^2 \tilde{\phi}_{n'}(x),
\end{align}
with
\begin{align}
    \tilde{p}(x) &= k^2(k^2 x^2 - 4n' - 2).
\end{align}
Hence, choosing
\begin{align}
    k = \sqrt{\frac{2n+1}{2n'+1}}
\end{align}
ensures that near $x = 0$,
\begin{align}
    p(x) \sim \tilde{p}(x).
\end{align}
Since $n$ and $n'$ have the same parity, the initial conditions at $x = 0$ also match:
\begin{align}
    (\phi_n(0), \phi_n'(0)) \propto (\tilde{\phi}_{n'}(0), \tilde{\phi}_{n'}'(0)).
\end{align}
Therefore, from Eq.~\eqref{eq:wkb}, the approximation
\begin{align}
    \phi_n(x) \appropto \tilde{\phi}_{n'}(x)
\end{align}
holds around $x = 0$. 

\subsection{Case of \texorpdfstring{$x_0 \neq 0$}{x₀ ≠ 0}}
In the case with displacement ($x_0\neq0$), the Schrödinger equation for the squeezed and displaced Fock state wavefunction
\begin{align}
    \tilde{\phi}_{n'}(x) := \phi_{n'}(kx - d) \label{eq:fock_disp_squeeze}
\end{align}
is
\begin{align}
    \tilde{\phi}_{n'}''(x) &= -[\tilde{p}(x)]^2 \tilde{\phi}_{n'}(x),\\
    \tilde{p}(x) &= k^2\qty[(kx - d)^2 - (4n' + 2)].
\end{align}

Thus, the approximation
\begin{align}
    \phi_n(x) \appropto \tilde{\phi}_{n'}(x)
\end{align}
holds near $x = x_0$, if the local momentum matches:
\begin{align}
    p(x_0) = \tilde{p}(x_0), \label{eq:p_match}
\end{align}
and the initial conditions also match:
\begin{align}
    (\phi_n(x_0), \phi_n'(x_0)) \propto (\tilde{\phi}_{n'}(x_0), \tilde{\phi}_{n'}'(x_0)). \label{eq:init_match}
\end{align}

By numerically solving Eqs.~\eqref{eq:p_match} and \eqref{eq:init_match} simultaneously, the values of $(k, d)$ can be narrowed down to a finite set. Among these, selecting the one that minimizes the difference in the first derivative of the local momentum,
\begin{align}
    |p'(x_0) - \tilde{p}'(x_0)|,
\end{align}
provides a better approximation over a wider range (Method 1).

This method gives good approximations when $x_0$ lies inner of the turning point $x_t$ (where $p(x_t) = 0$), but when this is not the case, a better approximation can be obtained by solving the conditions
\begin{align}
    p(x_0) &= \tilde{p}(x_0),\\
    p'(x_0) &= \tilde{p}'(x_0),
\end{align}
simultaneously to find $k$ and $d$. In this case, the equations reduce to a cubic equation for $k$, which can be solved analytically. (Method 2). We employ a heuristic method based on numerical experiments as follows. Using the largest root $x_z > 0$ of $\phi_n(x)$,
\begin{itemize}
    \item When $|x_0| < x_z$, use Method 1.
    \item When $x_z \leq |x_0| < x_t$, use Method 2 with $x_0 = x_t$.
    \item When $x_t \leq |x_0|$, use Method 2.
\end{itemize}

\subsection{Representation by a Gaussian filter}
So far, we have shown that the output state of a two-mode non-Gaussian state generator can be approximated up to Gaussian unitary degrees of freedom. We now derive the explicit form of the transformation of the state generators, in the form of a Gaussian filter acting on the control mode.

The approximation Eq.~\eqref{eq:approx_wave} can be explicitly written as a transformation of non-Gaussian state generators:
\begin{align}
{}_\idler\braket{n}{G}_{C,\bm\beta}^{(\mathrm{w})}\appropto {}_\idler\bra{n'}\qty(\hat{U}_{s_0',\delta_0',n'\to s_0,\delta_0,n})_\signal\ket{G}_{C',\bm\beta'}^{(\mathrm{w})}
\end{align}
where $\ket{G}_{C,\bm\beta}^{(\mathrm{w})}$ is a Gaussian state having the control-mode representation $(C,\bm\beta)$ and corresponding to the wave form $\ket{\psi}_{s_0,\delta_0,n}^{(\mathrm{w})}$, and $\hat{U}_{s_0',\delta_0',n'\to s_0,\delta_0,n}$ is the Gaussian unitary operator defined in Eq.~\eqref{eq:approx_unitary_def}. 
Using the \choijam{} isomorphism, we define the operator corresponding to $\ket{G}_{C,\bm\beta}^{(\mathrm{w})}$ as $\hat{G}_{C,\bm\beta}^{(\mathrm{w})}$ (see Appendix~\ref{sec:choi_mat}). Then this transformation becomes
\begin{align}
    \ket{G}_{C,\bm\beta}^{(\mathrm{w})}&\to\qty(\hat{U}_{s_0',\delta_0',n'\to s_0,\delta_0,n})_\signal\ket{G}_{C',\bm\beta'}^{(\mathrm{w})}\\
    &=\qty[\hat{G}_{C',\bm\beta'}^{(\mathrm{w})}\hat{U}_{s_0',\delta_0',n'\to s_0,\delta_0,n}^T\qty(\hat{G}_{C,\bm\beta}^{(\mathrm{w})})^{-1}]_\idler\ket{G}_{C,\bm\beta}^{(\mathrm{w})}\\
    &=:\qty(\hat{M}_{C,\bm\beta,n\to s_0',\delta_0',n'})_\idler\ket{G}_{C,\bm\beta}^{(\mathrm{w})}\label{eq:wkb_idler}
\end{align}
Thus, the transformation can be expressed as applying a Gaussian filter $\hat{M}_{C,\bm\beta,n\to s_0',\delta_0',n'}$ on the control mode (see Fig.~\ref{fig:photon_number_reduction_gps}). Note that this operator is not necessarily be physical. Since an arbitrary non-Gaussian state generator $\ket{G}_{C,\bm\beta}$ can be expressed as
\begin{align}
    \ket{G}_{C,\bm\beta}=\hat{U}_\signal^{(\mathrm{w})}\ket{G}_{C,\bm\beta}^{(\mathrm{w})}
\end{align}
using a Gaussian unitary operator $\hat{U}_\signal^{(\mathrm{w})}$ acting on the signal mode, the transformation of $\ket{G}_{C,\bm\beta}$ is also written as
\begin{align}
    \ket{G}_{C,\bm\beta}\to\qty(\hat{M}_{C,\bm\beta,n\to s_0',\delta_0',n'})_\idler\ket{G}_{C,\bm\beta}
\end{align}
using the same $\hat{M}_{C,\bm\beta,n\to s_0',\delta_0',n'}$.

The choice of $(C',\bm\beta')$ corresponding to $(s_0',\delta_0')$ is not unique, and has the degree of freedom of applying a damping operation. One possible choice is the $C'$ that has the same symplectic eigenvalues as $C$, which we employ for the algorithm implemented in Sec.~\ref{sec:optimization}.
\section{Invariant non-Gaussian control parameters}\label{sec:invariant_s0}
In Sec.~\ref{sec:multi-mode}, we defined the non-Gaussian control parameters $\bm{s}_0,\bm{\delta}_0$ for multi-mode non-Gaussian state generators. Although this definition is intuitive and plays a key role in the optimization algorithm proposed in this work, it suffers from the drawback that it is not invariant under the damping transformation. In this section, we introduce an alternative definition---the \emph{invariant non-Gaussian control parameters}---which are invariant under damping. These quantities are useful for a quantitative analysis of the properties of the output states, and will be used in the discussion of the GKP state in Appendix~\ref{sec:opt_detail}.

Previously, the partial moments $C_m,\bm{\beta}_m$ were defined as the $2\times 2$ submatrices and subvectors of $C,\bm{\beta}$ [Eqs.~\eqref{eq:partial_cparam_C} and \eqref{eq:partial_cparam_beta}]. Instead, consider projecting all control modes except the $m$-th mode onto vacuum:
\begin{align}
    \ket{G}_m \coloneqq \Biggl(\prod_{\substack{l \leq m' < l+k \\ m' \neq m}} {}_{m'}\!\bra{0}\Biggr)\ket{G}.
\end{align}
The \emph{invariant partial moments} $\tilde{C}_m, \tilde{\bm{\beta}}_m$ are defined as the control moments of $\ket{G}_m$. By construction, $\tilde{C}_m$ and $\tilde{\bm{\beta}}_m$ are invariant under any damping transformation applied to modes other than the $m$-th mode. Consequently, the \emph{invariant non-Gaussian control parameters} $\bm{\tilde{s}}_{0}, \bm{\tilde{\delta}}_{0}$ are obtained by replacing $C_m,\bm{\beta}_m$ with $\tilde{C}_m,\tilde{\bm{\beta}}_m$ in the definitions of the standard control parameters. These parameters are therefore invariant under damping.

The physical interpretation of $\bm{\tilde{s}}_{0},\bm{\tilde{\delta}}_{0}$ is that they represent the values of $s_0,\delta_0$ when all other control modes are projected onto the vacuum state ($n=0$). Heuristically, $\bm{\tilde{s}}_{0}$ tends to take larger values than $\bm{s}_{0}$, due to the additional Gaussian filtering induced by these vacuum projections. These invariant quantities are particularly useful for quantitative comparisons of the output states of multi-mode non-Gaussian state generators, such as in the GKP state example discussed in Appendix~\ref{sec:opt_detail}.

\section{Details of the optimization examples}\label{sec:opt_detail}
We now provide details of the application of the algorithm in Sec.~\ref{sec:optimization} to the examples presented in Sec.~\ref{sec:optimization_example}. 
The procedure consists of two steps: photon-number reduction and probability enhancement. 

In Figs.~\ref{fig:cps_opt}, \ref{fig:cps_opt}, \ref{fig:gkp_opt}, and \ref{fig:random}, we show the resulting state generators after both steps (``Final''), together with the intermediate result after the first step (``Reduced''). Each figure includes schematics of the state generator and its parameters, as well as the control-mode representation $(C,\bm{\beta})$. For clarity, we adopt an ordering convention for $(C,\bm{\beta})$ given by $x_1,\dots,x_k,p_1,\dots,p_k$, instead of the $x_1,p_1,\dots,x_k,p_k$ convention used in the main text.

All squeezing values $r$ are reported in decibel (dB) scale, according to the conversion formula
\begin{equation}
    r_{\mathrm{dB}} = \frac{20}{\ln{10}}\, r \approx 8.686\, r.
\end{equation}

We remark that, in all examples some probability enhancement is already observed in the first step, reflecting the general tendency that larger photon numbers correspond to lower success probabilities.

\subsection{Cat state}
We consider the GPS scheme~\cite{GPS} for generating cat states, illustrated in Fig.~\ref{fig:cat_opt}(a).  
The initial squeezing is set to $r_1=-r_2=\SI{5}{dB}$, and the beamsplitter reflectance to $0.1$, both of which are reasonable experimental assumptions.  
We begin with $n=15,16$ photon detections for odd and even cat states, respectively, and set the target photon numbers to $n'=5,6$ so that the resulting states maintain high quality.  
To quantify the performance, we use the $x^2$-squeezing~\cite{cat_nls} defined in Eq.~\eqref{eq:cat_sqz_def} in the main text.  
The detailed results are presented in Fig.~\ref{fig:cat_opt}.

\subsection{CPS state}
We consider the scheme of Ref.~\cite{gkp} for generating CPS states using a displaced two-mode squeezed state, illustrated in Fig.~\ref{fig:cps_opt}(a).  
The initial squeezing is set to $r_1=-r_2=\SI{5}{dB}$, with a displacement $\alpha_2=1$ applied to the control mode.  
A displacement is also applied to the signal mode, although not essential, so as to center the output state in phase space.  
We begin with $n=20$ and set the target photon number to $n'=7$.  
To quantify the performance, we employ the cubic nonlinear squeezing~\cite{cps_nls} defined in Eq.~\eqref{eq:cps_sqz_def} in the main text.  
The detailed results are presented in Fig.~\ref{fig:cps_opt}.

\subsection{GKP state}\label{sec:opt_gkp_detail}
We consider the cat-breeding protocol~\cite{gkp_breeding} with $x=0$ conditioning, shown in Fig.~\ref{fig:gkp_opt}(a). Since homodyne detection is Gaussian, exchanging the order of photon-number measurement and conditioning maps this protocol to the Gaussian breeding circuit~\cite{gaussian_breeding} in Fig.~\ref{fig:gkp_opt}(b). A detailed derivation of this equivalence can be found in the Supplemental Material of Ref.~\cite{xanadu_architecture}.

The circuit obtained from the breeding has the form
\begin{align}
C &= \mqty( C_{xx} & 0 \\ 0 & d I),\\
C_{xx} &=
\mqty(
 a & a-\tfrac{1}{d} & \cdots & a-\tfrac{1}{d} \\
 a-\tfrac{1}{d} & a & \cdots & a-\tfrac{1}{d} \\
 \vdots & \vdots & \ddots & \vdots \\
 a-\tfrac{1}{d} & a-\tfrac{1}{d} & \cdots & a),
\end{align}
where $d>1$ and $ad>1$. The invariant non-Gaussian phase sensitivity $\tilde{s}_{0m}$ (see Appendix~\ref{sec:invariant_s0}) is given by
\begin{align}
    \tilde{s}_{0m} = \tilde{s}_{0} = \frac{d-\tilde{c}}{\tilde{c}d-1},
\end{align}
with
\begin{align}
    \tilde{c} = \frac{1}{d}\cdot\frac{a+(k-1)(a-1/d)}{a+(k-2)(a-1/d)}-1.
\end{align}

Breeding $k$ cat states with non-Gaussian phase sensitivity $s_0$ yields
\begin{align}
    \tilde{s}_0 = ks_0 + k-1. \label{eq:s_para_evolution}
\end{align}
Using this $\tilde{s}_{0}$, the wavefunction of the generated state can be written explicitly as
\begin{align}
    \psi(x) \propto \qty[\phi_n(x)]^n \exp\!\left[-\frac{\tilde{s}_{0}-k+1}{4}x^2\right].
\end{align}

For the initial state generator in our optimization, we use the circuit corresponding to the critical condition $s_0=1$ for each cat state, which is the maximum $s_0$ achieved by the commonly used photon-subtraction scheme (see Sec.~\ref{sec:classification}).  
The corresponding parameters for the cat-state generator in Fig.~\ref{fig:cat_opt}(a) are $r_1=-r_2=\SI{8.00}{dB}$ and $R=0.137$, giving the invariant non-Gaussian phase sensitivity $\tilde{s}_0=5$.

We evaluate the performance using the GKP non-Gaussian squeezing~\cite{gkp_squeezing}, defined as
\begin{align}
    \begin{split}        
    &\xi_{\mathrm{GKP}}=\\ &\min_{\lambda,\phi_1,\phi_2}\,
    \ev{2\cos^2\!\qty(\lambda\tfrac{\sqrt{\pi}}{2}\hat{x}+\phi_1) 
    + 2\cos^2\!\qty(\tfrac{1}{\lambda}\tfrac{\sqrt{\pi}}{2}\hat{p}+\phi_2) }.
    \end{split}
\end{align}
As in the cat-state case in Sec.~\ref{sec:overview}, there exists an optimal $\tilde{s}_0$ that yields the best GKP squeezing, as shown in Fig.~\ref{fig:gkp_evaluate}.  
This optimal value is smaller than that obtained from cat breeding, whether under the critical condition ($s_0=1$, Sec.~\ref{sec:classification}) or under the condition for optimal $x^2$-squeezing (Fig.~\ref{fig:gps_evaluate}(c)).  

In our case, after optimization, $\tilde{s}_0$ decreases from $\tilde{s}_0=5$ to $\tilde{s}_0=3.05$, approaching the optimal value.  
This reduction improves the photon number without degrading state quality, highlighting the non-optimality of the unmodified breeding scheme based on conventional photon-subtracted cat states with $s_0<1$.

\subsection{Random state}
Finally, we apply the algorithm to random Gaussian states.  
A random Gaussian unitary $U$ is sampled using the \texttt{random\_symplectic} function in MrMustard~\cite{mrmustard}, constructed as $U=WS(r)V$ from two Haar-random symplectic orthogonal matrices $W,V$ and a random squeezing vector $r\in[0,r_\mathrm{max}]$.  
With a displacement $d\in[0,d_\mathrm{max}]^{\otimes 2(k+1)}$, the state is
\begin{align}
    \ket{G}=D(d)U\ket{0}.
\end{align}
We use $r_\mathrm{max}=1$ and $d_\mathrm{max}=0.5$.  
Fig.~\ref{fig:random} shows one example, showing improvements in both probability and the measured photon numbers.

\begin{figure*}[htb]
  \centering
  \subfloat[Setup for cat state generation.]{
    \begin{minipage}{0.45\textwidth}
    \centering
    \input{figures/optimization/cat_system_detail_annotated_labels_overlay.tex}
    \end{minipage}
  }\\
  \subfloat[Parameters of odd and even cat state generators. 
``Original'': before optimization; 
``Reduced'': after the first step (photon-number reduction); 
``Final'': after the second step (probability enhancement).]{
    \begin{minipage}{0.95\linewidth}
    \centering
    \begin{tabular}{@{}lccccccc ccccccc@{}}
        \toprule
        & \multicolumn{7}{c}{Odd cat state} & \multicolumn{7}{c}{Even cat state} \\
        \cmidrule(lr){2-8}\cmidrule(lr){9-15}
        & $n$ & $r_1$ & $r_2$ & $R$ & $p$ & $C$ & $\bm{\beta}$ 
        & $n$ & $r_1$ & $r_2$ & $R$ & $p$ & $C$ & $\bm{\beta}$ \\
        \midrule
        Original 
        & $15$ & $5.0$ & $-5.0$ & $0.10$ & $\eformat{1.77e-6}$ &
        $\begin{bmatrix}0.60&0.00\\0.00&2.88\end{bmatrix}$ &
        $\begin{bmatrix}0.00\\0.00\end{bmatrix}$ &
        $16$ & $5.0$ & $-5.0$ & $0.10$ & $\eformat{8.29e-7}$ &
        $\begin{bmatrix}0.60&0.00\\0.00&2.88\end{bmatrix}$ &
        $\begin{bmatrix}0.00\\0.00\end{bmatrix}$ \\[1.2em]

        Reduced
        & $5$ & $3.46$ & $-4.22$ & $0.24$ & $\eformat{3.55e-4}$ &
        $\begin{bmatrix}0.97&0.00\\0.00&1.78\end{bmatrix}$ &
        $\begin{bmatrix}0.00\\0.00\end{bmatrix}$ &
        $6$ & $3.52$ & $-4.34$ & $0.22$ & $\eformat{1.21e-4}$ &
        $\begin{bmatrix}0.94&0.00\\0.00&1.84\end{bmatrix}$ &
        $\begin{bmatrix}0.00\\0.00\end{bmatrix}$ \\[1.2em]

        Final
        & $5$ & $14.33$ & $-5.96$ & $0.22$ & $\eformat{4.58e-2}$ &
        $\begin{bmatrix}0.91&0.00\\0.00&21.09\end{bmatrix}$ &
        $\begin{bmatrix}0.00\\0.00\end{bmatrix}$ &
        $6$ & $15.00$ & $-5.90$ & $0.21$ & $\eformat{3.84e-2}$ &
        $\begin{bmatrix}0.83&0.00\\0.00&25.17\end{bmatrix}$ &
        $\begin{bmatrix}0.00\\0.00\end{bmatrix}$ \\
        \bottomrule
    \end{tabular}
    \end{minipage}
  }\\
  \subfloat[Odd cat state before optimization.]{
    \begin{minipage}{0.45\textwidth}
    \centering
    \includegraphics[scale=1]{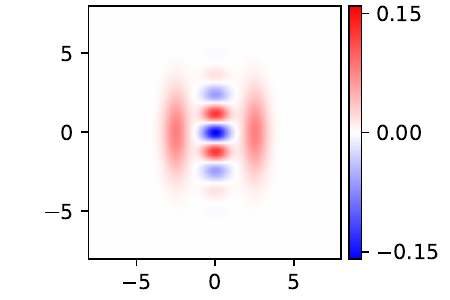}
    \end{minipage}
  }
  \subfloat[Odd cat state after optimization.]{
    \begin{minipage}{0.45\textwidth}
    \centering
    \includegraphics[scale=1]{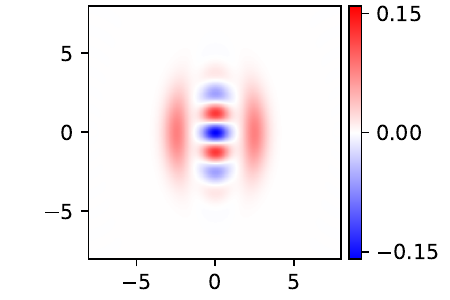}
    \end{minipage}
  }\\

  \subfloat[Even cat state before optimization.]{
    \begin{minipage}{0.45\textwidth}
    \centering
    \includegraphics[scale=1]{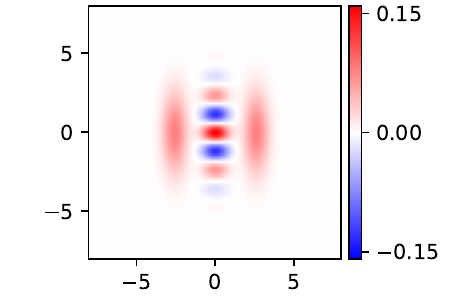}
    \end{minipage}
  }
  \subfloat[Even cat state after optimization.]{
    \begin{minipage}{0.45\textwidth}
    \centering
    \includegraphics[scale=1]{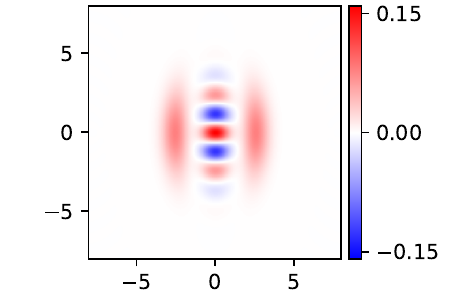}
    \end{minipage}
  }\\
  \hfill
 
  \caption{Optimization of odd and even cat states.}
  \label{fig:cat_opt}
\end{figure*}

\begin{figure*}[htb]
  \centering
  \subfloat[Setup for CPS generation.]{
    \begin{minipage}{0.3\textwidth}
    \centering
    \input{figures/optimization/cps_system_detail_annotated_labels_overlay.tex}
    \end{minipage}
  }
  \subfloat[Parameters of CPS generators.
``Original'': before optimization; 
``Reduced'': after the first step (photon-number reduction); 
``Final'': after the second step (probability enhancement).]{
    \begin{minipage}{0.6\textwidth}
    \centering
    \begin{tabular}{@{}cccccccccc@{}}
    \toprule
     & $n$ & $r_1$ & $r_2$ & $R$ & $\alpha_1$ & $\alpha_2$ & $p$ & $C$ & $\bm{\beta}$ \\
    \midrule
    Original 
    & $20$ & $5.0$ & $-5.0$ & $0.50$ & $3.40$ & $1.00$ & $\eformat{2.19e-8}$ &
    $\begin{bmatrix}
      1.74 & 0.00 \\ 0.00 & 1.74
    \end{bmatrix}$ &
    $\begin{bmatrix} 2.00 \\ 0.00 \end{bmatrix}$ \\[1.2em]
    
    Reduced
    & $7$ & $5.80$ & $-4.35$ & $0.60$ & $1.54$ & $0.85$ & $\eformat{2.49e-3}$ &
    $\begin{bmatrix}
      1.74 & 0.00 \\ 0.00 & 1.74
    \end{bmatrix}$ &
    $\begin{bmatrix} 1.69 \\ 0.00 \end{bmatrix}$ \\[1.2em]
    
    Final 
    & $7$ & $9.82$ & $-8.36$ & $0.59$ & $0.37$ & $2.34$ & $\eformat{7.43e-2}$ &
    $\begin{bmatrix}
      4.06 & 0.00 \\ 0.00 & 4.06
    \end{bmatrix}$ &
    $\begin{bmatrix} 4.68 \\ 0.00 \end{bmatrix}$ \\
    \bottomrule
    \end{tabular}
    \end{minipage}
  }\\
  \subfloat[CPS before optimization.]{
    \begin{minipage}{0.45\textwidth}
    \centering
    \includegraphics[scale=1]{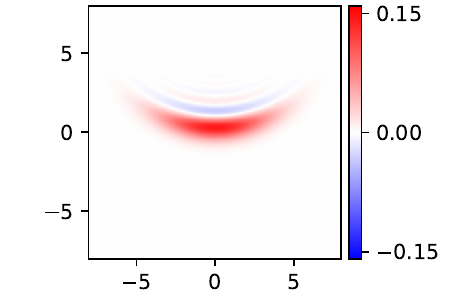}
    \end{minipage}
  }
  \subfloat[CPS after optimization.]{
    \begin{minipage}{0.45\textwidth}
    \centering
    \includegraphics[scale=1]{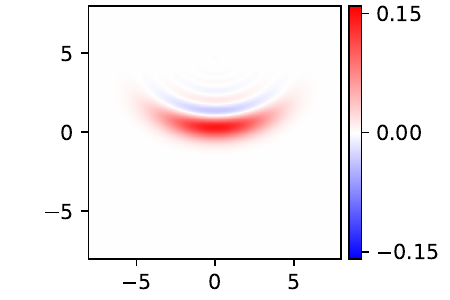}
    \end{minipage}
  }\\

  \caption{Optimization of CPS.}
  \label{fig:cps_opt}
\end{figure*}
\begin{figure*}[htb]
  \centering
  \subfloat[Breeding for generating GKP state.]{
    \begin{minipage}{0.5\textwidth}
    \centering
    \input{figures/optimization/gkp_breeding_system_detail_annotated_labels_overlay.tex}
    \end{minipage}
  }
  \subfloat[Setup for GKP state generation. Obtained from (a) by exchanging the order of photon-number measurements and conditionings \cite{xanadu_architecture}.]{
    \begin{minipage}{0.45\textwidth}
    \centering
    \input{figures/optimization/gkp_system_detail_annotated_labels_overlay.tex}
    \end{minipage}
  }\\
  \subfloat[Parameters of GKP state generators.
``Original'': before optimization; 
``Reduced'': after the first step (photon-number reduction); 
``Final'': after the second step (probability enhancement). The order of quadratures for $C$ and $\bm\beta$ is $x_1,\dots,x_k,p_1,\dots,p_k$.]{
\begin{minipage}{0.98\textwidth}
\centering
\begin{tabular}{@{}lccccccccc c c@{}}
\toprule
 & $n$ & $r_1$ & $r_2$ & $r_3$ & $r_4$ & $R_1$ & $R_2$ & $R_3$ & $p$ 
 & $C$ & $\bm{\beta}$ \\
\midrule
Original 
& $20$ & \phantom{-}6.53 & $-8.29$ & $-7.38$ & $-7.38$
& 0.80 & 0.33 & 0.50 & $\eformat{1.75e-12}$ &
$\begin{bmatrix}
0.45 & 0.27 & 0.27 & 0.00 & 0.00 & 0.00 \\
0.27 & 0.45 & 0.27 & 0.00 & 0.00 & 0.00 \\
0.27 & 0.27 & 0.45 & 0.00 & 0.00 & 0.00 \\
0.00 & 0.00 & 0.00 & 5.47 & 0.00 & 0.00 \\
0.00 & 0.00 & 0.00 & 0.00 & 5.47 & 0.00 \\
0.00 & 0.00 & 0.00 & 0.00 & 0.00 & 5.47
\end{bmatrix}$ &
$\begin{bmatrix}
0.00 \\ 0.00 \\ 0.00 \\ 0.00 \\ 0.00 \\ 0.00
\end{bmatrix}$ \\[1.2em]

Reduced 
& $7$ & \phantom{-}5.29 & $-6.31$ & $-4.17$ & $-4.17$
& 0.58 & 0.33 & 0.50 & $\eformat{5.54e-9}$ &
$\begin{bmatrix}
0.77 & 0.39 & 0.39 & 0.00 & 0.00 & 0.00 \\
0.39 & 0.77 & 0.39 & 0.00 & 0.00 & 0.00 \\
0.39 & 0.39 & 0.77 & 0.00 & 0.00 & 0.00 \\
0.00 & 0.00 & 0.00 & 2.62 & 0.00 & 0.00 \\
0.00 & 0.00 & 0.00 & 0.00 & 2.62 & 0.00 \\
0.00 & 0.00 & 0.00 & 0.00 & 0.00 & 2.62
\end{bmatrix}$ &
$\begin{bmatrix}
0.00 \\ 0.00 \\ 0.00 \\ 0.00 \\ 0.00 \\ 0.00
\end{bmatrix}$ \\[1.2em]

Final 
& $7$ & \phantom{-}8.09 & $-16.20$ & $-14.00$ & $-14.00$
& 0.60 & 0.33 & 0.50 & $\eformat{1.44e-4}$ &
$\begin{bmatrix}
0.89 & 0.85 & 0.85 & 0.00 & 0.00 & 0.00 \\
0.85 & 0.89 & 0.85 & 0.00 & 0.00 & 0.00 \\
0.85 & 0.85 & 0.89 & 0.00 & 0.00 & 0.00 \\
0.00 & 0.00 & 0.00 & 25.12 & 0.00 & 0.00 \\
0.00 & 0.00 & 0.00 & 0.00 & 25.12 & 0.00 \\
0.00 & 0.00 & 0.00 & 0.00 & 0.00 & 25.12
\end{bmatrix}$ &
$\begin{bmatrix}
0.00 \\ 0.00 \\ 0.00 \\ 0.00 \\ 0.00 \\ 0.00
\end{bmatrix}$ \\
\bottomrule
\end{tabular}
\end{minipage}
}\\

  \subfloat[GKP state before optimization.]{
    \begin{minipage}{0.45\textwidth}
    \centering
    \includegraphics[scale=1]{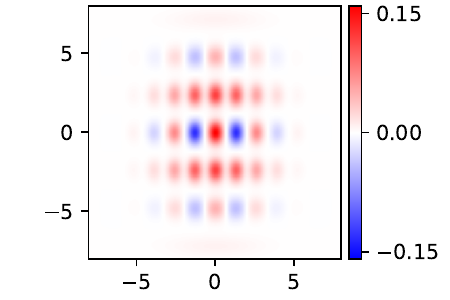}
    \end{minipage}
  }
  \subfloat[GKP state after optimization.]{
    \begin{minipage}{0.45\textwidth}
    \centering
    \includegraphics[scale=1]{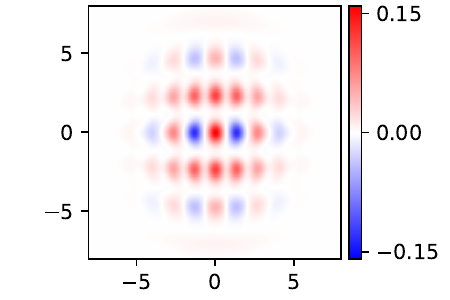}
    \end{minipage}
  }\\
  \caption{Optimization of GKP state.}
  \label{fig:gkp_opt}
\end{figure*}
\begin{figure*}[htb]
    \centering
    \input{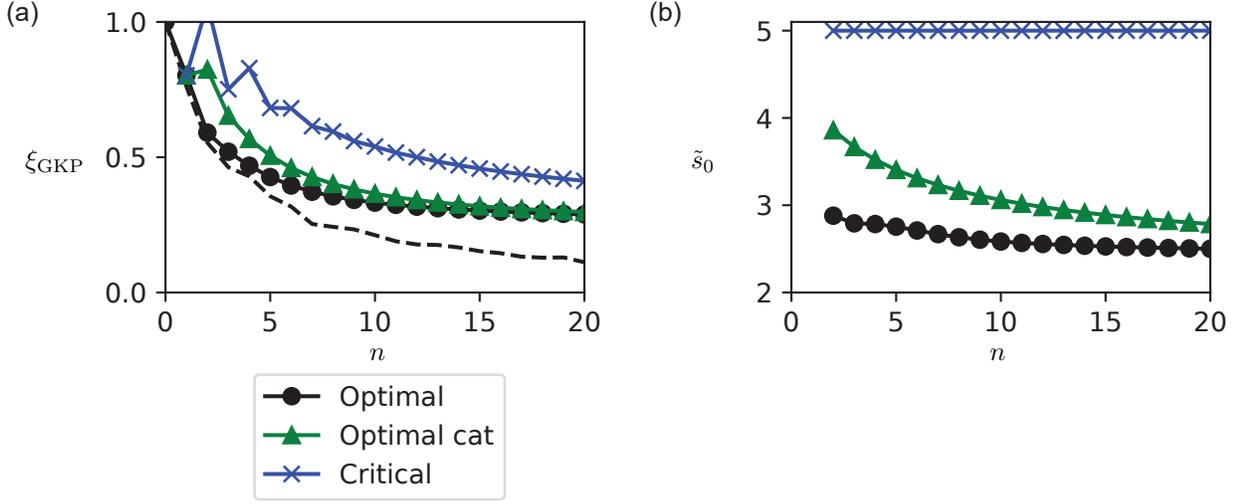}
    \caption{(a) GKP squeezing $\xi_{\mathrm{GKP}}$ as a function of $\tilde{s}_0$ for different $n$. 
“Optimal”: optimized over $\tilde{s}_0$. 
“Optimal cat”: breeding using cat states with $s_0$ taken from Fig.~\ref{fig:gps_evaluate}(c). 
“Critical”: breeding using cat states under the critical condition $s_0=1$. 
Dashed line: stellar-rank limit \cite{gkp_squeezing}. 
(b) Corresponding values of $\tilde{s}_0$ for panel (a). Values for $n=0,1$ are not shown, as the state is independent of $\tilde{s}_0$ up to Gaussian unitaries.
}

    \label{fig:gkp_evaluate}
\end{figure*}
\begin{figure*}[htb]
\centering
\subfloat[Parameters of Random state generators.
``Original'': before optimization; 
``Reduced'': after the first step (photon-number reduction); 
``Final'': after the second step (probability enhancement).
The order of quadratures for $C$ and $\bm\beta$ is $x_1,\dots,x_k,p_1,\dots,p_k$.]{
\begin{minipage}{0.95\textwidth}
\centering
\begin{tabular}{@{}c c c c c@{}}
\toprule
 & $n$ & $p$ & $C$ & $\bm{\beta}$ \\
\midrule

Original 
& $(14,14,14,14)$ 
& $\eformat{1.79e-30}$ 
& $\begin{bmatrix}
\phantom{-}1.51 & -0.14 & -0.15 & \phantom{-}0.20 & \phantom{-}0.26 & -0.22 & \phantom{-}0.12 & -0.28 \\
-0.14 & \phantom{-}1.23 & -0.06 & -0.28 & -0.11 & -0.02 & -0.22 & -0.21 \\
-0.15 & -0.06 & \phantom{-}0.74 & -0.05 & \phantom{-}0.00 & -0.08 & -0.48 & \phantom{-}0.01 \\
\phantom{-}0.20 & -0.28 & -0.05 & \phantom{-}0.95 & -0.12 & -0.32 & \phantom{-}0.05 & -0.21 \\
\phantom{-}0.26 & -0.11 & \phantom{-}0.00 & -0.12 & \phantom{-}0.79 & \phantom{-}0.11 & \phantom{-}0.15 & -0.10 \\
-0.22 & -0.02 & -0.08 & -0.32 & \phantom{-}0.11 & \phantom{-}1.13 & \phantom{-}0.15 & \phantom{-}0.39 \\
\phantom{-}0.12 & -0.22 & -0.48 & \phantom{-}0.05 & \phantom{-}0.15 & \phantom{-}0.15 & \phantom{-}1.78 & \phantom{-}0.07 \\
-0.28 & -0.21 & \phantom{-}0.01 & -0.21 & -0.10 & \phantom{-}0.39 & \phantom{-}0.07 & \phantom{-}1.41
\end{bmatrix}$ 
& $\begin{bmatrix}
-0.27 \\ -0.06 \\ \phantom{-}0.38 \\ -0.11 \\ \phantom{-}0.22 \\ \phantom{-}0.02 \\ \phantom{-}0.30 \\ -0.17
\end{bmatrix}$ \\[1.2em]

Reduced 
& $(9,14,2,12)$ 
& $\eformat{1.25e-22}$ 
& $\begin{bmatrix}
\phantom{-}0.78 & \phantom{-}0.06 & \phantom{-}0.10 & -0.16 & -0.06 & -0.21 & \phantom{-}0.08 & \phantom{-}0.04 \\
\phantom{-}0.06 & \phantom{-}1.19 & \phantom{-}0.14 & \phantom{-}0.38 & -0.25 & -0.08 & \phantom{-}0.16 & \phantom{-}0.05 \\
\phantom{-}0.10 & \phantom{-}0.14 & \phantom{-}0.93 & \phantom{-}0.00 & \phantom{-}0.08 & \phantom{-}0.03 & \phantom{-}0.00 & \phantom{-}0.05 \\
-0.16 & \phantom{-}0.38 & \phantom{-}0.00 & \phantom{-}0.90 & -0.07 & \phantom{-}0.11 & \phantom{-}0.08 & \phantom{-}0.00 \\
-0.06 & -0.25 & \phantom{-}0.08 & -0.07 & \phantom{-}1.45 & -0.15 & -0.17 & \phantom{-}0.30 \\
-0.21 & -0.08 & \phantom{-}0.03 & \phantom{-}0.11 & -0.15 & \phantom{-}1.19 & -0.13 & -0.50 \\
\phantom{-}0.08 & \phantom{-}0.16 & \phantom{-}0.00 & \phantom{-}0.08 & -0.17 & -0.13 & \phantom{-}1.17 & \phantom{-}0.01 \\
\phantom{-}0.04 & \phantom{-}0.05 & \phantom{-}0.05 & \phantom{-}0.00 & \phantom{-}0.30 & -0.50 & \phantom{-}0.01 & \phantom{-}1.46
\end{bmatrix}$
& $\begin{bmatrix}
-0.13 \\ -0.31 \\ \phantom{-}0.23 \\ \phantom{-}0.20 \\ -0.22 \\ \phantom{-}0.08 \\ -0.13 \\ \phantom{-}0.01
\end{bmatrix}$ \\[1.2em]

Final 
& $(9,14,2,12)$ 
& $\eformat{4.50e-6}$ 
& $\begin{bmatrix}
\phantom{-}2.27 & \phantom{-}1.06 & \phantom{-}0.79 & -0.88 & \phantom{-}2.87 & -6.24 & \phantom{-}1.55 & \phantom{-}6.02 \\
\phantom{-}1.06 & \phantom{-}14.71 & \phantom{-}1.96 & \phantom{-}8.28 & -15.94 & \phantom{-}2.76 & \phantom{-}8.06 & -5.77 \\
\phantom{-}0.79 & \phantom{-}1.96 & \phantom{-}1.21 & \phantom{-}0.62 & -0.25 & -1.38 & \phantom{-}1.06 & \phantom{-}1.37 \\
-0.88 & \phantom{-}8.28 & \phantom{-}0.62 & \phantom{-}5.87 & -11.40 & \phantom{-}5.83 & \phantom{-}3.92 & -7.46 \\
\phantom{-}2.87 & -15.94 & -0.25 & -11.40 & \phantom{-}30.91 & -19.03 & -7.54 & \phantom{-}23.99 \\
-6.24 & \phantom{-}2.76 & -1.38 & \phantom{-}5.83 & -19.03 & \phantom{-}23.92 & -1.46 & -25.77 \\
\phantom{-}1.55 & \phantom{-}8.06 & \phantom{-}1.06 & \phantom{-}3.92 & -7.54 & -1.46 & \phantom{-}6.39 & -0.50 \\
\phantom{-}6.02 & -5.77 & \phantom{-}1.37 & -7.46 & \phantom{-}23.99 & -25.77 & -0.50 & \phantom{-}29.54
\end{bmatrix}$
& $\begin{bmatrix}
-1.40 \\ -1.79 \\ -0.20 \\ \phantom{-}0.16 \\ -1.70 \\ \phantom{-}4.02 \\ -1.54 \\ -3.82
\end{bmatrix}$ \\
\bottomrule
\end{tabular}
\end{minipage}
}\\
\subfloat[Random state before optimization.]{
    \begin{minipage}{0.45\textwidth}
    \centering
    \includegraphics[scale=1]{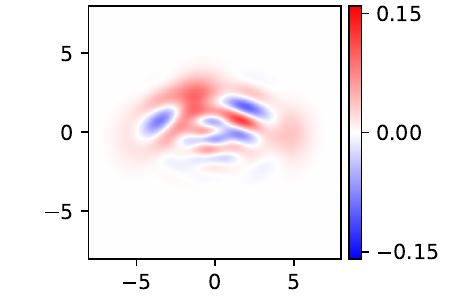}
    \end{minipage}
  }
  \subfloat[Random state after optimization.]{
    \begin{minipage}{0.45\textwidth}
    \centering
    \includegraphics[scale=1]{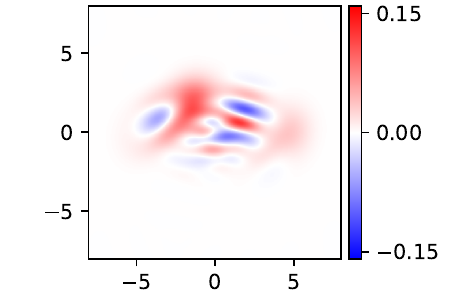}
    \end{minipage}
  }\\
\caption{Optimization of Random state.}
\label{fig:random}
\end{figure*}
\end{document}